\newtheorem{theorem}{Theorem}
\newtheorem{lemma}[theorem]{Lemma}
\newtheorem{proposition}[theorem]{Proposition}
\newtheorem{corollary}[theorem]{Corollary}
\DeclareMathOperator{\maxflow}{\mathrm{maxflow}}
\DeclareMathOperator{\lb}{\mathrm{lb}}
\DeclareMathOperator{\ub}{\mathrm{ub}}
\newcommand{\In}{\mathsf{In}}
\newcommand{\Mid}{\mathsf{Mid}}
\newcommand{\Out}{\mathsf{Out}}
\newcommand{\DC}{\textsf{DC}}
\newcommand{\Source}{\textsf{S}}
\newcommand{\talpha}{\tilde{\alpha}}
\newcommand{\tbeta}{\tilde{\beta}}
\newcommand{\tgamma}{\tilde{\gamma}}
\title{Cooperative Regenerating Codes}
\author{Kenneth W. Shum, \IEEEmembership{Member, IEEE,} and Yuchong Hu
\thanks{The material in this paper was presented in part at the IEEE Int. Conf. on Communications, Kyoto, June, 2011, in part at the Int. Symp. on Network Coding, Beijing, July, 2011, and in part at the IEEE Int. Symp. on Information Theory, St. Petersburg, August, 2011.
}
\thanks{
This work was done while Y. Hu was with Institute of Network Coding, the Chinese University of Hong Kong.
}
\thanks{K.~W.~Shum is with Institute of Network Coding, the Chinese University of Hong Kong.}
\thanks{Emails: wkshum@inc.cuhk.edu.hk., yuchunghu@gmail.com}
\thanks{The work described in this paper was substantially supported by a grant from University Grants Committee of the Hong Kong Special Administrative Region, China (Project No. AoE/E-02/08).}
}
\begin{document}

\maketitle

\begin{abstract}
One of the design objectives in distributed storage system is the
minimization of the data traffic during the repair of failed storage nodes. By repairing multiple failures simultaneously and cooperatively rather than successively and independently, further reduction of repair traffic is made possible. A closed-form expression of the optimal tradeoff between the repair traffic and the amount of storage in each node for cooperative repair is given. We show that the points on the tradeoff curve can be achieved by linear cooperative regenerating codes, with an explicit bound on the required finite field size. The proof relies on a max-flow-min-cut-type theorem from combinatorial optimization for submodular flows. Two families of explicit constructions are given.
\end{abstract}

\begin{keywords} Distributed storage system, network coding, regenerating codes, decentralized erasure codes,  submodular function, submodular flow, polymatroid.
\end{keywords}

\section{Introduction}

In order to provide high data reliability, distributed storage systems disperse data to a number of storage nodes. Redundancy is introduced in order to protect against node failures.
There are two common methods in introducing redundancy, namely {\em replication coding} and {\em erasure coding}.
In the former method, a data file is replicated several times, and the resulting pieces of data are stored in different storage nodes. A coding scheme in which a data file is replicated three times is employed by the Google file system~\cite{GFS}. Although replication coding is easy to implement and manage, it has lower storage efficiency than erasure codes, such as Reed-Solomon (RS) codes. In order to achieves higher storage efficiency, RS code is recently adopted in several cloud storage systems, including Oceanstore\cite{Oceanstore} and Windows Azure~\cite{Azure}, etc.

In a large-scale storage system, failure of storage nodes is a frequent event. The deployment of erasure codes incurs a significant overhead of network traffic during the repair process, because we need to download the whole data file from other surviving nodes in order to recover the lost data. The required traffic for repairing a failed node, called {\em repair bandwidth per node}, is of particular importance in bandwidth-limited storage networks. {\em Regenerating codes} was introduced by Dimakis {\em et al.} for the purpose of reducing the repair bandwidth~\cite{DGWR2010}.

There are two modes of repair in regenerating codes. In the first one, called {\em exact repair}, the content of the new node is exactly the same as the content of the failed nodes. Most of the explicit constructions of regenerating codes are for exact repair~\cite{SRKR12, RSK10h, SK11, CHL11, PD11, Thangaraj}. In some works in the literature, such as fractional repetition codes~\cite{ElRouayheb10}, self-repairing codes~\cite{OggierDatta11}, simple regenerating code~\cite{simpleRC} and locally repairable codes~\cite{LRC12,Prakash12,Silberstein12}, a failed node is repaired by downloading data from some specific subsets of surviving nodes. In this paper, however, we focus on the model as in~\cite{DGWR2010},  and assume that the new node can contact and download data from any subset of $d$ surviving nodes during the repair process, where $d$ is a constant called the {\em repair degree}.

The second mode of repair is called {\em functional repair}. With functional repair, the content of the new node are not necessarily identical to the failed nodes, but the property that a data collector connecting to any $k$ nodes is able to decode the data file is preserved. By showing that the minimum repair bandwidth can be calculated by solving a single-source multi-casting problem in network coding theory~\cite{LYC03}, the optimal tradeoff for functional repair between repair bandwidth and the storage in each node is derived in~\cite{DGWR2010}.

Most of the studies on regenerating codes in the literature focus on single-failure recovery. In large-scale distributed storage systems, however, multiple-failure recovery is the norm rather than the exception. Suppose we repair a large distributed storage system periodically, say once every two days. If the number of storage nodes is very large, very likely, we have two or more node failures in a period of time. Multiple failures occur naturally in this scenario. On the other hand,
in some practical systems such as TotalRecall~\cite{Totalrecall}, a recovery process is triggered only after the number of failed nodes has reached a predefined threshold. In this case, even though node failures are detected one by one, the lazy repair policy treats them as a multiple failures. Lastly, in peer-to-peer storage systems with high churn rate, nodes may join and leave the system in batch. This can also be regarded as multiple node failures.

In view of the motivations in the foregoing paragraph,
we address the problem of repairing multiple node failures simultaneously and jointly, by exploiting the opportunity of data exchange among the new nodes. This mode of repair, called {\em cooperative repair}, was first introduced by Hu {\em et al.} in~\cite{HXWZL10}. The new nodes first download some data from the surviving nodes, and then exchange some data among themselves. It is shown in~\cite{HXWZL10} that cooperative repair
is able to further reduce the repair bandwidth, and a coding scheme is given in~\cite{WXHO10}. However, in~\cite{HXWZL10, WXHO10}, only the special case of minimum storage per node is considered. Cooperative repair in a more general setting was investigated by Le Scouarnec {\em et al.}, who derived  in~\cite{LeScouarnec, KSS} the optimal repair bandwidth in two extreme cases, namely, the minimum-repair and minimum-bandwidth cooperative repair.
%Explicit constructions of coding schemes for cooperative repair are given in \cite{LeScouarnec12,Jiekak, WangZhang, ChenShum}.

We will call a regenerating code with the functionality of cooperative repair a {\em cooperative regenerating code}. In this paper, we derive the fundamental tradeoff between the storage per node and the repair bandwidth per node, and give closed-form expressions for the points on the tradeoff curve. The derivation is based on the information flow graph for cooperative repair. As there are potentially unlimited number of data collectors, the information flow graph could be an infinite graph. The unboundedness of the information flow graph incurs technical difficulty in achieving the tradeoff curve by linear network codes. Existing algorithms for network code construction, such as the Jaggi-Sander {\em et al.}'s algorithm~\cite{JaggiSander}, assume that the graph is finite, and requires that the finite field size grows as the number of sink nodes increases. We therefore cannot apply the Jaggi-Sander {\em et al.}'s algorithm directly, unless we truncate the infinite information flow graph to a finite subgraph. If random network coding is employed, the required field size also grows as the number of destination nodes increases~\cite{HMKKESL,Bally}. The techniques in~\cite{HMKKESL,Bally} do not go through if there are infinitely many data collectors. It is therefore not straightforward to see whether we can support arbitrarily large number of repairs without re-starting the system.
Nevertheless, in the single-loss case, Wu in~\cite{Wu10} succeeded in showing, by exploiting the structure of the information flow graph, that we can work over a fixed finite field and sustain the distributed storage system {\em ad infinitum}. In this paper, we generalize the results in~\cite{Wu10} to cooperative repair.

\begin{figure}
\centering
\includegraphics[width=2.8in]{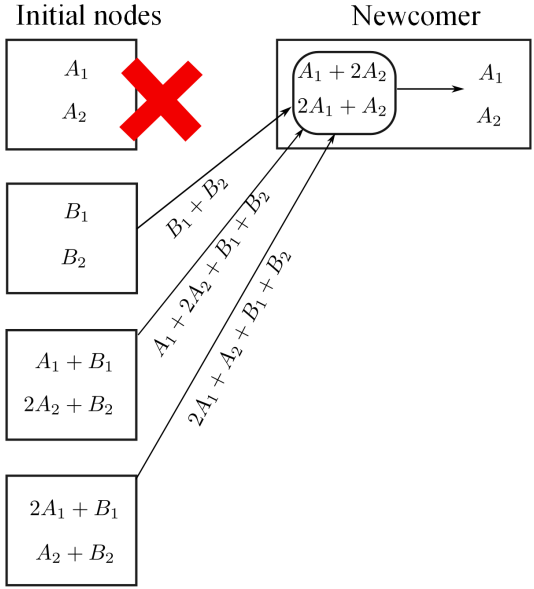}
\caption{Repairing a single node failure.} \label{fig:IA}
\end{figure}

\subsection{An Example of Cooperative Repair}
\label{section:example}
We examine the following example taken from~\cite{WD09} (Fig.~\ref{fig:IA}).
Four native data packets $A_1$, $A_2$, $B_1$ and $B_2$ are distributed to four storage nodes. Each storage node stores two packets. The first one stores $A_1$ and $A_2$, the second stores $B_1$ and $B_2$. The third node contains two parity-check packets $A_1+B_1$ and $2A_2+B_2$, and the last node contains $2A_1+B_1$ and $A_2+B_2$. Here, we interpret a packet as an element in a finite field, and carry out the additions and multiplications as finite field operations. We can take $\mathbb{F}_5$, the finite field of five elements, as the underlying finite field in this example. It can be readily checked that any data collector connecting to any two storage nodes can decode the four original packets.

Suppose that the first node fails. We want to replace it by a new node, called the {\em newcomer}. The naive method to repair the first node is to first reconstruct the four packets by connecting to any other two nodes, from which we can recover the two required packets $A_1$ and~$A_2$. Four packet transmissions are required in the naive method. The repair bandwidth can be reduced from four packets to three by making three connections. Each of the three remaining nodes adds the stored packets and sends the sum of packets to the newcomer, who can then subtract off $B_1+B_2$ and obtain $A_1+2A_2$ and $2A_1+A_2$. The packets $A_1$ and $A_2$ can now be solved readily. Hence, the lost information can be regenerated exactly by sending three packets to the newcomer.

If two storage nodes fail simultaneously, four packet transmissions per newcomer are required if we generate the content in the two new nodes separately (see Fig.~\ref{fig:SR}).
Each of the newcomers has to download four packets from the two surviving nodes. For example, in order to recover packet $B_1$, the first newcomer has to download packets $A_1$ and $A_1+B_1$. For packet $B_2$, packets $A_2$ and $2A_2+B_2$ have to be downloaded. The two new nodes essentially rebuild the whole data file $A_1$, $A_2$, $B_1$ and $B_2$, and re-encode the desired packets. The total repair bandwidth is eight. If exchange of data among the two newcomers is enabled, the total repair bandwidth can be reduced from eight packets to six packets (see Fig.~\ref{fig:CR}). The first newcomer gets $A_1$ and $A_1+B_1$, while the second newcomer gets $A_2$ and $2A_2+B_2$.  The first newcomer then figures out $B_1$ and $2A_1+B_1$ by taking the difference and the sum of the two inputs. The packet $B_1$ is stored and $2A_1+B_1$ is sent to the second newcomer. Likewise, the second newcomer computes $B_2$ and $A_2+B_2$, stores $A_2+B_2$ and sends $B_2$ to the first newcomer. The content of the failed nodes are regenerated after six packet transmissions. This example illustrates the potential benefit of cooperative repair.

\begin{figure}
\centering
\includegraphics[width=2.8in]{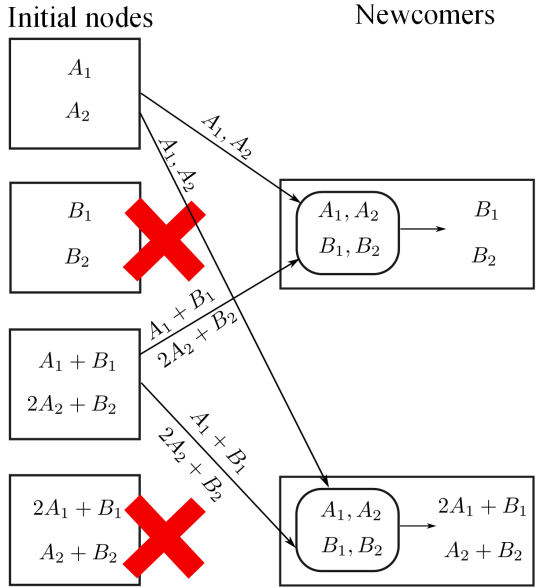}
\caption{Individual repair of multiple failures.} \label{fig:SR}
\end{figure}

\begin{figure}
\centering
\includegraphics[width=2.8in]{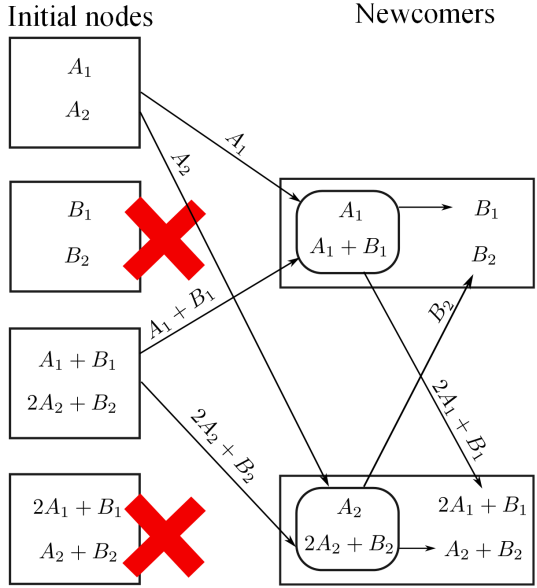}
\caption{Cooperative regeneration of multiple failures.} \label{fig:CR}
\end{figure}

\subsection{Formal Definition of Cooperative Repair}

Let $\mathcal{Q}$ be an alphabet set of size~$q$. We will call an element in $\mathcal{Q}$ a {\em symbol}. The data is regarded as a $B$-tuple $\mathbf{m} \in \mathcal{Q}^B$, with each component drawn from $\mathcal{Q}$. The distributed storage system consists of $n$ nodes, with each node storing $\alpha$ symbols. We index the storage nodes from 1 to $n$.

Time is divided into stages, and we index the stages by non-negative integers. Upon the failures of some storage nodes, we repair the failed nodes and advance to the next stage; the repair process is carried out in the transition from one stage to the next stage. For $t\geq 0$,  let the content of the $i$-th node at the $t$-th stage be denoted by an $\alpha$-tuple $\mathbf{x}(t,i) \in \mathcal{Q}^\alpha$. The distributed storage system is initialized at stage 0 by setting $\mathbf{x}(0,i) = e_i(\mathbf{m})$ for $i=1,2,\ldots, n$, where $e_i: \mathcal{Q}^{B} \rightarrow \mathcal{Q}^\alpha $ is an encoding function.

For a subset $\mathcal{S}$ of $\{1,2,\ldots, n\}$, we let $$\mathbf{x}(t,\mathcal{S}) := (\mathbf{x}(t,i))_{i\in\mathcal{S}}$$
be the content of the storage nodes indexed by $\mathcal{S}$ at the $t$-th stage. The design objective is two-fold.

 {\em (1) File retrieval.}  At each stage, a data collector can reconstruct the data file, $\mathbf{m}$, by connecting to any $k$ out of the $n$ storage nodes. We will call this property the $(n,k)$ {\em recovery property}. Mathematically, this means that for any $k$-subset $\mathcal{S}$ of $\{1,2,\ldots, n\}$ and $t \geq 0$, there is a decoding function
    $$f_{t,\mathcal{S}}: \mathcal{Q}^{k\alpha} \rightarrow \mathcal{Q}^B$$
    such that $f_{t,\mathcal{S}}( \mathbf{x}(t,\mathcal{S}) ) = \mathbf{m}$.

{\em (2) Multi-node recovery.}  When the number of node failures at stage $s-1$ reaches a threshold, say $r$, we replace the failed nodes by $r$ newcomers, and advance to stage~$s$.
  For $s = 1,2,3,\ldots$, let $\mathcal{R}_{s}$ be the set of $r$ storage nodes which fail at stage $s-1$ and are repaired in the transition from stage $s-1$ to stage~$s$. The set $\mathcal{R}_{s}$ contains $r$ elements in  $\{1,2,\ldots, n\}$. For each storage node $i \in \mathcal{R}_s$, let $\mathcal{H}_{s,i}$ be the set of storage nodes at stage $s-1$, called the {\em helpers}, from which data is downloaded to node $i$ during the repair process. We assume that the the repair degree is a constant $d$, regardless of the stage number $s$ and the index of the failed node~$i$, and different newcomers may connect to different sets of $d$ helpers. In other words, the set $\mathcal{H}_{s,i}$ can be any subset of $\{1,2,\ldots, n\} \setminus \mathcal{R}_s$ with cardinality~$d$.

\smallskip
    The repair procedure is divided into three phases.

\smallskip

      In the first phase,  each of the $r$ newcomers downloads $\beta_1$ symbols from the $d$ helpers. For $i\in\mathcal{R}_s$ and $j\in\mathcal{H}_{s,i}$, the
        symbols sent from node $j$ to newcomer~$i$ is denoted by  $g_{s,j,i}(\mathbf{x}(s-1,j))$, where
        $$ g_{s,j,i}: \mathcal{Q}^\alpha \rightarrow \mathcal{Q}^{\beta_1} $$
        is an encoding function.

\smallskip

In the second phase, the $r$ newcomers exchange data among themselves. Every newcomer sends $\beta_2$ symbols to each of the other $r-1$ newcomers. For $i_1, i_2 \in \mathcal{R}_s$ ($i_1\neq i_2$), let
         $$ g_{s,i_1,i_2}': \mathcal{Q}^{d\beta_1} \rightarrow \mathcal{Q}^{\beta_2}$$
be the encoding functions in the second phase, and
         $$\mathbf{y}(s,i_1,i_2) = g_{s,i_1,i_2}'(\{g_{s,j,i_1}(\mathbf{x}(s-1,j)):\, j\in\mathcal{H}_{s,i_1} \}) $$
be the symbols sent from newcomer $i_1$ to newcomer~$i_2$.

\smallskip

In the third phase, for each $i\in\mathcal{R}_s$, the content of the new node $i$, $\mathbf{x}(s,i)$, is obtained by applying a mapping
         $$h_{s,i}: \mathcal{Q}^{d\beta_1+(r-1)\beta_2}\rightarrow \mathcal{Q}^{\alpha}$$
to $g_{s,j,i}(\mathbf{x}(s-1,j))$ for $j\in \mathcal{H}_{s,i}$ and $\mathbf{y}(s,i',i)$ for $i' \in \mathcal{R}_{s}\setminus\{i\}$.

For those storage nodes that do not fail at stage $s-1$, the content of them do not change, i.e., $\mathbf{x}(s,i) = \mathbf{x}(s-1,i)$ for $i \not\in \mathcal{R}_s$.

A {\em cooperative regenerating code}, or a {\em cooperative regeneration scheme}, is a collection of encoding functions $e_i$, $f_{t,\mathcal{S}}$, $g_{s,j,i}$, $g_{s,i_1,i_2}'$ and $h_{s,i}$, such that the $(n,k)$ recovery property holds at all stages $t\geq 0$, for all possible failure patterns $\mathcal{R}_s$ and  all choices of helper sets $\mathcal{H}_{s,i}$,  $s\geq 1$.

A few more definitions and remarks are in order.
\begin{itemize}
\item The multi-node recovery process makes sense only when the total number of storage nodes, $n$, is larger than equal or to the sum of the number of nodes repaired jointly, $r$, and the repair degree, $d$. Henceforth we will assume that $n \geq d+r$. The results in this paper hold for all $n \geq d+r$.

\item If each storage node contains $B/k$ symbols, then the regenerating code is said to have the  {\em maximal-distance separable} (MDS) property.

\item If $\mathbf{x}(t,i) = \mathbf{x}(0,i)$ for all $t\geq 0$ and $i=1,2,\ldots, n$, then the regenerating code is said to be {\em exact}.

\item The {\em repair bandwidth} per newcomer is denoted by
$$\gamma := d\beta_1 + (r-1) \beta_2.$$

\item The encoding functions $g_{s,j,i}$, $g_{s,i_1,i_2}'$, and $h_{s,i}$ depend on the indices of the failed nodes, $\mathcal{R}_s$, the indices of the helper nodes, $\mathcal{H}_{s,i}$, and possibly  $\mathcal{R}_t$ and $\mathcal{H}_{t,i}$ for $t\leq s$, i.e., the cooperative regeneration scheme is causal. For the ease of notation, this dependency is suppressed in the notations.

\item The encoding and decoding are performed over a fixed alphabet set $\mathcal{Q}$ at all stages.

\item In practice, the file size is typically very large and can be regarded as infinitely divisible. It will be convenient to choose a unit of data such that the file size $B$ is normalized to~1, and hence the file size $B$ does not matter in the analysis.  After normalization, a pair $(\gamma/B, \alpha/B)$ is called an {\em operating point}. The first (resp. second) coordinate is the ratio of the repair bandwidth $\gamma$ (resp. storage per node $\alpha$) to the file size~$B$. We use the tilde notation $\tilde{\alpha}= \alpha/B$, $\tilde{\beta}_1 = \beta_1/B$, $\tilde{\beta}_2 = \beta_2/B$, and $\tilde{\gamma} = \gamma/B$ for variables after normalization.  All variables with tilde are between 0 and~1.

\item    An operating point $(\tilde{\gamma}, \tilde{\alpha})$ is said to be {\em admissible} if there is a cooperative regeneration scheme over an alphabet set $\mathcal{Q}$ with parameters $B$, $\alpha$, $\beta$ and $\gamma$, such that $(\tilde{\gamma}, \tilde{\alpha}) = (\gamma/B, \alpha/B)$.
 For given $d$, $k$ and $r$, let $\mathcal{C}_{\mathrm{AD}}(d,k,r)$ be the closure of all admissible operating points achieved by cooperative regenerating codes with parameters $d$, $k$ and $r$. We call $\mathcal{C}_{\mathrm{AD}}(d,k,r)$ the {\em admissible region}.  If the parameters $d$, $k$ and $r$ are clear from the context, we will simply write $\mathcal{C}_{\mathrm{AD}}$.  We let
\begin{equation}
\gamma^*(\tilde{\alpha}) := \min \{x:\, (x,\tilde{\alpha})\in\mathcal{C}_{\mathrm{AD}}(d,k,r)  \}.
\end{equation}
The value of $\gamma^*(\tilde{\alpha})$ is the optimal repair bandwidth when the amount of data stored in a node is $\tilde{\alpha}$.

\item  In the single-loss failure model ($r=1$), it is shown in~\cite{DGWR2010} that we only need to consider $d\geq k$ without  loss of generality. In multiple-loss failure model ($r > 1$), there is no a-priori reason why $d$ cannot be strictly less than $k$. However, the mathematics for the case $d\geq k$ is simpler and more tractable. In this paper, we will assume that $d$ is larger than or equal to~$k$. We will also assume that $k\geq 2$, because regenerating code with $k=1$ is trivial.
\end{itemize}

We summarize the notations  as follows:
\smallskip

\begin{tabular}{rl}
$B$ : & The size of the source file. \\
$n$ : & The total number of storage nodes. \\
$d$ : & Each newcomer connects to $d$ surviving nodes.\\
$k$ : & Each data collector connects to $k$ storage nodes. \\
$r$ : & The number of nodes repaired simultaneously. \\
$\alpha$ : &  Storage per node. \\
$\beta_1$ : &  Repair bandwidth per newcomer in the 1st phase.  \\
$\beta_2$ : &  Repair bandwidth per newcomer in the 2nd phase.  \\
$\gamma$ : &Total repair bandwidth per newcomer.
\end{tabular}

\subsection{Main Results}

The main result of this paper gives a closed-form expression for the region $\mathcal{C}_{\mathrm{AD}}(d,k,r)$. The statement of the main theorem (Theorem~\ref{thm:C}) requires the following notations.

\noindent {\bf Definitions:}
For $j=1,2,\ldots, k$, define
\begin{align}
\tilde{\alpha}_{j}&:= \frac{d-k+j+\frac{r-1}{2}}{D_j}, \label{eq:tilde_alpha1}\\
\tilde{\gamma}_{j}&:= \frac{d+\frac{r-1}{2}}{D_j} \label{eq:tilde_gamma1},
\end{align}
where $D_j$ is a short-hand notation for
\begin{equation}
D_j := k\big(d-k+j+\frac{r-1}{2}\big)-\frac{j(j-1)}{2}.
\label{eq:Dj}
\end{equation}
The points $(\tilde{\gamma}_j, \tilde{\alpha}_j)$  are called {\em operating points of the first type}.

For $\ell = 0,1,\ldots, \lfloor k/r \rfloor$, define
\begin{align}
\tilde{\alpha}_{\ell}' &:= \frac{d-k+r(\ell+1)}{D_\ell'}, \label{eq:tilde_alpha2}\\
\tilde{\gamma}_{\ell}' &:= \frac{d+r-1}{D_\ell'}, \label{eq:tilde_gamma2}
\end{align}
where
\begin{equation}
D_\ell' := k(d+r(\ell+1)-k) - \frac{r^2 \ell (\ell+1)}{2}. \label{eq:Phi}
\end{equation}
The points $(\tilde{\gamma}_\ell', \tilde{\alpha}_\ell')$ are called {\em operating points of the second type}.

\smallskip

For non-negative integer $j$ and positive integer $r$, let
\begin{equation}\Psi_{j,m} := \lfloor j/m \rfloor m^2 + (j-\lfloor j/m \rfloor m)^2.
 \label{eq:Delta}
\end{equation}

Let $\mu:\{0,1,\ldots,k\}\rightarrow \mathbb{R} \cup \{\infty\}$, be a function defined by $\mu(0) := 0$, and
\[
\mu(j) := \begin{cases} \frac{j(d-k)+(j^2+\Psi_{j,r})/2 }{jr-\Psi_{j,r}} & \text{if } \Psi_{j,r} < jr, \\
\infty & \text{if } \Psi_{j,r} = jr.
\end{cases}
\]
for $j=1,2,\ldots, k$. The motivation for  the definition of $\mu(j)$ will be given in Section~\ref{sec:OP}.

\begin{theorem}
The admissible region $\mathcal{C}_{\mathrm{AD}}(d,k,r)$ is equal to the convex hull of the union of
\begin{align}
\Big\{ ( \tilde{\gamma}_j, \tilde{\alpha}_j) &:\, j=2,3,\ldots, k-1,\ d \leq (r-1) \mu(j)  \Big\}, \label{eq:extreme_point1} \\
\Big\{ ( \tilde{\gamma}_{\lfloor j/r \rfloor}', \tilde{\alpha}_{\lfloor j/r \rfloor}') &:\,  j=2,3,\ldots, k-1,\  d > (r-1) \mu(j) \Big\}, \label{eq:extreme_point2}
\end{align}
\begin{equation}
\Big\{ (\tgamma_0'+c,\talpha_0' ):\, c \geq 0 \Big\},
\label{eq:ray1}
\end{equation}
and
\begin{equation}
\Big\{(\tgamma_k ,\talpha_k +c ) :\, c\geq 0 \Big\}.
\label{eq:ray2}
\end{equation}
When $r=1$, we use the convention  $0\cdot \infty = \infty$ in \eqref{eq:extreme_point1} and~\eqref{eq:extreme_point2}.

Furthermore, linear regenerating codes meeting this bound  exist for all $n\geq d+r$, provided that we work over a sufficiently large finite field.
\label{thm:C}
\end{theorem}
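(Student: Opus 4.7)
The plan is to split Theorem~\ref{thm:C} into an outer bound (every admissible operating point lies in the specified convex hull) and an inner bound (every point of the hull is realized by a linear cooperative regenerating code over a sufficiently large finite field), and to handle them in turn. For the outer bound I would construct the information flow graph for cooperative repair: each storage node at each stage is split into an $\In$ vertex and an $\Out$ vertex joined by an $\alpha$-capacity edge; Phase~1 adds $\beta_1$-capacity edges from helper $\Out$-vertices to newcomer $\In$-vertices; Phase~2 adds $\beta_2$-capacity edges between pairs of simultaneously repaired newcomers. Each data collector is a sink connected to $k$ $\Out$-vertices by $\infty$-capacity edges, and by the standard multicast cut-set bound the source-to-sink min-cut must be at least $B$. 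I would then enumerate the combinatorially essential source-sink cuts: a type-I extreme point $(\tgamma_j,\talpha_j)$ arises from the cut in which the $k$ contacted nodes are sampled one per repair stage and $j$ storage edges are peeled off, while a type-II extreme point arises when the $k$ nodes are sampled in groups of $r$ co-repaired newcomers, in which case the quantity $\Psi_{j,r}$ records the Phase~2 edges that remain inside the cut. The function $\mu(j)$ is precisely the threshold at which one family overtakes the other, and intersecting the resulting half-planes yields the convex hull described in \eqref{eq:extreme_point1}--\eqref{eq:ray2}.

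For the inner bound, the plan is first to exhibit a linear code achieving each extreme operating point on every finite truncation of the information flow graph. Because a newcomer must build its $\alpha$ output symbols from $d\beta_1+(r-1)\beta_2$ received symbols subject to the symmetry constraints of the repair protocol, the relevant multicast problem is not a plain network coding instance but a submodular flow problem; invoking the max-flow-min-cut theorem for submodular flows from combinatorial optimization guarantees that the cut-set bound is tight, and a Jaggi-Sander-style greedy procedure then selects feasible coding vectors over a sufficiently large finite field. Arbitrary points in the convex hull are obtained by time-sharing among the codes for the extreme points, which preserves linearity.

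The main obstacle is the infinitude in time of the information flow graph: naively, the number of data collectors, and hence the field size demanded by the Jaggi-Sander bound, grows without limit as the system is run stage after stage. Extending Wu's argument for single-loss repair~\cite{Wu10} to the cooperative setting, the plan is to exploit the stationary structure of the repair process and to show that, up to graph symmetry, only finitely many local cut patterns ever arise; a single finite system of non-degeneracy conditions on the coding coefficients then guarantees the cut bound at every present and future data collector. The new difficulty compared with Wu's single-loss argument is that Phase~2 exchange couples the coding vectors of the $r$ simultaneously repaired newcomers, so the hardest step is verifying that this coupling does not enlarge the finite catalogue of cut patterns to be tracked; once this is done, a fixed-size finite field suffices for all $n\geq d+r$ regardless of how long the system runs.
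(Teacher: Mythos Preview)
Your high-level decomposition---cut-set outer bound via the information flow graph, tightness via submodular flow, linear achievability via a Wu-style invariant---matches the paper's. Two of your descriptions, however, misidentify the precise mechanisms, and getting these right is the technical core of the proof.

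First, submodular flow does not enter because ``symmetry constraints'' make the multicast problem non-standard; the information flow graph (with $\In$/$\Mid$/$\Out$ vertices, not just $\In$/$\Out$) is an ordinary capacitated DAG and classical max-flow--min-cut already applies to each data collector. The paper instead introduces submodular flow as a \emph{proof device} to certify that the specific cuts you listed are actually the minimum over all graphs and all stages. It defines a base-polymatroid of ``transmissive vectors'' (vectors majorized by a fixed rank accumulation profile $\mathbf{p}_z$ or $\mathbf{q}_\ell$) and applies Frank's feasibility theorem (Theorem~\ref{thm:Frank}) to the single-stage auxiliary graph, showing that if the demanded out-flow vector lies in this base-polymatroid then a feasible flow exists whose in-flow vector again lies in the same base-polymatroid. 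This one-stage inductive step is what tames the infinite graph; your plan does not isolate it.

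Second, the code construction is not Jaggi--Sanders and the invariant is not ``finitely many local cut patterns.'' The paper's invariant is the \emph{regularity property}: for every integer vector $\mathbf{h}$ majorized by the rank accumulation profile, the $B\times B$ submatrix of stored global encoding vectors indexed by $\mathbf{h}$ is nonsingular. There are only $|\mathcal{P}_z|$ (respectively $|\mathcal{Q}_\ell|$) such $\mathbf{h}$, a number depending on $n,d,k,r$ but not on the stage. The integral submodular flow from the previous paragraph shows that each such determinant, viewed as a polynomial in the local encoding variables of a single stage, is not identically zero; the Schwartz--Zippel-type Lemma~\ref{lemma:Zippel} then bounds the field size by $r\alpha|\mathcal{P}_z|$ once and for all. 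The Phase-2 coupling you flag is absorbed automatically, because the auxiliary-graph analysis already carries the $\beta_2$ edges and the majorization index set does not grow.
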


We note that each of the sets in \eqref{eq:extreme_point1} and \eqref{eq:extreme_point2} contains at most $k-2$ points. The set in~\eqref{eq:ray1} is a horizontal ray, and the set in \eqref{eq:ray2} is a vertical ray. The proof of Theorem~\ref{thm:C} is given in Sections \ref{sec:lower_bound} to~\ref{sec:LN}.

{\em Remark:} The quantity $\Psi_{j,m}$ defined in \eqref{eq:Delta} can be interpreted as the maximum value of $\sum_{i=1}^j x_i^2$ subject to the constraints $\sum_{i=1}^j x_i = j$ and $0\leq x_i \leq m$ for all~$i$.
If we divide $j$ by $m$, the quotient and remainder are, respectively, $\lfloor j/m \rfloor$ and $j-\lfloor j/m\rfloor$.
We have $\Psi_{0,m} =0$ and $\Psi_{1,m} = 1$ for all $m\geq 1$. Also, for $j\geq 2$ and $m\geq 1$, we have $j<  \Psi_{j,m} \leq jm$.
Equality $\Psi_{j,m} = jm$ holds if and only if $j$ is divisible by~$m$. In particular, we have $\Psi_{j,1} = j$ for all $j\geq 1$.

\noindent {\bf Definitions:} There are two particular operating points of special interest. The first one,
\[ (\tilde{\gamma}_{\mathrm{MSCR}}, \tilde{\alpha}_{\mathrm{MSCR}}) := (\tilde{\gamma}_0', \tilde{\alpha}_0') =  \Big( \frac{d+r-1}{k(d+r-k)}, \frac{1}{k} \Big),
\]
is called the {\em minimum-storage cooperative regenerating} (MSCR) point. This point is the end point of the half-line~\eqref{eq:ray1}.
The second one,
\[ (\tilde{\gamma}_{\mathrm{MBCR}}, \tilde{\alpha}_{\mathrm{MBCR}}) :=  (\tilde{\gamma}_k, \tilde{\alpha}_k)= \frac{2d+r-1}{k(2d+r-k)} (1,1),
\]
is called the {\em minimum-bandwidth cooperative regenerating} (MBCR) point.
This point is the end point of the half-line in~\eqref{eq:ray2}.

An operating point $(\tilde{\gamma}^\flat, \tilde{\alpha}^\flat)$ is said to {\em Pareto-dominate} another point $(\tilde{\gamma}^\sharp,\tilde{\alpha}^\sharp)$ if $\tilde{\gamma}^\flat\leq \tilde{\gamma}^\sharp$ and $\tilde{\alpha}^\flat \leq \tilde{\alpha}^\sharp$.  An operating point $(\tilde{\gamma},\tilde{\alpha})$ is called {\em Pareto-optimal} if it is in $\mathcal{C}_{\mathrm{AD}}(d,k,r)$ and not Pareto-dominated by other operating points in $\mathcal{C}_{\mathrm{AD}}(d,k,r)$.
The MSCR (resp.~MBCR) point is the Pareto-optimal point with minimum $\tilde{\alpha}$ (resp.~$\tilde{\gamma}$).

When $r=1$, Theorem~\ref{thm:C} reduces to the corresponding result for single-loss recovery in~\cite{DGWR2010}. Indeed, we have $\mu(j) = \infty$ for $j=1,2,\ldots, k$ when $r=1$. Using the convention $0\cdot \infty = \infty$, the set in~\eqref{eq:extreme_point1} contains $k-2$ operating points
\begin{equation}
(\tilde{\gamma}_j, \tilde{\alpha}_j)=\frac{2}{2k(d-k+j)-j(j-1)} \big(d, d-k+j\big),
\label{eq:r_equal_1}
\end{equation}
for $j=2,3,\ldots, k-1$, while the set in~\eqref{eq:extreme_point2} is empty. For $r=1$, the extreme points of $\mathcal{C}_{\mathrm{AD}}(d,k,1)$ are the points in~\eqref{eq:r_equal_1}, and
\begin{align}
(\tilde{\gamma}_{\mathrm{MSR}}, \tilde{\alpha}_{\mathrm{MSR}}) &:= (\tilde{\gamma}_0', \tilde{\alpha}_0')=\Big( \frac{d}{k(d+1-k)}, \frac{1}{k} \Big),  \label{eq:MSR} \\
(\tilde{\gamma}_{\mathrm{MBR}}, \tilde{\alpha}_{\mathrm{MBR}})&:= (\tilde{\gamma}_k, \tilde{\alpha}_k)=\frac{2d}{k(2d+1-k)} (1 ,1). \notag
\end{align}

\smallskip

We define the  {\em storage efficiency} as the number of symbols in the data file divided by the total number of symbols in the $n$ storage nodes. In terms of the normalized storage per node, the storage efficiency is equal to $1/(n\tilde\alpha)$. The storage efficiency of an MSCR code is $k/n$.

For MBCR, the storage efficiency is $$\frac{k(2d+r-k)}{n(2d+r-1)}.$$
If we fix $n$, $d$ and $k$, and increase the value of $r$, then the storage efficiency increases. Alternately, if we fix $n$, $k$ and $r$, and increase the value of  $d$, the storage efficiency also increases. However, the storage efficiency cannot exceed $1/2$. One can see this by first upper bounding it by
$$
\frac{k(2d+r-k)}{(d+r)(2d+r-1)},
$$
and then show that
\begin{align*}
& \phantom{=} 1 - 2\frac{k(2d+r-k)}{(d+r)(2d+r-1)} \\
& =
\frac{2(d-k)^2 + (r-k)^2 + k^2 +(2r-1)d + (r-1)d}{(d+r)(2d+r-1)} > 0.
\end{align*}

In Section~\ref{sec:explicit}, two families of cooperative regenerating codes for exact repair are constructed explicitly. Both families have the property $d=k$. The first family matches the MSCR point, and has parameters $B=kr$, $n\geq d+r$, $\alpha=r$ and $\gamma = d+r-1$. The second family matches the MBCR point and has parameters $B=k(k+r)$, $n=d+r$ and $\alpha=\gamma=2d+r-1$.

\subsection{Numerical Illustrations}

We illustrate  the admissible region $\mathcal{C}_{\mathrm{AD}}(5,4,3)$ (with parameters $d=5$, $k=4$, $r=3$) in Fig.~\ref{fig:tradeoff1}. The solid line (marked by squares) is the boundary of the region $\mathcal{C}_{\mathrm{AD}}(5,4,3)$. The set in~\eqref{eq:extreme_point1} contains two points, namely
\begin{align*}
(\tilde{\gamma}_2, \tilde{\alpha}_2) &= \Big(\frac{d+\frac{r-1}{2}}{D_2}, \frac{d-k+2+\frac{r-1}{2}}{D_2} \Big) \\
& = (6/15, 4/15)\doteq (0.4, 0.2667),
\end{align*}
and
\begin{align*}
(\tilde{\gamma}_3, \tilde{\alpha}_3) &= \Big(\frac{d+\frac{r-1}{2}}{D_3}, \frac{d-k+3+\frac{r-1}{2}}{D_3} \Big) \\
& = (6/17, 5/17)\doteq(0.3529, 0.2941).
\end{align*}
The set in~\eqref{eq:extreme_point2} is empty.
The MSCR and MBCR points are, respectively,
\begin{align*}
(\tilde{\gamma}_{\mathrm{MSCR}}, \tilde{\alpha}_{\mathrm{MSCR}}) &=  \Big( \frac{d+r-1}{k(d+r-k)}, \frac{1}{k} \Big) \\
& =  (7/16,1/4)= (0.4375, 0.25),
\end{align*}
and
\begin{align*}
(\tilde{\gamma}_{\mathrm{MBCR}}, \tilde{\alpha}_{\mathrm{MBCR}}) &=  \Big( \frac{2d+r-1}{k(2d+r-k)}, \frac{2d+r-1}{k(2d+r-k)} \Big) \\
& =  (1/3, 1/3)\doteq(0.3333,0.3333).
\end{align*}
For comparison, we also plot in Fig.~\ref{fig:tradeoff1} the optimal tradeoff curve for single-failure repair with parameters $d=5$, $k=4$ and $r=1$ (marked by circles). We observe that the boundary of the admissible region is piece-wise linear.

\begin{figure}
\centering
\vspace{-5mm}
\includegraphics[width=3.8in]{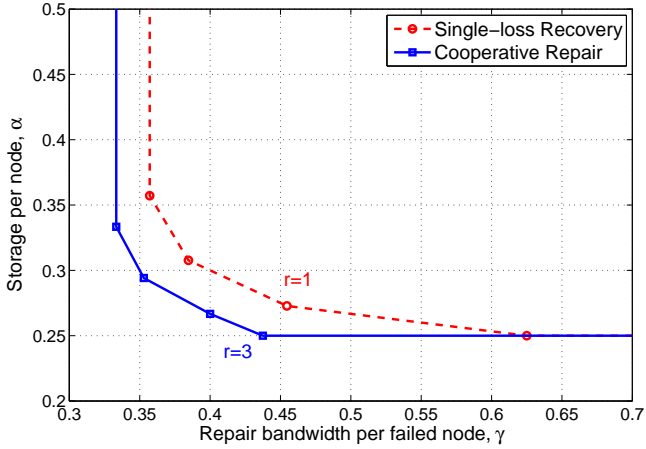}
\caption{Tradeoff between storage and repair bandwidth, $B=1$, $d=5$, $k=4$. The solid line is for $r=3$ and the dashed line is for $r=1$.}
\label{fig:tradeoff1}
\end{figure}

\smallskip

\renewcommand{\algorithmicrequire}{\textbf{Input:}}
\renewcommand{\algorithmicensure}{\textbf{Output:}}

\begin{algorithm}
\caption{Plot the tradeoff curve for cooperative repair} \label{algo_plot}
\begin{algorithmic}[1]

\REQUIRE $d$, $k$, $r$.

\ENSURE The tradeoff curve of storage per node versus repair bandwidth per node.

\STATE $\tgamma \gets \tgamma_{\mathrm{MSCR}}$, $\talpha \gets \talpha_{\mathrm{MSCR}}$.

\FOR { $j=2,3,\ldots, k$}
\IF { $r=1$ or $d \leq (r-1)\mu(j)$ }

\STATE $x \gets \tgamma_j$, $y \gets \talpha_j$.

\ELSE

\STATE $x \gets \tgamma_{\lfloor j/r \rfloor }'$, $y \gets \talpha_{\lfloor j/r \rfloor}'$.

\ENDIF

\STATE Draw a line segment  from $(\tgamma, \talpha)$ to $(x, y)$.

\STATE $\tgamma \gets x$, $\talpha \gets y$.

\ENDFOR

%\STATE Draw a line segment  from $(\tgamma, \talpha)$ to $(\tgamma_{\mathrm{MBCR}},\talpha_{\mathrm{MBCR}})$.

\STATE Draw a horizontal ray  from $(\tgamma_{\mathrm{MSCR}}, \talpha_{\mathrm{MSCR}})$ to $(\infty, \talpha_{\mathrm{MSCR}})$.

\STATE Draw  a vertical ray from $(\tgamma_{\mathrm{MBCR}},\talpha_{\mathrm{MBCR}})$ to
$(\tgamma_{\mathrm{MBCR}},\infty)$.

\end{algorithmic}
\end{algorithm}

Even though the statement in Theorem~\ref{thm:C} is a little bit complicated, we can plot the tradeoff curve by the procedure described in Algorithm~\ref{algo_plot}. As a numerical example, we plot the tradeoff curves with parameters $B=1$, $d=21$, $k=20$, and $r=1,3,5,7,9,11,13$ in Fig.~\ref{fig:tradeoff3}. The curve for $r=1$ is the tradeoff curve for single-node-repair regenerating code. The repair degree $d=21$ is kept constant, and the number of storage nodes can be any integer larger than or equal to $d+13=34$.  We can see in Fig.~\ref{fig:tradeoff3} that we have a better tradeoff curve when the number of cooperating newcomers increases. We indicate the operating points of the first type by dots and operating points of the second type by squares. We observe that all but one operating points of the second type are on the horizontal line $\alpha = 0.05$. The exceptional operation point of the second type lies on the trade-off curve with~$r=3$.

\begin{figure}
\centering
\includegraphics[width=3.8in]{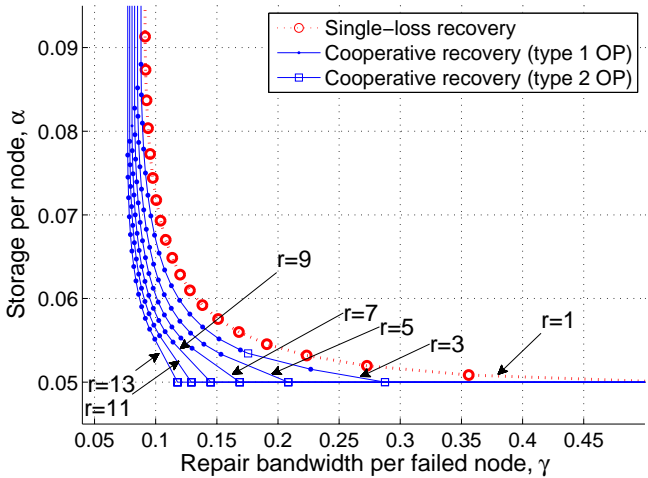}
\caption{Tradeoff between storage and repair bandwidth ($B=1$, $d=21$, $k=20$, $r=1,3,5,7,9,11,13$).}
\label{fig:tradeoff3}
\end{figure}

\smallskip

We compare below the repair bandwidth of three different
modes of repair in a distributed storage system of $n = 7$ nodes. We require that any $k=3$ nodes is sufficient in decoding the original file.
Each node contains the minimum amount of data, i.e., $\tilde{\alpha} = 1/3$.

Suppose that three nodes have failed.

(i) {\em Individual repair without newcomer cooperation}. Each
newcomer connects to the four remaining storage nodes. From~\eqref{eq:MSR}, the normalized repair bandwidth per newcomer is
\[
\tilde{\gamma}_{\mathrm{MSR}}= \frac{d}{k(d+1-k)} = \frac{4}{3(4+1-3)} \doteq 0.6667.
\]

(ii) {\em One-by-one repair}. We repair the failed nodes one by one. The newly repaired nodes are utilized as the helpers during the repair of the remaining failed nodes. The average repair bandwidth per newcomer is
\[
\frac{1}{3} \Big( \frac{4}{3(4+1-3)} + \frac{5}{3(5+1-3)} +\frac{6}{3(6+1-3)}   \Big) \doteq 0.5741.
\]
The first term in the parenthesis is the repair bandwidth of
the first newcomer, who downloads from the four surviving
nodes, the second term is the repair bandwidth of the second
newcomer, who connects to the four surviving nodes and the
first newcomer, and so on.

(iii) {\em Full cooperation among the three newcomers}. With $r=k=3$ and $d=4$, the normalized repair bandwidth per newcomer is
\[
\tilde{\gamma}_{\mathrm{MSCR}} = \frac{d+r-1}{k(d+r-k)}=\frac{4+3-1}{3(4+3-3)} = 0.5.
\]
We thus see that the full cooperation in (iii) gives the smallest repair bandwidth.

\subsection{Organization}

This paper is organized as follows.
In Section~\ref{sec:LBRB}, we review the information flow graph for cooperative repair, and state some definitions and theorems from combinatorial optimization. In Section~\ref{sec:lower_bound},  a lower bound on repair bandwidth for cooperative recovery is derived.
The lower bound is expressed in terms of a linear programming problem. In Section~\ref{sec:OP}, we solve the linear program explicitly.
In Section~\ref{sec:max_flow} we show that the lower bound is tight by using some results from the theory of submodular flow. We prove in Section~\ref{sec:LN} that we can construct linear network codes over a fixed finite field, which match this lower bound on repair bandwidth. Two explicit constructions for exact-repair cooperative regenerating codes are given in Section~\ref{sec:explicit}. Appendix~\ref{app:asym} discusses the scenario of heterogeneous download traffic. Some of the longer proofs are relegated to the remaining appendices.

\section{Preliminaries}
\label{sec:LBRB}

\subsection{Polymatroid and submodular flow}

We collect some definitions and basic facts of submodular functions and polymatroids. We refer the readers to the texts~\cite{SchrijverB, Fujishige, Frank} for more details.

\noindent {\bf Definitions:} Let $\mathbb{R}$ be the set of real numbers and $\mathbb{R}_+$ be the set of non-negative real numbers. For a finite set $\mathcal{V}$, we denote the cardinality of $\mathcal{V}$ by $|\mathcal{V}|$.
We let $\mathbb{R}^\mathcal{V}$ be the set of vectors with components indexed by the elements in $\mathcal{V}$, and  $\mathbb{R}_+^\mathcal{V}$ be subset of vectors in $\mathbb{R}^{\mathcal{V}}$ with non-negative components. In the rest of this paper, a vector will be identified with a real-valued function on $\mathcal{V}$.

Let the set of all subsets of $\mathcal{V}$ be $2^\mathcal{V}$. A set function $f:2^\mathcal{V} \rightarrow\mathbb{R}$ is called {\em submodular} if it satisfies
\begin{equation}
 f(\mathcal{S})+f(\mathcal{T}) \geq f(\mathcal{S}\cap\mathcal{T}) + f(\mathcal{S}\cup \mathcal{T}) \label{eq:submodular_def}
\end{equation}
for all  $\mathcal{S}, \mathcal{T} \subseteq\mathcal{V}$. To show that a function $f$ is submodular, it is sufficient to check that
\[
 f(\mathcal{S} \cup \{u\})+ f(\mathcal{S} \cup \{v\}) \geq f(\mathcal{S})+
  f(\mathcal{S} \cup \{u,v\})
\]
for all subsets $\mathcal{S}\subseteq \mathcal{V}$ and $u,v\in\mathcal{V}$  (See \cite[Thm 44.1]{SchrijverB}).

If \eqref{eq:submodular_def} holds with equality for all $\mathcal{S}$ and $\mathcal{T}$ in $2^\mathcal{V}$, then $f$ is called {\em modular}. For a given vector $\mathbf{x} = (x_i)_{i\in\mathcal{V}}$, we can define a modular function by
$$\mathbf{x}(\mathcal{S}):= \sum_{i\in \mathcal{S}} x_i,
$$
for all subsets $\mathcal{S} \subseteq \mathcal{V}$.

A submodular function $f$ is said to be {\em monotone} if $f(\mathcal{S})\leq f(\mathcal{T})$ whenever $\mathcal{S} \subseteq \mathcal{T}$. Furthermore, a monotone submodular function $f$ satisfying $f(\emptyset)=0$ is called a {\em polymatroidal rank function}, or simply a {\em rank function}.

The {\em polymatroid} corresponding to a rank function $f$ is the polyhedron defined as
\[
\mathcal{P}(f) := \{ \mathbf{x}\in\mathbb{R}_+^\mathcal{V} :\,  \mathbf{x}(\mathcal{S}) \leq f(\mathcal{S}), \ \forall \mathcal{S}\subseteq \mathcal{V} \}.
\]
The face of the polymatroid consisting of the points satisfying $\mathbf{x}(\mathcal{V}) = f(\mathcal{V})$ is called the {\em base-polymatroid} associated with the rank function~$f$. It is well known that the base-polymatroid is non-empty (See e.g.~\cite[Thm. 2.3]{Fujishige}). We will use the symbol $\mathcal{B}(f)$ to denote the base-polymatroid corresponding to rank function $f$,
\[
 \mathcal{B}(f) := \{ \mathbf{x} \in \mathcal{P}(f):\, \mathbf{x}(\mathcal{V}) = f(\mathcal{V})\}.
\]

%Rank functions can be constructed from concave functions by the following lemma (See e.g.~\cite[p.32]{Fujishige}).
%
%\begin{lemma}
%For any nondecreasing concave function $g:\mathbb{R}\rightarrow \mathbb{R}$, satisfying $g(0)=0$, the function $f: 2^\mathcal{V} \rightarrow \mathbb{R}$ defined by $f(\mathcal{S}) = g(|\mathcal{S}|)$ is a rank function.
%\label{lemma:monotone_submodular}
%\end{lemma}
%

For a given vector $\mathbf{x} \in \mathbb{R}_+^\mathcal{V}$, we sort the components of $\mathbf{x}$ in non-increasing order and let the $j$-th largest component in $\mathbf{x}$ be denoted by~$x_{[j]}$, i.e.,
$$ x_{[1]} \geq x_{[2]} \geq \cdots \geq x_{[|\mathcal{V}|]}.
$$
Given two vectors  $\mathbf{x}$ and  $\mathbf{y}$ in  $\mathbb{R}_+^\mathcal{V}$, we say that $\mathbf{x}$ is {\em majorized} by $\mathbf{y}$ if
\[   x_{[1]} +x_{[2]} + \cdots + x_{[i]} \leq y_{[1]} +  y_{[2]} + \cdots +   y_{[i]} ,
\]
for $i=1,2,\ldots, |\mathcal{V}|-1$ and
\[  \sum_{j=1}^{|\mathcal{V}|} x_{[j]} = \sum_{j=1}^{|\mathcal{V}|} y_{[j]}.
\]

In this paper, we will construct polymatroids and rank functions by the following lemma~\cite[p.44]{Fujishige}.

\begin{lemma} Let $\mathcal{V}$ be a finite set and $\mathbf{u}$ be a given vector in $\mathbb{R}_+^{\mathcal{V}}$. The function $f:2^\mathcal{V} \rightarrow \mathbb{R}_+$ defined by
\[
f(\mathcal{S}) := \sum_{j=1}^{|\mathcal{S}|} u_{[j]}
\]
is a rank function. The set of vectors in $\mathbb{R}_+^{\mathcal{V}}$ which are majorized by $\mathbf{u}$ is precisely the base-polymatroid associated with the rank function~$f$. \label{lemma:polymatroid}
\end{lemma}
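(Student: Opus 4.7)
The plan is a routine verification of the three axioms for a rank function, followed by a direct chase of definitions to identify $\mathcal{B}(f)$ with the set of vectors majorized by $\mathbf{u}$.

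Since $u_{[j]}$ depends only on the multiset of entries of $\mathbf{u}$, the value $f(\mathcal{S})=\sum_{j=1}^{|\mathcal{S}|} u_{[j]}$ depends on $\mathcal{S}$ only through $|\mathcal{S}|$, so it is convenient to write $f(\mathcal{S})=g(|\mathcal{S}|)$ with $g(s):=\sum_{j=1}^{s}u_{[j]}$ for $s=0,1,\ldots,|\mathcal{V}|$. Then $f(\emptyset)=g(0)=0$, and monotonicity of $f$ follows from $u_{[j]}\ge 0$. For submodularity I would use the pairwise criterion quoted after \eqref{eq:submodular_def}: for $\mathcal{S}\subseteq\mathcal{V}$ and distinct $a,b\notin\mathcal{S}$, setting $s=|\mathcal{S}|$, the inequality $f(\mathcal{S}\cup\{a\})+f(\mathcal{S}\cup\{b\})\ge f(\mathcal{S})+f(\mathcal{S}\cup\{a,b\})$ reduces to $2g(s{+}1)\ge g(s)+g(s{+}2)$, i.e.\ to $u_{[s+1]}\ge u_{[s+2]}$, which is true by the sort order. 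The cases $a=b$ or $\{a,b\}\cap\mathcal{S}\ne\emptyset$ collapse to trivial identities.

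For the second claim, note that $f(\mathcal{V})=\sum_{j=1}^{|\mathcal{V}|}u_{[j]}=\mathbf{u}(\mathcal{V})$. If $\mathbf{x}\in\mathbb{R}_+^{\mathcal{V}}$ is majorized by $\mathbf{u}$, then $\mathbf{x}(\mathcal{V})=\mathbf{u}(\mathcal{V})=f(\mathcal{V})$, and for every $\mathcal{S}\subseteq\mathcal{V}$,
\[
\mathbf{x}(\mathcal{S})\;=\;\sum_{i\in\mathcal{S}}x_i\;\le\;\sum_{j=1}^{|\mathcal{S}|}x_{[j]}\;\le\;\sum_{j=1}^{|\mathcal{S}|}u_{[j]}\;=\;f(\mathcal{S}),
\]
so $\mathbf{x}\in\mathcal{B}(f)$. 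Conversely, suppose $\mathbf{x}\in\mathcal{B}(f)$. For each $i\in\{1,\ldots,|\mathcal{V}|-1\}$, let $\mathcal{S}_i\subseteq\mathcal{V}$ index the $i$ largest entries of $\mathbf{x}$; then
\[
x_{[1]}+\cdots+x_{[i]}\;=\;\mathbf{x}(\mathcal{S}_i)\;\le\;f(\mathcal{S}_i)\;=\;u_{[1]}+\cdots+u_{[i]},
\]
while the equality of total sums is precisely the defining constraint $\mathbf{x}(\mathcal{V})=f(\mathcal{V})$ of the base-polymatroid. Hence $\mathbf{x}$ is majorized by $\mathbf{u}$.

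I do not anticipate any real obstacle here: the lemma is a textbook packaging of the fact that the base-polymatroid of a ``sorted-partial-sum'' rank function is exactly the majorization polytope, and the whole argument is a one-pass unwinding of the definitions of submodularity, $\mathcal{B}(f)$, and majorization.
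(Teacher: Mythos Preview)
Your proof is correct and takes essentially the same approach as the paper's sketch for the rank-function axioms (the pairwise submodularity criterion reducing to $u_{[s+1]}\ge u_{[s+2]}$, plus monotonicity from $u_{[j]}\ge 0$). The paper's proof is explicitly labeled a sketch and does not spell out the base-polymatroid/majorization equivalence at all, so your second paragraph is in fact more complete than what appears in the paper.
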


\begin{proof} (Sketch) For the submodularity,
it is sufficient to check that the condition
\begin{equation}2f(|S|+1) \geq f(|S|) + f(|S|+2)
\label{eq:lemma_sketch}
\end{equation}
for all $\mathcal{S}\subseteq \mathcal{V}$ with $|\mathcal{S}| \leq |\mathcal{V}|-2$. The inequality in~\eqref{eq:lemma_sketch} is equivalent to
$u_{[|S|+1|]} \geq u_{[|S|+2]}$, which holds by construction. The function $f$ is monotone because the function $\sum_{j=1}^i u_{[j]}$ is monotonically nondecreasing as a function of $i$.
\end{proof}

It is obvious that any submodular function $f(\mathcal{S})$ constructed as in Lemma~\ref{lemma:polymatroid} only depends on the size of $\mathcal{S}$.
We give a numerical example for Lemma~\ref{lemma:polymatroid}. Let $f$ be the rank function
\[
 f(\mathcal{S})= \begin{cases}
 0 & \text{ if } \mathcal{S} = \emptyset \\
 2 & \text{ if } |\mathcal{S}| = 1\\
4 & \text{ if } |\mathcal{S}| = 2 \\
 5 & \text{ if } |\mathcal{S}| = 3
 \end{cases}
\]
induced from the vector $\mathbf{u} = (2,2,1)$. The base-polymatroid $\mathcal{B}(f)$ consists of the vectors $(x,y,z)$ in $\mathbb{R}_+^3$  which satisfy
\begin{gather*}
x\leq 2, \ y\leq 2,\ z\leq 2,
x+y \leq 4, \ y+z \leq 4, \ z+x \leq 4, \\
x+y+z = 5.
\end{gather*}
The vectors in $\mathcal{B}(f)$ are precisely the vectors in $\mathbb{R}_+^3$ which are majorized by~$\mathbf{u}$.

\noindent {\bf Definitions:}
Let $H = (\mathcal{V}, \mathcal{E})$ be a directed graph. For a given subset $\mathcal{T}$ of $\mathcal{V}$, define the set of incoming edges and the set of out-going edges, respectively, by
\begin{align*}
 \Delta^- \mathcal{T} &:= \{e =(u,v) \in \mathcal{E}: u\not\in \mathcal{T}, v \in \mathcal{T} \} , \\
 \Delta^+\mathcal{T} &:= \{e =(u,v) \in \mathcal{E}: u \in \mathcal{T}, v \not\in\mathcal{T} \}.
\end{align*}
When $\mathcal{T}$ is a singleton $\{v\}$, $\Delta^- \{ v\}$ is the set of edges which terminate at vertex $v$, and $\Delta^+ \{ v\}$ is the set of edges which emanate from $v$. We will write
\[
 \Delta^-v := \Delta^- \{ v\} \text{ and }  \Delta^+ v := \Delta^+ \{ v\}.
\]

Let $\phi:\mathcal{E} \rightarrow \mathbb{R}$ be a real-valued function on the edges of~$H$. We extend the function $\phi$ naturally to a set function, by defining
\[
 \phi(\mathcal{E}') := \sum_{e \in \mathcal{E}'} \phi(e)
\]
for $\mathcal{E}' \subseteq \mathcal{E}$. The boundary of $\phi$, denoted by $\partial \phi$, is the set function on $2^\mathcal{V}$ defined by
\begin{align*}
\partial \phi (\mathcal{T}) = \phi(\Delta^+ \mathcal{T}) - \phi(\Delta^- {\mathcal{T}}).
\end{align*}
The boundary of $\phi$ is a modular function, and can be interpreted as the net out-flow of the subset of vertices $\mathcal{T}$ with respect to $\phi$.
For a given a submodular function $f : 2^\mathcal{V}\rightarrow \mathbb{R}$, we say that a function $\phi : \mathcal{E} \rightarrow \mathbb{R}$ is an {\em $f$-submodular flow}, if
\begin{equation}
 \partial \phi( \mathcal{T}) \leq f(\mathcal{T})
 \label{def:submodular}
\end{equation}
for all $\mathcal{T} \subseteq \mathcal{V}$. We will simply write ``submodular flow'' instead of ``$f$-submodular flow'' if $f$ is understood from the context.

Let $\lb: \mathcal{E} \rightarrow \mathbb{R} \cup \{-\infty\}$ and
$\ub: \mathcal{E} \rightarrow \mathbb{R} \cup \{\infty\}$ be two functions defined on the edge set, called, respectively, the lower and upper bound on $\mathcal{E}$, satisfying $\lb(e) \leq \ub(e)$ for all $e\in\mathcal{E}$.
For a given subset $\mathcal{E}'$ of the edge set $\mathcal{E}$, we define
\begin{align*}
 \lb(\mathcal{E}') &:= \sum_{e \in \mathcal{E}'} \lb(e), \\
 \ub(\mathcal{E}') &:= \sum_{e \in \mathcal{E}'} \ub(e).
\end{align*}
A submodular flow $\phi$ is said to be {\em feasible} if
$\lb(e) \leq \phi(e) \leq \ub(e)$
for all $e\in \mathcal{E}$.

The following theorem characterizes the existence of a submodular flow. It is a generalization of the max-flow-min-cut theorem, and is essential in the proof of the main theorem in this paper.

\begin{theorem}[Frank \cite{Frank82}] Suppose that $f$ is a submodular function defined on the vertex set $\mathcal{V}$ of a directed graph $(\mathcal{V},\mathcal{E})$  and $\lb$ and $\ub$ be the lower bound and upper bound functions defined on the edge set $\mathcal{E}$, satisfying  $f(\emptyset)=f(\mathcal{V})=0$ and $\lb(e)\leq\ub(e)$ for all $e\in\mathcal{E}$.
There exists a feasible $f$-submodular flow if and only if
\begin{equation}
\lb(\Delta^+ \mathcal{S})  -
\ub( \Delta^- \mathcal{S})
 \leq f(\mathcal{S})
 \label{eq:cut}
\end{equation}
for all subsets $\mathcal{S} \subseteq \mathcal{V}$. Moreover, if $\lb$, $\ub$ and $f$ are integer-valued, then there is a feasible $f$-submodular flow which is integer-valued.
\label{thm:Frank}
\end{theorem}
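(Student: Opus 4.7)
The plan is to dispose of necessity by direct calculation and to prove sufficiency by a minimum-excess augmentation argument — a generalization of the residual-graph proof of Hoffman's circulation theorem — in which submodularity of $f$ plays the starring role.

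\textbf{Necessity.} For any feasible $f$-submodular flow $\phi$ and any $\mathcal{S}\subseteq\mathcal{V}$,
\[
\lb(\Delta^+\mathcal{S}) - \ub(\Delta^-\mathcal{S}) \leq \phi(\Delta^+\mathcal{S}) - \phi(\Delta^-\mathcal{S}) = \partial\phi(\mathcal{S}) \leq f(\mathcal{S}),
\]
which is \eqref{eq:cut}.

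\textbf{Sufficiency.} I would choose $\phi$ in the box $[\lb,\ub]$ (nonempty since $\lb\leq\ub$) that minimizes
\[
\Phi(\phi) := \sum_{\mathcal{T}\subseteq\mathcal{V}} \max\{0, \partial\phi(\mathcal{T}) - f(\mathcal{T})\},
\]
and suppose for contradiction that $\Phi(\phi) > 0$. Let $\Gamma := \max_\mathcal{T}(\partial\phi(\mathcal{T}) - f(\mathcal{T})) > 0$. The structural engine of the proof is that, since $\partial\phi$ is modular while $f$ is submodular, the difference $\partial\phi - f$ is supermodular:
\[
(\partial\phi-f)(\mathcal{A}\cup\mathcal{B}) + (\partial\phi-f)(\mathcal{A}\cap\mathcal{B}) \geq (\partial\phi-f)(\mathcal{A}) + (\partial\phi-f)(\mathcal{B}),
\]
so the collection $\mathcal{L}(\phi) := \{\mathcal{T} : \partial\phi(\mathcal{T}) - f(\mathcal{T}) = \Gamma\}$ is closed under both union and intersection. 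Let $\mathcal{T}^\star$ be the unique smallest member of $\mathcal{L}(\phi)$.

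Form the residual digraph $H'$ on $\mathcal{V}$ with an arc $u\to v$ whenever either $(u,v)\in\mathcal{E}$ has $\phi((u,v)) < \ub((u,v))$ or $(v,u)\in\mathcal{E}$ has $\phi((v,u)) > \lb((v,u))$, and let $\mathcal{R}$ be the set of vertices from which some vertex of $\mathcal{T}^\star$ is reachable in $H'$. If $\mathcal{R} \supsetneq \mathcal{T}^\star$, a residual path from $\mathcal{R}\setminus\mathcal{T}^\star$ into $\mathcal{T}^\star$ admits a flow augmentation by $\epsilon > 0$ (increase $\phi$ on forward residual arcs, decrease on backward ones) that strictly reduces $\partial\phi(\mathcal{T}^\star)$; choosing $\epsilon$ smaller than every residual capacity along the path and every slack $\Gamma - (\partial\phi - f)(\mathcal{T})$ over the finitely many $\mathcal{T}\notin\mathcal{L}(\phi)$ guarantees that $\Phi$ strictly decreases, contradicting minimality. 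If instead $\mathcal{R} = \mathcal{T}^\star$, then the absence of any residual arc from $\mathcal{V}\setminus\mathcal{T}^\star$ into $\mathcal{T}^\star$ forces $\phi(e) = \ub(e)$ for every $e \in \Delta^-\mathcal{T}^\star$ and $\phi(e) = \lb(e)$ for every $e \in \Delta^+\mathcal{T}^\star$, so
\[
f(\mathcal{T}^\star) + \Gamma = \partial\phi(\mathcal{T}^\star) = \lb(\Delta^+\mathcal{T}^\star) - \ub(\Delta^-\mathcal{T}^\star),
\]
directly contradicting \eqref{eq:cut} at $\mathcal{S} = \mathcal{T}^\star$.

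The \textbf{main obstacle} is the bookkeeping in the augmentation step — in particular guaranteeing that the push genuinely decreases $\Phi$ without creating fresh maximal violations elsewhere — and it is exactly here that the union/intersection closure of $\mathcal{L}(\phi)$ is indispensable, because it confines the maximal violations to a single sublattice with a canonical minimum $\mathcal{T}^\star$ on which one can act locally. The \textbf{integrality} claim is a bonus: when $\lb, \ub, f$ are integer-valued, $\phi \equiv \lb$ is an integer starting point, $\Phi$ takes integer values on integer flows, and each augmenting $\epsilon$ can be taken as a positive integer, so integrality is preserved and the process terminates in finitely many steps.
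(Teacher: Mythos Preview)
The paper does not prove this theorem; it is quoted as a classical result of Frank with pointers to \cite[Thm~5.1]{Fujishige} and \cite[Thm~12.1.4]{Frank} immediately after the statement. So there is no in-paper argument to compare against, and I evaluate your sketch on its own.

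Necessity and the terminal case $\mathcal{R}=\mathcal{T}^\star$ are fine. The genuine gap is in the augmentation step. Your potential $\Phi(\phi)=\sum_{\mathcal{T}\subseteq\mathcal{V}}\max\{0,\partial\phi(\mathcal{T})-f(\mathcal{T})\}$ need not strictly decrease under the push you describe: sending $\epsilon$ along a residual path from $w\notin\mathcal{T}^\star$ to $t\in\mathcal{T}^\star$ lowers $\partial\phi(\mathcal{T})$ by $\epsilon$ for every $\mathcal{T}$ with $t\in\mathcal{T}$, $w\notin\mathcal{T}$, but \emph{raises} it by $\epsilon$ for every $\mathcal{T}$ with $w\in\mathcal{T}$, $t\notin\mathcal{T}$ --- and there are exactly $2^{|\mathcal{V}|-2}$ sets of each kind. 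Bounding $\epsilon$ below the slacks $\Gamma-(\partial\phi-f)(\mathcal{T})$ for $\mathcal{T}\notin\mathcal{L}(\phi)$ only prevents those sets from reaching violation level $\Gamma$; it does not stop their contributions to $\Phi$ from turning positive, so the claimed contradiction with minimality of $\phi$ is not established. (A secondary issue: since the paper allows $\lb(e)=-\infty$ and $\ub(e)=+\infty$, the box $[\lb,\ub]$ need not be compact, and a minimizer of $\Phi$ need not exist.)

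The usual repair is to replace $\Phi$ by the lexicographic pair $(\Gamma,|\mathcal{T}^\star|)$ and to push along a \emph{single} residual arc $u\to v$ with $u\notin\mathcal{T}^\star$, $v\in\mathcal{T}^\star$. Since every $\mathcal{T}\in\mathcal{L}(\phi)$ contains $\mathcal{T}^\star\ni v$, no maximal-violation set has its excess increase; taking $\epsilon$ below the residual capacity and all positive non-maximal slacks, $\mathcal{T}^\star$ drops out of $\mathcal{L}(\phi)$ while no new set enters, so either $\Gamma$ falls or the new minimal tight set strictly contains $\mathcal{T}^\star\cup\{u\}$. That yields a finite descent, and with integer data and integer $\epsilon$ it preserves integrality.
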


Proof of Frank's theorem can be found in \cite[Thm 5.1]{Fujishige} or~\cite[Thm 12.1.4]{Frank}.

\subsection{Information Flow Graph and the Max-Flow Bound} \label{sec:mincut}
We review the information flow graph for cooperative repair as defined in~\cite{HXWZL10}.

The information flow graph is divided into stages, starting from stage $-1$.
Given parameters $n$, $k$, $d$ and $r$, any directed graph $G=(\mathcal{V},\mathcal{E})$ which can be constructed according to the following procedure is called an {\em information flow graph}. An example of information flow graph is shown in Fig.~\ref{fig:flow1}.

\begin{itemize}
\item There is one single source vertex $\Source$ at stage $-1$, representing the original data file.

\item  The $n$ storage nodes after initialization are represented by $n$ vertices at stage~0, called $\Out_i$, for $i=1,2,\ldots, n$. There is a directed edge from the source vertex $\Source$ to each of the ``out'' vertices at stage~0.

\item For $s\geq 1$ and for each $j$ in $\mathcal{R}_{s}$, we put three vertices at stage $s$: $\In_j$, $\Mid_j$ and $\Out_j$. For each $j \in \mathcal{R}_s$, there is a directed edge from $\In_j$ to $\Mid_j$ and a directed edge from $\Mid_j$ to $\Out_j$.
    For each $i \in \mathcal{H}_{s,j}$, we put a directed edge from $\Out_i$ at stage $s-1$ to $\In_j$ at stage $s$. The exchange of data among the $r$ newcomers are modeled by putting a directed from $\In_i$ to $\Mid_j$ for all pairs of distinct $i$ and $j$ in $\mathcal{R}_s$.

\item For each data collector who shows up at stage $s$, we put a vertex, with label \DC, to the information flow graph. This vertex is connected to $k$ ``out'' vertices at the $s$-th or earlier stages. The contacted ``out'' vertices did not fail recently up to stage~$s$.
\end{itemize}

\begin{figure}
\centering
\includegraphics[width=3in]{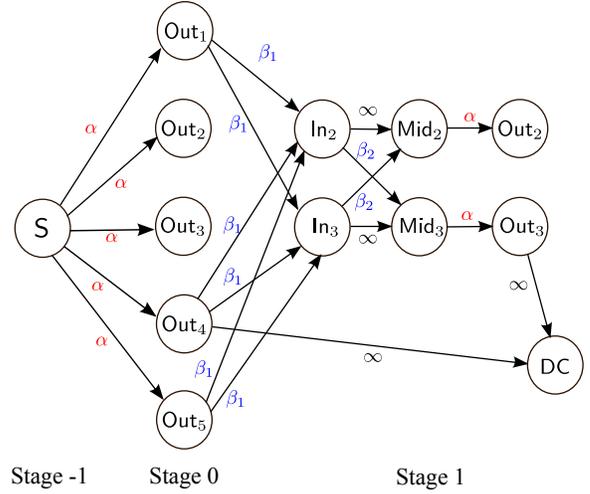}
\caption{An example of information flow graph $G(5,3,2,2;\alpha,\beta_1,\beta_2)$. Nodes 2 and 3 are repaired at stage 1 ($\mathcal{R}_1 = \{2,3\}$).} \label{fig:flow1}
\end{figure}

We assign capacities to the edges as follows.

\begin{itemize}
\item The capacity of an edge terminating at an ``out'' vertex is~$\alpha$. This models the storage requirement in each storage node.

\item The capacity of an edge from an ``in'' vertex to a ``mid'' vertex is infinity. It models the transfer of data inside the newcomer, which does not contribute to the repair bandwidth.

\item The capacity from $\Out_i$ at stage $s-1$ to $\In_j$ at stage $s$ is $\beta_1$, for $i\in\mathcal{H}_{s,j}$. This signifies the amount of data sent from $\Out_i$ to $\In_j$ in the first phase of the repair process. The edge from $\In_j$ to $\Mid_\ell$ at stage $s$, for $j , \ell \in \mathcal{R}_s$ with $j\neq \ell$, is assigned a capacity of $\beta_2$. This signifies the data exchange in the second phase.

\item The edges terminating at a data collector are all of infinite capacity.
\end{itemize}

\begin{figure*}
\centering
\includegraphics[width=4.5in]{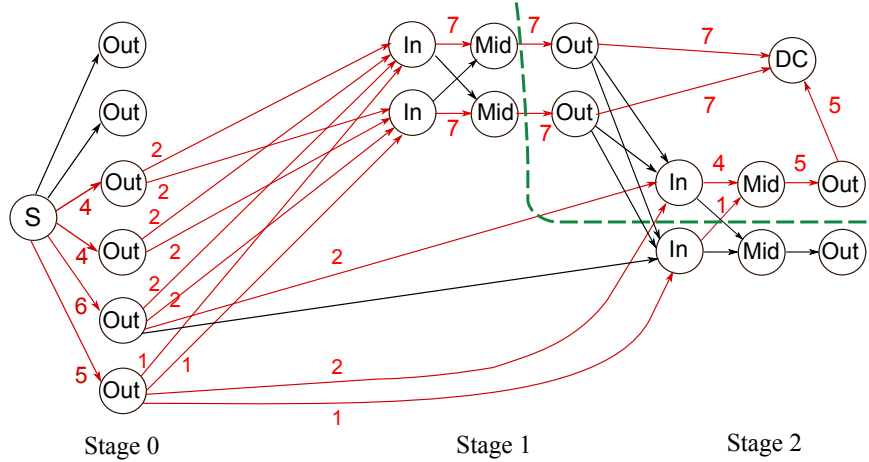}
\caption{An example of flow in an information flow graph. The
parameters are $n=6$, $d=4$, $k=3$, $r=2$, $\alpha=7$, $\beta_1=2$ and $\beta_2=1$. The labels of the edges indicate a flow on the graph (the arrows in red are assigned positive flow value and the arrows in black are assigned
zero value). A cut with capacity 19 is illustrated by a dashed line. }
\label{fig:cutG}
\end{figure*}

The information flow graph so constructed is a directed acyclic graph. It may be  an infinite graph, as the number of stages is unlimited. We will denote an  information flow graph by $G(n,d,k,r;\alpha,\beta_1, \beta_2)$. If the values of parameters are understood from the context, we will simply write $G$.

\noindent \textbf{Definitions}: Let $H = (\mathcal{V},\mathcal{E})$ be a directed graph, in which each edge $e \in \mathcal{E}$ is assigned a non-negative capacity $c(e)$. For two distinct vertices $S$ and $T$ in $\mathcal{V}$, an $(S,T)$-{\em flow} in $H$ is a function $\phi: \mathcal{E} \rightarrow \mathbb{R}_+$, such that $\phi(e) \leq c(e)$ for all $e\in\mathcal{E}$, and $\partial \phi (\{v\}) = 0$ for every vertex $v$ in $\mathcal{V} \setminus\{S,T\}$. A flow $\phi$ is called {\em integral} if $\phi(e)$ is an integer for every edge $e$. The {\em value} of an $(S,T)$-flow $\phi$ is defined as $\phi(\Delta^- T)$.  An $(S,T)$-{\em cut} is a partition $(\mathcal{W}^c, \mathcal{W})$ of the vertex set $\mathcal{V}$ of $H$ such that $S \in {\mathcal{W}^c}$ and $T \in \mathcal{W}$. (The superscript $^c$ stands for the set complement in $\mathcal{V}$.)
The {\em capacity} of an $(S,T)$-cut $(\mathcal{W}^c, \mathcal{W})$ is defined as $$c(\Delta^- \mathcal{W}) := \sum_{e\in \Delta^- \mathcal{W}} c(e),$$
 the sum of the capacities of the edges from $\mathcal{W}^c$ to $\mathcal{W}$.

The {\em max-flow-min-cut theorem} states that the minimal cut capacity and maximal flow value coincide. Furthermore, if the edge capacities are all integer-valued, then there is a maximal flow which is integral. In Appendix \ref{app:maxflowFrank}, we illustrate that the max-flow-min-cut theorem is a special case of Frank's theorem.

\noindent \textbf{Definitions}: For a given data collector \DC\ in the information flow graph $G$, we let $$\maxflow(G,\DC)$$ be the maximal flow value from the source vertex $\Source$ to~$\DC$.

Even though the graph $G$ may be infinite, the computation of the flow from the source vertex to a particular data collector \DC\ at stage $t$ only involves the subgraph of $G$ from stage $-1$ to stage $t$. For each \DC, the problem of determining the max-flow reduces to a max-flow problem in a finite graph.

An example of flow in an information flow graph for $n=6$, $d=4$, $k=3$, $r=2$ is shown in Fig.~\ref{fig:cutG}. The data collector \DC\ is connected to one ``out'' vertex at stage 2 and two ``out'' vertices at stage~1. All  edges from  ``out'' vertex to ``in'' vertex, corresponding to the first phase of the repair process, have capacity $\beta_1=2$. All edges from ``in'' vertex to ``mid'' vertex, corresponding to the second phase, have capacity $\beta_2=1$. All edges terminating at an ``out'' vertex have capacity $\alpha=7$.
The edges with positive flow are labeled (and drawn in red color). The flow value is equal to~19. This is indeed a flow with maximal value, because there is a cut with capacity~19 (shown as the dashed line in Fig.~\ref{fig:cutG}).

According to the max-flow bound of network coding~\cite{ACLY00} \cite[Theorem 18.3]{Raymond08}, if all data collectors are able to to retrieve the original file, then the file size $B$ is upper bounded by
\begin{equation}
B\leq \min_{\DC} \maxflow(G,\DC). \label{eq:maxflow1}
\end{equation}
The minimum in \eqref{eq:maxflow1} is taken over all data collector $\DC$ in graph $G$. This gives an upper bound on the supported file size for a given information flow graph $G$. Since we want to build cooperative regenerating schemes that can repair any pattern of node failures, which are unknown the system is initialized, we take the minimum
\begin{equation}
B\leq  \min_{G} \min_{\DC} \maxflow(G,\DC). \label{eq:maxflow2}
\end{equation}
over all information flow graphs $G(n, d, k , r; \alpha, \beta_1, \beta_2)$.

\noindent \textbf{Definitions}:  For given parameters $n$, $d$, $k$, $r$, we denote by
\begin{equation}
 \mathcal{C}_{\mathrm{MF}}(d,k,r)
\end{equation}
the set of operating points  $((d\beta_1+(r-1)\beta_2)/B, \alpha/B)$ which satisfy the condition in~\eqref{eq:maxflow2}. For a given $\tilde{\alpha} = \alpha/B$, let
\begin{equation}
 \gamma_{\mathrm{MF}}^*(\tilde{\alpha}) := \min \{ x:\, (x,\tilde{\alpha})  \in  \mathcal{C}_{\mathrm{MF}}(d,k,r) \}.
\end{equation}

Any operating point not in  $\mathcal{C}_{\mathrm{MF}}(d,k,r)$ violates the max-flow bound for some information flow graph, and hence is not admissible. We have the following inclusion,
\begin{equation}\mathcal{C}_{\mathrm{MF}}(d,k,r) \supseteq \mathcal{C}_{\mathrm{AD}}(d,k,r).
\label{eq:reverse}
\end{equation}
We note that for fixed $\alpha$, $\beta_1$ and $\beta_2$, if $B$ satisfies \eqref{eq:maxflow2}, then \eqref{eq:maxflow2} is satisfied for all $B'$ between 0 and $B$. Hence, if $(\tilde{\gamma}, \tilde{\alpha}) \in \mathcal{C}_{\mathrm{MF}}(d,k,r)$, then $(c\tilde{\gamma}, c\tilde{\alpha}) \in \mathcal{C}_{\mathrm{MF}}(d,k,r)$ for all $c\geq 1$.

\section{A Cut-set Bound on the Repair Bandwidth} \label{sec:lower_bound}
Consider a data collector \DC\ connected to $k$ storage nodes.  By re-labeling the storage nodes, we can assume without loss of generality that the \DC\ downloads data from nodes 1 to $k$. Suppose that among these $k$ nodes, $\ell_0$ of them do not undergo any repair, and the remaining $k-\ell_0$ nodes are repaired at stage 1 to $s$ for some positive integer $s$. For $j=1,2,\ldots, s$, suppose that there are $\ell_j$ nodes which are repaired at stage $j$ and connected to the data collector~\DC. We have $$\ell_0+\ell_1+\cdots+\ell_s=k$$ and  $$1\leq \ell_j \leq r$$
for $j\geq 1$. After some re-labeling again, we can assume that the $\ell_0$ unrepaired nodes are node 1 to node $\ell_0$, the nodes which are repaired at stage 1 are node $\ell_0+1$ to node $\ell_0+\ell_1$, and so on.

In the information flow graph, the data collector \DC\ is connected to $\ell_j$ ``out'' vertices at stage $j$. A cut $(\mathcal{W}^c, \mathcal{W})$ with $\mathcal{W}$ consisting of the data collector \DC, the $\ell_0$ ``out'' vertices at stage 0 associated with nodes 1 to $\ell_0$, and
\[
\bigcup_{i=\ell_{j-1}+1}^{\ell_j} \{\In_{i}, \Mid_{i}, \Out_{i}\}
\]
at stage $j$, for $j = 1,2,\ldots, s$, is called a cut of type
\[
 (\ell_0, \ell_1, \ell_2, \ldots, \ell_s).
\]
An example of a cut of type $(2,1,1,2)$ is shown in Fig.~\ref{fig:type}. Nodes 3 and 4 are repaired at stage 1, nodes 4 and 7 are repaired at stage 2, and nodes 5 and 6 are repaired at stage~3. The data collector connects to nodes 1 to 6. The vertices in $\mathcal{W}$ are drawn in shaded color in Fig.~\ref{fig:type}.

\begin{figure}
\centering
\includegraphics[width=3.5in]{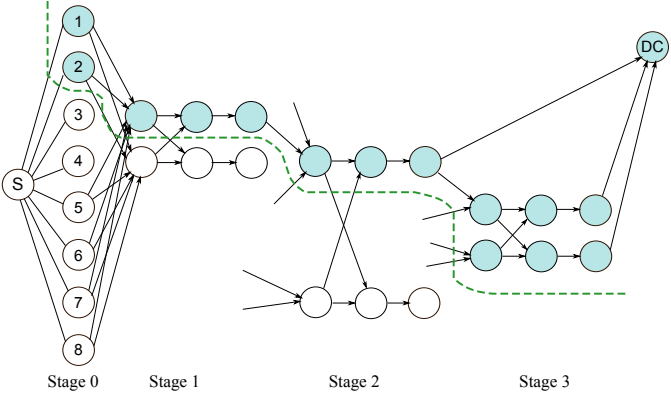}
\caption{A cut of type $(2,1,1,2)$ in a distributed storage system with parameters $d=6$, $k=6$ and $r=2$.}
\label{fig:type}
\end{figure}

\begin{theorem} For any $(s+1)$-tuples of integers $(\ell_0,\ell_1,\ldots, \ell_s)$ satisfying $\sum_{j=0}^s \ell_j = k$ and $1\leq \ell_j \leq r$ for  $j\geq 1$, the file size $B$ is upper bounded by
\begin{equation}
\ell_0 \alpha +\sum_{j=1}^s \Big[\ell_j \big(d-\sum_{i=0}^{j-1}\ell_i\big)\beta_1 + \ell_j(r-\ell_j)\beta_2 \Big].
 \label{eq:cut_capacity0}
\end{equation}
\label{thm:cut0}
\end{theorem}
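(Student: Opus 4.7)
My plan is to exhibit, for an appropriate information flow graph $G$ and data collector $\DC$, a cut of type $(\ell_0, \ell_1, \ldots, \ell_s)$ whose capacity equals the right-hand side of \eqref{eq:cut_capacity0}, and then appeal to the max-flow bound \eqref{eq:maxflow2}.

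I would construct $G$ as follows. Partition $\{1, 2, \ldots, k\}$ into disjoint blocks $\mathcal{N}_0, \mathcal{N}_1, \ldots, \mathcal{N}_s$ of sizes $\ell_0, \ell_1, \ldots, \ell_s$; let the block $\mathcal{N}_0$ never fail, and let the failure set at stage $j$ be $\mathcal{R}_j := \mathcal{N}_j \cup \mathcal{D}_j$, where $\mathcal{D}_j$ consists of $r - \ell_j$ auxiliary nodes drawn from $\{k+1, \ldots, n\}$ (possible since $n \geq d + r$). Setting $L_{j-1} := \ell_0 + \ell_1 + \cdots + \ell_{j-1}$, I would choose the helper set for every $i \in \mathcal{R}_j$ to contain all of $\mathcal{N}_0 \cup \cdots \cup \mathcal{N}_{j-1}$, which is feasible because $L_{j-1} \leq k - 1 \leq d - 1$ and there are at least $n - r - L_{j-1} \geq d - L_{j-1}$ other surviving nodes available to fill the remaining $d - L_{j-1}$ helper slots. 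Finally I would place a single $\DC$ at stage $s$ connecting to the stage-$0$ Out vertices of $\mathcal{N}_0$ and, for each $j \geq 1$, to the stage-$j$ Out vertices of $\mathcal{N}_j$.

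Taking $\mathcal{W}$ to consist of $\DC$, the stage-$0$ Out vertices of $\mathcal{N}_0$, and the triple $\{\In_i, \Mid_i, \Out_i\}$ for every $i \in \mathcal{N}_j$ at each stage $j \in \{1, \ldots, s\}$, the edges crossing from $\mathcal{W}^c$ into $\mathcal{W}$ are of three kinds. First, the $\ell_0$ capacity-$\alpha$ edges from $\Source$ into the Out vertices of $\mathcal{N}_0$, contributing $\ell_0 \alpha$. Second, for each $j \in \{1, \ldots, s\}$ and each of the $\ell_j$ selected newcomers in $\mathcal{N}_j$, the first-phase edges originating from helpers outside $\mathcal{W}$: by construction exactly $d - L_{j-1}$ of the $d$ helpers lie in $\mathcal{W}^c$, contributing $\ell_j(d - L_{j-1})\beta_1$ per stage. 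Third, for each such newcomer, the second-phase edges into its Mid vertex from the In vertices of the $r - \ell_j$ co-newcomers in $\mathcal{D}_j$, contributing $\ell_j(r - \ell_j)\beta_2$ per stage. All remaining edges either live inside $\mathcal{W}$ (the In-to-Mid, Mid-to-Out, and DC-incoming edges of the selected newcomers) or leave $\mathcal{W}$ for $\mathcal{W}^c$ and so do not count. Summing yields exactly \eqref{eq:cut_capacity0}, and by the max-flow-min-cut theorem applied to the finite subgraph from stage $-1$ to stage $s$ together with \eqref{eq:maxflow2} the claimed bound on $B$ follows.

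The only subtlety is justifying the helper choice in the second item above: the inequality $L_{j-1} \leq d$ needed to fit all previously absorbed nodes into the helper set uses $\ell_s \geq 1$ and $d \geq k$, while room for the remaining $d - L_{j-1}$ helpers reduces to the standing assumption $n \geq d + r$. Everything else is a straightforward edge-by-edge accounting.
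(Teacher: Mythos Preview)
Your proposal is correct and follows essentially the same approach as the paper: construct a specific information flow graph with a cut of the given type, verify its capacity equals \eqref{eq:cut_capacity0}, and invoke the max-flow bound. Your version is more explicit about the auxiliary failed nodes $\mathcal{D}_j$ and the helper-set feasibility checks, whereas the paper simply says one can ``re-connect the edges'' so that exactly $d-\sum_{i<j}\ell_i$ helpers lie outside $\mathcal{W}$, but the underlying construction and edge-by-edge accounting are the same.
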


(The value of $d-\sum_{i=0}^{j-1}\ell_i$ in \eqref{eq:cut_capacity0} is nonnegative, because the summation of the $\ell_i$'s is no larger than $k$, and $k$ is assumed to be less than or equal to $d$.)

\begin{proof}
Let $(\ell_0,\ell_1,\ldots, \ell_s)$ be an $(s+1)$-tuple satisfying the condition in the theorem. Since we take the minimum over all information flow graphs in the max-flow bound~\eqref{eq:maxflow1}, it suffices to show that there exists an information flow graph $G$, in which we can find a cut of type $(\ell_0,\ell_1,\ldots, \ell_s)$, whose capacity is equal to~\eqref{eq:cut_capacity0}.  Then it follows that the supported file size is less than or equal to~\eqref{eq:cut_capacity0}.

Consider an information flow graphs and the cut $(\mathcal{W}^c, \mathcal{W})$ described as in the beginning of this section. The capacities of the edges terminating at the $\ell_0$ ``out'' vertices at stage~0 in $\mathcal{W}$ sum to $\ell_0 \alpha$. This is the first term in~\eqref{eq:cut_capacity0}. For $j=1$, consider an ``in'' vertex at stage 1 in $\mathcal{W}$. We can re-connect the edges terminating at this ``in'' vertex so that there are exactly $d-\ell_0$ edges which are emanating from some ``out'' vertices in $\mathcal{W}^c$. Thus, The $\ell_1$ ``in'' vertices contribute $\ell_1 (d-\ell_0) \beta_1$ to the summation in~\eqref{eq:cut_capacity0}. The term $\ell_1(r-\ell_1)\beta_2$ is the sum of the edge capacities to the ``mid'' vertices in $\mathcal{W}$ at stage~1.

For $j=2,\ldots, s$, we can re-arrange the edges if necessary, so that for each ``in'' vertices at stage $j$ in $\mathcal{W}$, there are exactly $d-\sum_{i=0}^{j-1} \ell_i$ edges which start from some  ``out'' vertices in $\mathcal{W}^c$. Then, the sum of capacities of the edges terminating at some vertices in $\mathcal{W}$ at stage $j$ is $\ell_j \big(d-\sum_{i=0}^{j-1}\ell_i\big)\beta_1 + \ell_j(r-\ell_j)\beta_2$. This completes the proof of Theorem~\ref{thm:cut0}.
\end{proof}

\begin{theorem}
If a data file of size $B$ is supported by a cooperative regenerating code with parameters $n$, $d$, $k$, $r$, $\alpha$, $\beta_1$ and $\beta_2$, then for $s=0,1,\ldots, k$, we have
\begin{equation}
1 \leq  \frac{\alpha}{B} (k-s)+  \frac{s\beta_1}{B} \Big[  d-k+\frac{s+1}{2}\Big] +\frac{\beta_2}{B} s(r-1) \label{eq:LP1}
\end{equation}
and
\begin{equation}
1 \leq \frac{\alpha}{B}(k-s)+
\frac{\beta_1}{B} \Big[  s(d-k) + \frac{s^2+\Psi_{s,r}}{2}\Big] +\frac{\beta_2}{B} (sr - \Psi_{s,r}),  \label{eq:LP2}
\end{equation}
where $\Psi_{s,r}$ is given in~\eqref{eq:Delta}.
\label{thm:LP}
\end{theorem}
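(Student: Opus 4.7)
Both inequalities are immediate consequences of Theorem~\ref{thm:cut0}, obtained by specializing the tuple $(\ell_0,\ell_1,\ldots,\ell_{s'})$ in two different but extreme ways; dividing the resulting upper bound on $B$ by $B$ produces the stated normalized inequalities.

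For the first inequality \eqref{eq:LP1}, I take $s' = s$, set $\ell_0 = k-s$, and set $\ell_j = 1$ for every $j = 1,\ldots,s$. Since $s \leq k$ and $r \geq 1$, the tuple satisfies $\sum_j \ell_j = k$ and $1 \leq \ell_j \leq r$. In \eqref{eq:cut_capacity0}, the coefficient of $\alpha$ is $k-s$, and the coefficient of $\beta_2$ is $\sum_{j=1}^s (r-1) = s(r-1)$. For the $\beta_1$ coefficient, $d - \sum_{i=0}^{j-1} \ell_i = d - (k-s) - (j-1)$, so it telescopes to
\[
\sum_{j=1}^s \bigl(d - k + s - (j-1)\bigr) \;=\; s(d-k+s) - \tfrac{s(s-1)}{2} \;=\; s\bigl[d-k+\tfrac{s+1}{2}\bigr].
\]
Dividing through by $B$ gives \eqref{eq:LP1}.

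For the second inequality \eqref{eq:LP2}, I pack the $s$ repaired nodes into as few stages as possible. Write $s = qr + t$ with $q = \lfloor s/r \rfloor$ and $0 \leq t < r$, so that $\Psi_{s,r} = qr^2 + t^2$. Take $\ell_0 = k-s$, $\ell_j = r$ for $j = 1,\ldots,q$, and (when $t>0$) $\ell_{q+1} = t$. The $\alpha$-coefficient is $k-s$, and since $r-\ell_j = 0$ whenever $\ell_j = r$, the $\beta_2$-coefficient collapses to $t(r-t) = tr - t^2 = sr - \Psi_{s,r}$, as required. The $\beta_1$-coefficient equals
\[
\sum_{j=1}^q r\bigl(d - (k-s) - (j-1)r\bigr) + t\bigl(d - (k-s) - qr\bigr),
\]
which, after using $s - qr = t$, simplifies to $s(d-k) + qrs + t^2 - r^2 q(q-1)/2$. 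Substituting $s = qr+t$ and $\Psi_{s,r} = qr^2 + t^2$ converts this into $s(d-k) + (s^2 + \Psi_{s,r})/2$; dividing by $B$ then yields \eqref{eq:LP2}.

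The only nontrivial step is the final algebraic identity $qrs + t^2 - r^2 q(q-1)/2 = (s^2 + \Psi_{s,r})/2$, which is a short arithmetic check given the definitions of $s$ and $\Psi_{s,r}$. Conceptually, the two choices are the two extremes allowed by Theorem~\ref{thm:cut0}: the first spreads the $s$ repaired nodes across $s$ separate single-newcomer stages, maximizing the $\beta_2$-edges crossing the cut, while the second packs them into the fewest possible stages, maximizing the $\beta_1$-edges. Neither bound dominates the other, and together they will serve as the linear-programming constraints solved in Section~\ref{sec:OP}.
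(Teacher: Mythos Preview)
Your proof is correct and follows essentially the same approach as the paper: both inequalities are obtained from Theorem~\ref{thm:cut0} by specializing to the cut types $(k-s,1,1,\ldots,1)$ and $(k-s,r,\ldots,r,t)$ respectively, and the algebraic simplifications you carry out match those in the paper's proof.
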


\begin{proof}
The upper bound in~\eqref{eq:LP1} comes from a cut of type
\[
(\ell_0,\ell_1,\ldots, \ell_s)=
 (k-s,\underbrace{1,1,\ldots,1}_s).
\]
The last $s$ components are all equal to~1.
The derivation of~\eqref{eq:LP1} follows from
\begin{align*}
&  \sum_{j=1}^s \ell_j  (d-\sum_{i=0}^{j-1} \ell_i)\\
 & =
(d-k+s)+(d-k+s-1)+\cdots + (d-k+1) \\
&= s\Big[d-k+ \frac{s+1}{2} \Big].
\end{align*}

The upper bound  in~\eqref{eq:LP2} comes from a cut of type
\[ (\ell_0,\ell_1,\ldots, \ell_{Q+1}) =
 (k-s, \underbrace{r,r,\ldots, r}_Q,R),
\]
where $Q$ and $R$ are defined as the quotient and remainder when we divide $s$ by $r$, respectively. ($Q$ and $R$ are integers satisfying $s=Qr+R$ and $0\leq R < r$.)

Straightforward calculations show that
\begin{align*}
&\phantom{=} \sum_{j=1}^{Q+1} \ell_j  (d-\sum_{i=0}^{j-1} \ell_i)  \\
&= Qr\Big[d-k+s- \frac{(Q-1)r}{2}\Big] + (d-k+s-Qr)R \\
&= s(d-k+s) + \frac{1}{2}(Qr^2 - Q^2r^2-2QrR) \\
&= s(d-k+s) + \frac{1}{2}(Qr^2 - s^2 + R^2) \\
&= s(d-k) + \frac{1}{2}(s^2+\Psi_{s,r}).
\end{align*}
We have used the notation $\Psi_{s,r}=Qr^2+R^2$. On the other hand, we have
\begin{align*}
\sum_{j=1}^{Q+1} \ell_j(r-\ell_j)  = R(r-R)
&=(s-Qr)r-R^2 = sr-\Psi_{s,r}.
\end{align*}
This proves the inequality in~\eqref{eq:LP2}.
\end{proof}

{\em Remarks:}\

(i) When $s=0$, the two inequalities in \eqref{eq:LP1} and~\eqref{eq:LP2} are identical and can be simplified to
\begin{equation}B \leq k \alpha. \label{eq:LP3}
\end{equation}

(ii) When $s=1$, \eqref{eq:LP1} and~\eqref{eq:LP2} are also identical and can be written as
\begin{equation}
 B \leq (k-1)\alpha + (d-k+1)\beta_1 + (r-1)\beta_2.
 \label{eq:LP4}
\end{equation}

(iii) We note that the coefficients of $\alpha$, $\beta_1$ and $\beta_2$ in \eqref{eq:LP1} and \eqref{eq:LP2} are non-negative.

(iv) In the special case of a single-loss repair, i.e., when $r=1$, the coefficients of $\beta_2$ in \eqref{eq:LP1} and \eqref{eq:LP2} vanish.

\noindent  {\bf Example:} We can now show that the example of the cooperative regenerating code mentioned in the introductory section is optimal. The system parameters are $B=4$, $\alpha=2$ and $d=k=r=2$.
After putting $s=1$ and $s=2$ in~\eqref{eq:LP2}, we get
\begin{align*}
4 & \leq 2 +\beta_1 + \beta_2, \\
4 & \leq 4\beta_1.
\end{align*}
(We have used the fact that $\Psi_{1,2} = 1$ and $\Psi_{2,2} = 4$.)
If we want to minimize the repair bandwidth $\gamma = 2\beta_1+\beta_2$, over the region $\beta_1 \geq 1$ and $\beta_1+\beta_2 \geq 2$ in the $\beta_1$-$\beta_2$ plane,
the optimal solution is attained at $(\beta_1^*,\beta_2^*)=(1,1)$. The optimal repair bandwidth is thus equal to~$2\beta_1^*+\beta_2^*=3$. The above analysis also shows that if the repair bandwidth is equal to the optimal value~3, the values of $\beta_1$ and $\beta_2$ must both be equal to 1. This is indeed the case in the example given in the introduction.

We note that the bounds in \eqref{eq:LP1} and \eqref{eq:LP2} only depend on the ratios $\alpha/B$, $\beta_1/B$  and $\beta_2/B$.
This motivates the following linear programming problem, with the ratios $\beta_1/B$ and $\beta_2/B$ as variables.

\noindent  {\bf Definitions:} Let $\tilde{\alpha} := \alpha/B$, $\tilde{\beta}_1 := \beta_1/B$, $\tilde{\beta}_2 := \beta_2/B$, and $\tilde{\gamma} := \gamma/B$ be the normalized values of $\alpha$, $\beta_1$, $\beta_2$ and $\gamma$, respectively. Consider the following optimization problem:
\begin{equation}
 \text{Minimize } \tilde{\gamma} = d\tilde{\beta}_1 + (r-1)\tilde{\beta}_2 \label{eq:LP_objective}
\end{equation}
\begin{gather*}
\text{subject to } \text{\eqref{eq:LP1} and \eqref{eq:LP2}  for } s=1,2,\ldots,k, \text{ and } \\
\tilde{\beta}_1, \tilde{\beta}_2\geq 0.
\end{gather*}
This is a  parametric linear programming problem with $\tilde{\alpha}$ being the parameter. Let
$\gamma^*_{\mathrm{LP}}(\tilde{\alpha})$
be the optimal value of this linear program, and
\begin{align} \mathcal{C}_{\mathrm{LP}}(k,d,r) &
:= \{(\tilde{\gamma},\tilde{\alpha})\in\mathbb{R}^2:\,
 \text{ the linear program in } \notag  \\
& \qquad \text{\eqref{eq:LP_objective} has feasible solution }(\tbeta_1,\tbeta_2) \text{ and } \notag \\
& \qquad \tilde{\gamma} = d\tilde{\beta}_1 + (r-1)\tilde{\beta}_2 \} . \label{eq:C_LP}
\end{align}

The region $\mathcal{C}_{\mathrm{LP}}$ is a convex region. Suppose $(\tgamma,\talpha)$ and $(\tgamma',\talpha')$ are in $\mathcal{C}_{\mathrm{LP}}$. This means that we can find $(\tbeta_1, \tbeta_2)$ (resp. $(\tbeta_1',\tbeta_2'))$ satisfying the linear constraints \eqref{eq:LP1} and \eqref{eq:LP2} of the linear program with parameter $\talpha$ (resp. $\talpha'$) for $s=1,2,\ldots,k $, such that $\tgamma = d \tbeta_1 + (r-1)\tbeta_2$ (resp. $\tgamma' = d \tbeta_1' + (r-1)\tbeta'_2$). If
$(\tgamma'',\talpha'') = \lambda (\tgamma,\talpha) + \lambda' (\tgamma',\talpha')$
is a linear combination of $(\tgamma,\talpha)$ and $(\tgamma',\talpha')$, for some constant $0\leq \lambda, \lambda'\leq 1$ and $\lambda+\lambda'=1$, then $\lambda (\tbeta_1, \tbeta_2) + \lambda' (\tbeta_1', \tbeta_2')$ satisfies  the constraints of the linear program with parameter $\lambda \talpha + \lambda' \talpha'$.

At this point, we have established the following relationship
\begin{equation}
\mathcal{C}_{\mathrm{LP}} \supseteq \mathcal{C}_{\mathrm{MF}} \supseteq \mathcal{C}_{\mathrm{AD}} .
\label{eq:inclusion}
\end{equation}
The second inclusion follows from the max-flow bound in network coding, and the first from a weaker form of max-flow-min-cut theorem, namely, the value of any flow is no larger than the capacity of any cut (the weak duality theorem). In the formulation of the linear program, we only consider some specific cuts in the information flow graph. Not all possible cuts are taken into account. Nevertheless, in later sections, we will show by other means that equalities hold in~\eqref{eq:inclusion}.

The bound in Theorem~\ref{thm:LP} is based on the assumption that the download traffic is homogeneous, meaning that a newcomer downloads equal amount of data from $d$ surviving nodes, and each pair of newcomers exchanges equal amount of data. In Appendix~\ref{app:asym}, we show that  at the minimum-storage point, the relaxation of the homogeneity in download traffic does not help in further reducing the repair bandwidth. In the remaining of this paper, we will assume that the download traffic is homogeneous.

\noindent {\bf Example:} Consider a cooperative regenerating code with parameters $d=5$, $k=4$ and $r=3$. The number of nodes $n$ can be any integer larger than or equal to~8. We have following constraints based on~\eqref{eq:LP1} and~\eqref{eq:LP2}:
\begin{equation}
\begin{bmatrix}1\\1\\1\\1\\1\\1\\1\\1 \end{bmatrix} \leq
\begin{bmatrix}
4 & 0 & 0 \\
3& 2 & 2 \\
2&6&2\\
2&5&4\\
1&12& 0\\
1&9&6\\
0&17&2\\
0&14&8
\end{bmatrix}
\begin{bmatrix} \tilde{\alpha} \\ \tilde{\beta_1} \\ \tilde{\beta_2} \end{bmatrix},
\label{eq:LP543}
\end{equation}
with the inequality being understood componentwise. We consider the case when $\tilde{\alpha}=1/4$, i.e., the minimum-storage case. We minimize
$d\tilde{\beta}_1 + (r-1)\tilde{\beta}_2$ subject to $\tilde{\beta}_1, \tilde{\beta}_2 \geq 0$ and the constraints in~\eqref{eq:LP543} by linear programming. The linear constraints and the objective function are illustrated graphically in Fig.~\ref{fig:LP}. The seven solid lines (in blue color) in Fig.~\ref{fig:LP} are the boundary of the half planes associated with the seven constraints (row 2 to row 8) in~\eqref{eq:LP543}. The objective function is shown as a dashed line (in red) passing through the optimal point. The feasible region is the area to the right and above these seven straight lines. The optimal solution $\tilde{\beta}_1=\tilde{\beta}_2 = 0.625$
is indicated by the square in Fig.~\ref{fig:LP}. The optimal repair bandwidth is
\[
\gamma_{\mathrm{LP}}^*(1/4)=  (d+r-1)\cdot 0.625 = (5+3-1)\cdot 0.625 = 4.375.
\]

We take note of a few points on the line $\tilde{\beta}_1=2\tilde{\beta}_2$ in Fig.~\ref{fig:LP}, which will play an important role in solving the linear programming explicitly in the next section.
The point $P_1$ is the intersection point of the straight line associated with row 2, i.e.,  $1=3\tilde{\alpha}+2\tilde{\beta}_1+2\tilde{\beta}_2$, and the line $\tilde{\beta}_1=2\tilde{\beta}_2$. The point $P_2$ is the intersection point of the straight lines associated with rows 3 and 4 in~\eqref{eq:LP543}. The point $P_3$ is the intersection point of the straight lines associated with rows 5 and 6, and so on.

\begin{figure}
\centering
\includegraphics[width=3.7in]{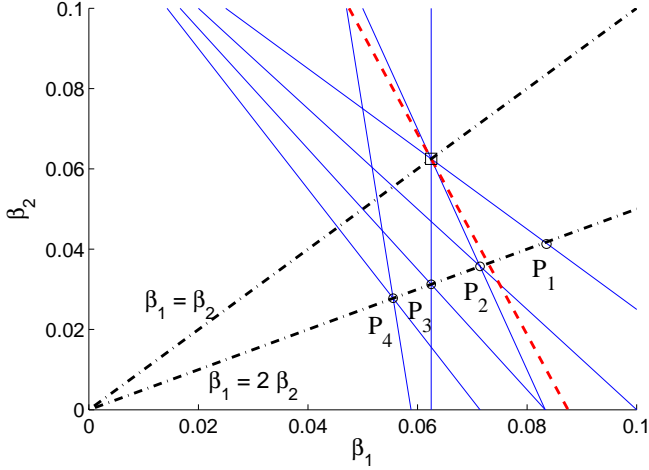}
\caption{Repair bandwidth minimization as a linear program ($d=5$, $k=4$, $r=3$, and $\alpha=1/4$). In this figure, the file size $B$ is normalized to~1, so that $\beta_1$ and $\beta_2$ are the same as $\tilde\beta_1$ and $\tilde\beta_2$, respectively. The objective function $5\tilde\beta_1+2\tilde\beta_2$ is minimized at $\tilde\beta_1 = \tilde\beta_2 = 0.625$. }
\label{fig:LP}
\end{figure}

\section{Solving the Parametric Linear Program}
\label{sec:OP}

In a general parametric linear program with  variables $\mathbf{x} = (x_1,x_2,\ldots, x_n)$, we want to minimize the dot product of $\mathbf{x}$ and a coefficient vector $\mathbf{c}$, subject to $\mathbf{A} \mathbf{x} = \mathbf{b} + \lambda \mathbf{b}^*$, where $\lambda$ is a real-valued parameter, $\mathbf{A}$ is an $m\times n$ matrix, and $\mathbf{b}$ and $\mathbf{b}^*$ are $m$-dimensional vectors. It is well-known that the optimal value of a parametric linear program is a piece-wise linear convex function of the parameter $\lambda$~\cite{KattaMurty}.

The linear program~\eqref{eq:LP_objective} in the previous section is parametric, with
$\talpha$ as the parameter. For a given value of $\talpha,$ and we want to minimize $d\tbeta_1+(r-1)\tbeta_2$, subject to  the constraints \eqref{eq:LP1} and \eqref{eq:LP2}, for $s=1,2,\ldots,k $, and $\tbeta_1,\tbeta_2\geq 0$. The optimal value $\gamma_{\mathrm{LP}}^*(\talpha)$ is a piece-wise linear convex function of the parameter $\talpha$. If $\talpha < 1/k$, the constraint in~\eqref{eq:LP3} is violated. Thus $\gamma^*_{\mathrm{LP}}(\talpha) = \infty$ for $\talpha < 1/k$.
As $\talpha$ increases, the feasible region of the linear program is enlarged, and thus $\gamma_{\mathrm{LP}}^*(\talpha)$ is monotonically non-increasing as a function of~$\talpha$.

Consider the boundary of $\mathcal{C}_{\mathrm{LP}}$, which is the piece-wise linear graph
\[
\{(\gamma_{\mathrm{LP}}^*(\talpha), \talpha):\, \talpha \geq 1/k \} \cup \{(\gamma_{\mathrm{LP}}^*(1/k)+c, 1/k):\, c \geq 0\}.
%\label{eq:LPboundary}
\]
An operating point $ (\gamma_{\mathrm{LP}}^*(\talpha), \talpha)$ on the boundary of $\mathcal{C}_{\mathrm{LP}}$ is called a {\em corner point} if there is a change of slope,
\[
\frac{ \gamma_{\mathrm{LP}}^*(\talpha+h) - \gamma_{\mathrm{LP}}^*(\talpha)} {h} >
\frac{ \gamma_{\mathrm{LP}}^*(\talpha-h) - \gamma_{\mathrm{LP}}^*(\talpha)} {(-h)},
\]
for all sufficiently small and positive $h$.
In this section we derive all corner  points of the parametric linear program~\eqref{eq:LP_objective}.

\noindent  {\bf Definitions:} For $j=1,2,\ldots, k$, we let $L_j(\tilde{\alpha})$ be the straight line in the $\tilde{\beta}_1$-$\tilde{\beta}_2$ plane with equation
\begin{equation}
1 = (k-j)\tilde{\alpha} + \Big(j(d-k)+ \frac{j^2+\Psi_{j,r}}{2}\Big)\tilde{\beta}_1   +(jr-\Psi_{j,r})\tilde{\beta}_2 \tag{\ref{eq:LP2}'}
\end{equation}
and $L_j'(\talpha)$ be the straight line with equation
\begin{equation}
1 = (k-j)\tilde{\alpha}+ \Big(j(d-k)+ \frac{j^2+j}{2}\Big)\tilde{\beta}_1 +(jr-j)\tilde{\beta}_2, \tag{\ref{eq:LP1}'}
\end{equation}
where $\Psi_{j,r}$ is defined in~\eqref{eq:Delta}.

When $r=1$, we note that for all $j=1,2,\ldots, k$, the lines $L_j(\tilde{\alpha})$ and  $L_j'(\tilde{\alpha})$ coincide, and they are vertical lines in the $\tbeta_1$-$\tbeta_2$ plane (because $\Psi_{j,1}=j$).

We record some geometric facts in the following lemma.

\begin{lemma} Suppose $r>1$.

\begin{enumerate}

\item \label{lemma:itemA} For $j=1,2,\ldots, k$, the magnitude of the slope of $L_j(\talpha)$ is equal to $\mu(j)$.

\item \label{lemma:itemB} For $j=1,2,\ldots k$, the slope of  line $L_j'(\tilde{\alpha})$ is
\[
- \frac{d-k+(j+1)/2}{r-1},
\]
and the magnitude is strictly less than $d/(r-1)$.

\item \label{lemma:itemC} If $r$ divides $j$, then the slope of the line $L_j(\tilde{\alpha})$ is infinite.

\item \label{lemma:itemD} The line $L_1(\tilde{\alpha})$ is identical to the line $L_1'(\tilde{\alpha})$, and the slope has magnitude $\mu(1) < d/(r-1)$.

\item \label{lemma:itemE} For  $j=2,3,\ldots, k$,
the magnitude of the slope of $L_j(\talpha)$ is strictly larger than the magnitude of the slope of $L'_j(\talpha)$.  $L_j(\talpha)$  and $L_j'(\talpha)$ intersect at a point lying on the line $\tbeta_1 = 2\tbeta_2$ in the $\tbeta_1$-$\tbeta_2$ plane.

\item \label{lemma:itemF} $\mu(k) > d/(r-1)$.
\end{enumerate}
\label{lemma:P_j}
\end{lemma}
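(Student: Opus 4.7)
The plan is to verify each of the six parts by direct algebraic computation from the defining equations of $L_j(\talpha)$ and $L_j'(\talpha)$. All six assertions are elementary statements about ratios of the coefficients of $\tbeta_1$ and $\tbeta_2$, so the work is essentially careful bookkeeping.

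For parts (1)--(4), everything is immediate. Reading off the coefficients in the defining equation of $L_j(\talpha)$ yields slope magnitude
\[
\frac{j(d-k)+(j^2+\Psi_{j,r})/2}{jr-\Psi_{j,r}} = \mu(j),
\]
and similarly the slope magnitude of $L_j'(\talpha)$ is $(d-k+(j+1)/2)/(r-1)$. The bound in (2) follows from $(j+1)/2 \leq (k+1)/2 < k$ for $k \geq 2$. Part (3) holds because $r \mid j$ forces $\Psi_{j,r} = jr$, so the $\tbeta_2$-coefficient vanishes. For (4), one has $\Psi_{1,r} = 1$, so both defining equations collapse to $1 = (k-1)\talpha + (d-k+1)\tbeta_1 + (r-1)\tbeta_2$, and the slope bound is immediate from the estimate used in (2).

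For part (5), cross-multiplying $\mu(j) > (d-k+(j+1)/2)/(r-1)$ gives $AB' > A'B$, where $A,B$ are the coefficients of $\tbeta_1$ and $\tbeta_2$ in $L_j(\talpha)$ and $A',B'$ are the analogous coefficients in $L_j'(\talpha)$. Expanding and regrouping produces the clean factorization
\[
AB' - A'B = j(\Psi_{j,r}-j)\bigl[(d-k) + (r+j)/2\bigr],
\]
which is strictly positive for $j \geq 2$ since the remark in the paper supplies $\Psi_{j,r} > j$ in that range, together with $d \geq k$. For the intersection claim, subtracting the equation of $L_j'(\talpha)$ from that of $L_j(\talpha)$ eliminates both the $\talpha$ term and the constant, leaving $(\Psi_{j,r}-j)(\tbeta_2 - \tbeta_1/2) = 0$, which forces $\tbeta_1 = 2\tbeta_2$ whenever $\Psi_{j,r} > j$.

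The main obstacle is part (6). Cross-multiplying $\mu(k)(r-1) > d$ reduces to the arithmetic claim
\[
2d(\Psi_{k,r}-k) + (r-1)(\Psi_{k,r}-k^2) > 0.
\]
I would split into cases. If $k \leq r$ then $\Psi_{k,r}=k^2$, so the second term vanishes and the first alone gives $2dk(k-1) > 0$. If $k > r$, write $k = qr + R$ with $q \geq 1$ and $0 \leq R < r$; a short computation gives $\Psi_{k,r}-k = qr(r-1) + R(R-1) > 0$, hence the whole expression is strictly increasing in $d$, and it suffices to check the inequality at $d = k$. Substituting $d = k$ and simplifying should produce a sum of the form $(r-1)qr^2(q+1) + 2(qr+R)R(R-1)$, which is manifestly positive for $r \geq 2$. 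Handling the $R = 0$ subcase separately, where the second summand vanishes but the first is still at least $2r^2 q(q+1)$, is the only remaining bookkeeping step.
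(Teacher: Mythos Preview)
Your argument is correct and, for parts (1)--(5), essentially identical to the paper's proof: the paper also reads off coefficients, computes the same determinant $j(\Psi_{j,r}-j)[d-k+(r+j)/2]$ for part~(5), and subtracts the two defining equations to obtain $\tbeta_1=2\tbeta_2$.

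For part~(6) the two arguments start from the same cross-multiplied inequality but diverge in the $k\geq r$ case. The paper keeps the determinant form, replaces the entry $R(r-R)$ by the larger $R(r-1)$, and then simplifies the resulting expression directly to $(r-1)[(k-r)(d-k)+Q^2r^2/2+Qr^2/2]$. You instead observe that the coefficient of $d$ in $2d(\Psi_{k,r}-k)+(r-1)(\Psi_{k,r}-k^2)$ is $2(\Psi_{k,r}-k)=2[qr(r-1)+R(R-1)]>0$, reduce to $d=k$, and obtain the closed form $(r-1)qr^2(q+1)+2kR(R-1)$. Your route is arguably a bit cleaner, since it avoids the bounding trick. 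Two minor cleanups: the subcase $r\mid k$ (including $k=r$ in your first case) gives $\mu(k)=\infty$ directly and should be disposed of before cross-multiplying, since the denominator $kr-\Psi_{k,r}$ vanishes there; and your stated lower bound ``at least $2r^2q(q+1)$'' for the first summand is off when $r=2$ (it is $(r-1)qr^2(q+1)$), though positivity is all you need and that is immediate.
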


\begin{proof}\

\ref{lemma:itemA})\ Obvious.

\ref{lemma:itemB})\ The slope of the line $L_j'(\talpha)$ has magnitude
\[
  \frac{j(d-k)+(j^2+j)/2}{jr-j} =  \frac{d-k+(j+1)/2}{r-1},
\]
which is strictly less than $d/(r-1)$ for $r\geq 2$ and  $j\leq k$.

\ref{lemma:itemC}) It follows from the fact that $\Psi_{j,r}=jr$ if $r$ divides $j$.

\ref{lemma:itemD}) When $j=1$, we have $\Psi_{j,r}=j=1$ for all $r \geq 2$.

\ref{lemma:itemE})
For $j=2,3,\ldots, k$, the determinant
\[
\begin{vmatrix}
j(d-k)+\frac{j^2+\Psi_{j,r}}{2} & jr - \Psi_{j,r} \\
j(d-k)+\frac{j^2+j}{2} & jr - j
\end{vmatrix}
%\begin{bmatrix} \beta_1 \\ \beta_2 \end{bmatrix}
%=
%\begin{bmatrix} B-(k-j)\alpha \\ B- (k-j) \alpha \end{bmatrix}.
\]
is equal to
\[
j(\Psi_{j,r}-j)[d-k +(r+j)/2].
\]
Since $\Psi_{j,r}>j$ for $j\geq 2$, and $d \geq k$ by assumption, the determinant is positive, and thus the magnitude of the slope of $L'(\talpha)$ is strictly larger than the magnitude of the slope of $L(\talpha)$. By subtracting (\ref{eq:LP1}') from (\ref{eq:LP2}'), we obtain $\tbeta_1 = 2\tbeta_2$ after some simplifications.

\ref{lemma:itemF}) The inequality $\mu(k) > d/(r-1)$ is equivalent to
\begin{equation}
\begin{vmatrix}
  k(d-k) + \frac{k^2+\Psi_{k,r}}{2} & kr-\Psi_{k,r} \\
  d & r-1
  \end{vmatrix} > 0.
\label{eq:itemF}
\end{equation}
To prove the above inequality, we distinguish two cases:

{\em Case 1, $k<r$:} We have $\Psi_{k,r} = k^2$ in this case. Hence, the determinant in~\eqref{eq:itemF} can be simplified to
\[
\begin{vmatrix}
  kd & kr-k^2 \\
  d & r-1
  \end{vmatrix} = dk(k-1) ,
\]
which is clearly positive.

{\em Case 2, $k\geq r$:} Write $k=Qr+R$, where $Q$ and $R$ are, respectively, the quotient and the remainder we obtain when $k$ is divided by $r$. Since $k\geq r$ by assumption, we have $Q\geq 1$. The determinant in~\eqref{eq:itemF} becomes
\[
\begin{vmatrix}
k(d-k) + (k^2+\Psi_{k,r})/2 & R(r-R) \\
d & r-1
\end{vmatrix}.
\]
If $R=0$, then the determinant is
$$(r-1)(k(d-k)+(k^2+\Psi_{k,r})/2) > 0.$$
(Recall that we assume $d \geq k$ in this paper.)

For $1\leq R < r$, this determinant can be lower bounded by
\begin{align*}
&\begin{vmatrix}
k(d-k) + (k^2+\Psi_{k,r})/2 & R(r-1) \\
d & r-1
\end{vmatrix} \\
&= (r-1)[k(d-k) + (k^2+\Psi_{k,r})/2 - Rd] \\
&= (r-1)[(k-r)(d-k) + (k^2+\Psi_{k,r})/2 - Rk ]\\
&= (r-1)\Big[ (k-r)(d-k) + \frac{Q^2r^2 + Q r^2}{2} \Big] > 0.
\end{align*}
This completes the proof of $\mu(k) > d/(r-1)$.
\end{proof}

\smallskip

Motivated by part \ref{lemma:itemD}) and \ref{lemma:itemE}) in the previous lemma, we make the following definition.

\noindent  {\bf Definitions:} For $j= 1,2,3,\ldots, k$, let $P_{j}(\talpha)$ be the intersection point of $L_j(\talpha)$, $L_j'(\talpha)$ and the line $\tbeta_1= 2\tbeta_2$.

\begin{lemma} For $j=1,2,\ldots, k$, the coordinates of $P_{j}(\talpha)$ in the $\tbeta_1$-$\tbeta_2$ plane is
\begin{equation}
 \frac{1-(k-j)\talpha}{j(2d-2k+r+j)} (2,1). \label{eq:P}
\end{equation}
\end{lemma}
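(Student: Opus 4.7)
The plan is a direct substitution: since part \ref{lemma:itemE} of Lemma~\ref{lemma:P_j} has already established that $L_j(\talpha)$ and $L_j'(\talpha)$ intersect on the line $\tbeta_1 = 2\tbeta_2$, it suffices to find the point on $\tbeta_1 = 2\tbeta_2$ that lies on $L_j'(\talpha)$ (or, equivalently, on $L_j(\talpha)$); the third incidence is then automatic. So I would simply substitute $\tbeta_1 = 2\tbeta_2$ into the defining equation $(\ref{eq:LP1}')$ of $L_j'(\talpha)$ and solve for $\tbeta_2$.

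Carrying this out, the substitution turns $(\ref{eq:LP1}')$ into
\[
1 - (k-j)\talpha = \Big[\,2\big(j(d-k)+\tfrac{j^2+j}{2}\big) + (jr-j)\,\Big]\,\tbeta_2.
\]
Expanding and collecting the bracketed coefficient gives $2j(d-k) + (j^2+j) + (jr-j) = j(2d-2k+j+r)$. Hence
\[
\tbeta_2 = \frac{1-(k-j)\talpha}{j(2d-2k+r+j)}, \qquad \tbeta_1 = 2\tbeta_2,
\]
which is precisely \eqref{eq:P}.

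There is no real obstacle here; the only thing to check is that the derivation is consistent, i.e.~that the point so obtained also satisfies $(\ref{eq:LP2}')$. This is guaranteed by Lemma~\ref{lemma:P_j}\ref{lemma:itemE}, which subtracts $(\ref{eq:LP1}')$ from $(\ref{eq:LP2}')$ to produce exactly the relation $\tbeta_1 = 2\tbeta_2$; thus any point on $\tbeta_1 = 2\tbeta_2$ lying on one of the two lines automatically lies on the other. A one-sentence remark to this effect will close the argument.

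For the degenerate case $r = 1$, one should note that $\Psi_{j,1} = j$ makes $L_j$ and $L_j'$ coincide and vertical, so the intersection with $\tbeta_1 = 2\tbeta_2$ is still well-defined and given by the same formula (the coefficient $j(2d-2k+r+j)$ remains nonzero under the standing assumptions $d \geq k \geq 2$); this is worth mentioning for completeness but requires no separate computation.
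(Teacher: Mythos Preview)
Your proposal is correct and takes exactly the same approach as the paper: the paper's proof is the single sentence ``Put $\tbeta_1 = 2\tbeta_2$ in (\ref{eq:LP1}$'$),'' and you carry this out explicitly with the supporting justification from Lemma~\ref{lemma:P_j}\ref{lemma:itemE} that the paper leaves implicit. Your additional remarks on consistency and the $r=1$ case are more than the paper supplies, but entirely in the same spirit.
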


\begin{proof}
Put $\tbeta_1 = 2\tbeta_2$ in (\ref{eq:LP1}').
\end{proof}

For $d=5$, $k=4$ and $r=3$, the points $P_j(1/4)$, for $j=1,2,3,4$, are shown in Fig.~\ref{fig:LP}.
By the above lemma, we can explicitly calculate their coordinates:
\begin{align*}
P_1(1/4) &= (1/12,1/24)=(0.0833,0.0417), \\
P_2(1/4) &= (1/14,1/28)=(0.0714,0.0357), \\
P_3(1/4) &= (1/16,1/32)=(0.0625,0.0313), \\
P_4(1/4) &= (1/18,1/36)=(0.0556,0.0278) .
\end{align*}

From the expression \eqref{eq:P}, we observe that if we increase $\talpha$ gradually, the points $P_1(\talpha)$ to $P_k(\talpha)$ will ``slide down'' along the line $\tbeta_1=2\tbeta_2$ with various speed. In the following, we compute the value of $\talpha$ such that $P_j(\talpha)$ and $P_{j-1}(\talpha)$ coincide, for $j=2,3,\ldots, k$. It suffices to solve the following system of two linear equations
\begin{align*}
1 &= (k-j) \talpha + \Big( j(d-k) + \frac{j^2+j}{2} \Big) \tbeta_1+ j(r-1) \frac{\tbeta_1}{2}, \\
1 &= (k-j+1) \talpha + \Big( (j-1)(d-k) + \frac{(j-1)^2+j-1}{2} \Big) \tbeta_1 \\
& \qquad + (j-1)(r-1) \frac{\tbeta_1}{2}.
\end{align*}
for $\talpha$ and $\tbeta_1$. The short-hand notation $D_j$ defined in \eqref{eq:Dj} is precisely the determinant of this system of equations. We can write the solution as
\begin{align*}
\tbeta_1 &=\frac{1}{k(d-k+j+\frac{r-1}{2})-\frac{j(j-1)}{2}} = \frac{1}{D_j} , \\
\talpha &= \frac{1}{D_j} (d-k+j + \frac{r-1}{2} ).
\end{align*}
This gives the operating point of the first type in~\eqref{eq:extreme_point1}. For $j=2,3,\ldots, k$, the constant ${\talpha_j}$ in~\eqref{eq:tilde_alpha1} is defined such that
$$P_j({\talpha}_j) = P_{j-1}({\talpha}_j).$$
The corresponding repair bandwidth is
\[
 \tgamma = d\tbeta_1+(r-1)\tbeta_2 = \frac{1}{D_j} (d + \frac{r-1}{2} ).
\]
We have thus derived the  operating points of the first type.

\smallskip

For the operating points of the second type,
we begin with the observation that the lines $L_1(1/4)$, $L_2(1/4)$ and $L_3(1/4)$ in Fig.~\ref{fig:LP} intersect at the same point on the line $\tbeta_1=\tbeta_2$.
The operating points of the second type are obtained by generalizing this observation. For notational convenience, we let $L_0(\talpha)$ be the set of points in the $\tbeta_1$-$\tbeta_2$ plane satisfying the equation $1 = \talpha k$, i.e., it is either the whole plane if $\talpha = 1/k$ or the empty set if $\talpha > 1/k$.

\begin{lemma}
Let $\ell$ be an integer between 0 and $\lfloor k/r \rfloor$.
We can choose $\talpha$ such that $L_j(\talpha)$, for $j=\ell r$, $\ell r +1,\ldots, \ell r+r$, and the line $\tbeta_1=\tbeta_2$ have a common intersection point in the $\tbeta_1$-$\tbeta_2$ plane.
\label{lemma1}
\end{lemma}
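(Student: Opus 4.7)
The plan is to parameterize $j = \ell r + m$ for $m = 0, 1, \ldots, r$ and to show that, after restriction to the line $\tbeta_1 = \tbeta_2$, each of the $r+1$ equations defining $L_j(\talpha)$ becomes an \emph{affine} function of $m$. The common intersection point will then be forced by solving just two linear equations in the two unknowns $\talpha$ and $\tbeta_1$.

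First I would compute $\Psi_{j,r}$ for these specific values of $j$. Since $\lfloor(\ell r + m)/r\rfloor = \ell$ for $0 \leq m < r$ and $\lfloor(\ell r + r)/r\rfloor = \ell + 1$ with remainder zero, a direct check using \eqref{eq:Delta} gives $\Psi_{\ell r + m, r} = \ell r^2 + m^2$ uniformly for $0 \leq m \leq r$. Substituting this together with $\tbeta_1 = \tbeta_2$ into the defining equation of $L_j(\talpha)$, the restricted equation becomes
\begin{equation*}
1 \;=\; (k - \ell r - m)\,\talpha \;+\; \left[(\ell r + m)(d-k+r) + \tfrac{1}{2}\ell r^{2}(\ell-1) + \ell r m \right] \tbeta_1 .
\end{equation*}

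The key observation is that the $m^{2}$ contribution arising from $j^{2}$ is exactly cancelled by the $m^{2}$ term inside $\Psi_{j,r}$, so the displayed equation is affine in $m$. Collecting terms, it can be written as $1 = F(\talpha, \tbeta_1) + m \cdot G(\talpha, \tbeta_1)$ with $G(\talpha, \tbeta_1) = -\talpha + (d-k+r(\ell+1))\,\tbeta_1$ and $F(\talpha, \tbeta_1)$ being the $m=0$ right-hand side. Requiring this identity to hold simultaneously at two or more distinct values of $m$ forces $G = 0$ and $F = 1$, a non-degenerate $2 \times 2$ linear system whose unique solution is $\tbeta_1 = 1/D_\ell'$ and $\talpha = (d-k+r(\ell+1))/D_\ell' = \talpha_\ell'$, exactly the operating point of the second type.

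The main obstacle is the algebraic bookkeeping in the second step, namely verifying that $j^{2} - \Psi_{j,r}$ is linear rather than quadratic in $m$ when $j = \ell r + m$; this cancellation is what makes the affine form $1 = F + m G$ appear. Once it is secured, the conclusion follows from two-variable linear algebra. The boundary case $\ell = 0$ is consistent with the convention that $L_{0}(\talpha)$ is the degenerate relation $1 = k\talpha$: the $m = 0$ equation reduces to precisely $1 = k\talpha$, giving $\talpha = 1/k = \talpha_0'$ in agreement with the MSCR point.
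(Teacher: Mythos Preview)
Your proposal is correct and follows essentially the same approach as the paper: both write $j=\ell r+c$ (you use $m$), compute $\Psi_{\ell r+c,r}=\ell r^2+c^2$, substitute $\tbeta_1=\tbeta_2$ into the equation of $L_j(\talpha)$, observe that the quadratic term in $c$ cancels, and then solve the resulting two linear equations in $\talpha$ and $\tbeta_1$ to obtain $\tbeta_1=1/D_\ell'$ and $\talpha=\talpha_\ell'$. Your explicit framing as $1=F+mG$ and the remark on the degenerate case $\ell=0$ are nice touches, but the substance is identical to the paper's argument.
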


\begin{proof}
Let $j$ be an integer between $\ell r$ and $\ell r + r$. We write $j=\ell r + c$ for some integer $c$ in the range $0\leq c \leq r$. In terms of $r$ and $c$, we get
\[
 \Psi_{\ell r + c,r} = \ell r^2 + c^2.
\]
For $0 \leq c \leq r$,
we re-write the equation of $L_{\ell r+ c}(\talpha)$ in (\ref{eq:LP2}') as
\begin{align}
 1 &= (k-\ell r - c) \talpha + \Big( (\ell r + c) (d-k) \notag\\
 & \qquad  + \frac{\ell^2 r^2 + \ell r^2 + 2\ell r c + 2c^2}{2} \Big) \tbeta_1 + c(r-c) \tbeta_2. \label{eq:equation_second_type}
\end{align}
We want to prove that the above equation, for $c=0,1,\ldots, r$, and $\tbeta_1=\tbeta_2$, have a common solution.

After substituting $\tbeta_1 = \tbeta_2$, in \eqref{eq:equation_second_type}, we obtain
\begin{align*}
1 &= c\big(-\talpha +(d-k+r\ell+r)\tbeta_1 \big)  \\
& \qquad +(k-\ell r)\talpha + \Big(\ell r (d-k) + \frac{r^2\ell(\ell+1)}{2} \Big) \tbeta_1.
\end{align*}
We note that the terms involving $c^2$ in \eqref{eq:equation_second_type} are canceled.
If we take $\talpha = (d-k+r\ell+r)\tbeta_1$, we can eliminate $c$ in the above equation and get
\begin{align*}
1  &= (k-\ell r)(d-k+r\ell +r)\tbeta_1  \\
& \qquad \ + \Big(\ell r (d-k)
+ \frac{r^2\ell(\ell+1)}{2} \Big) \tbeta_1 ,
\end{align*}
which can be further simplified to
\[
\tbeta_1 = \frac{1}{k(d+r(\ell+1)-k) - \frac{r^2\ell(\ell+1)}{2}} = \frac{1}{D_\ell'}.
\]
Hence, when $\talpha = (d+r(\ell+1)-k)/D_\ell'$, the point $(1/D_\ell', 1/D_\ell')$ in the $\tbeta_1$-$\tbeta_2$ plane is a common intersection point of $L_j(\talpha)$, for $j=\ell r$, $\ell r + 1, \ldots, \ell r + r$.
\end{proof}

\noindent  {\bf Definition:} For $\ell=0,1,2,\ldots, \lfloor k/r\rfloor$, define $Q_\ell$ as the point
\begin{equation}
Q_\ell := (1/D_\ell', 1/D_\ell')
\label{eq:QL}
\end{equation}
in the $\tbeta_1$-$\tbeta_2$ plane.

The points $Q_\ell$, for $\ell=0,1,2,\ldots, \lfloor k/r\rfloor$, correspond to the operating points of the second type in~\eqref{eq:extreme_point2}. When $\ell =0 $, we have
\begin{align*}
Q_0 & = \Big(\frac{1}{k(d+r-k)}, \frac{1}{k(d+r-k)} \Big),
\end{align*}
which corresponds to the MSCR point $(\tgamma_{\mathrm{MSCR}},
\talpha_{\mathrm{MSCR}}) = ((d+r-1)/(k(d+r-k)),1/k)$.

An illustration is shown in Fig.~\ref{fig:t2}. The point $Q_1$ is the point marked by a square on the line $\tbeta_1 = \tbeta_2$. This is the common intersection point of $L_3(7/117)$, $L_4(7/117)$, $L_5(7/117)$ and $L_6(7/117)$. Lines $L_3(7/117)$ and $L_6(7/117)$ coincide.

\begin{figure}
\centering
\includegraphics[width=3.8in]{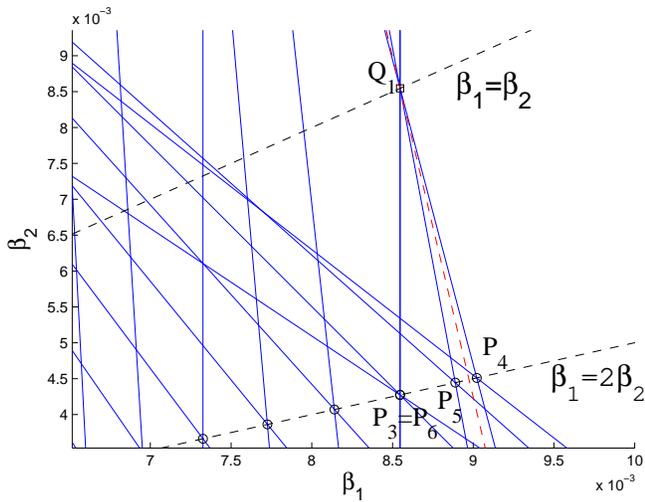}
\caption{The linear programming problem for $d=19$, $k=18$, $r=3$, $B=1$, and $\talpha = 7/117= 0.0598$. The file size $B$ is normalized to~1, so that $\beta_1$ and $\beta_2$ are the same as $\tilde\beta_1$ and $\tilde\beta_2$, respectively. The objective function $19\tbeta_1+2\tbeta_2$ is minimized at the point $Q_1 = (1/117, 1/117) = (0.00854,0.00854)$.}
\label{fig:t2}
\end{figure}

\begin{theorem}
The corner points of the parametric linear program in~\eqref{eq:LP_objective} are precisely the operating points in
\begin{equation}
\Big\{ (\tgamma_j, \talpha_j):\, j=2,3,\ldots, k-1, \ d\leq (r-1)/\mu(j)\Big\},
\end{equation}
and
\begin{equation}
\Big\{ (\tgamma_{\lfloor j/r\rfloor}', \talpha_{\lfloor j/r\rfloor}'):\, j=2,3,\ldots, k-1,\ d> (r-1)/\mu(j)\Big\},
\end{equation}
and $(\tgamma_k, \talpha_k)$ and $(\tgamma_0', \talpha_0')$.
\label{thm:cornerpoint}
\end{theorem}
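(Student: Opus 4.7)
The plan is to analyze the parametric LP geometrically in the $(\tbeta_1, \tbeta_2)$-plane. For each fixed $\talpha \geq 1/k$, the feasible region is the intersection of $\mathbb{R}_+^2$ with the half-planes bounded by the constraint lines $L_j(\talpha)$ and $L_j'(\talpha)$ for $j=1,\ldots,k$. The level sets of the objective $\tgamma = d\tbeta_1 + (r-1)\tbeta_2$ are straight lines of slope $-d/(r-1)$, and the minimum is attained at a vertex. A corner point of $\gamma_{\mathrm{LP}}^*$ occurs precisely at values of $\talpha$ where the identity of the optimal vertex changes. The key structural observation, made available by Lemma~\ref{lemma:P_j} and Lemma~\ref{lemma1}, is that the optimal vertex always lies on one of the two diagonals $\tbeta_1 = 2\tbeta_2$ (yielding first-type corner points) or $\tbeta_1 = \tbeta_2$ (yielding second-type corner points).

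First I would use Lemma~\ref{lemma:P_j}(\ref{lemma:itemB}) to rule out the lines $L_j'(\talpha)$ as a source of independent corners: since each $L_j'$ has slope-magnitude strictly less than $d/(r-1)$, a constraint of the form $L_j'$ can be binding at the optimum only when paired with a steeper constraint, and Lemma~\ref{lemma:P_j}(\ref{lemma:itemE}) places every such intersection on $\tbeta_1 = 2\tbeta_2$ at the point $P_j(\talpha)$. Hence whenever $L_j'$ is binding at the optimum, the binding pair is in fact $\{L_j(\talpha), L_j'(\talpha)\}$ and the optimum is $P_j(\talpha)$.

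Next I would handle the first-type corner points. Under the condition $\mu(j) > d/(r-1)$ (equivalently $d < (r-1)\mu(j)$), the line $L_j(\talpha)$ is steeper in magnitude than the objective; combined with the flatter $L_j'$, this pins the minimum along $L_j$ to the kink $P_j(\talpha)$. As $\talpha$ increases, all points $P_j(\talpha)$ slide along $\tbeta_1 = 2\tbeta_2$ at rates governed by formula~\eqref{eq:P}, and the LP optimum is the $P_j(\talpha)$ with the largest $\tbeta_1$-coordinate among indices satisfying the slope condition. A corner-point transition from $P_{j-1}(\talpha)$ to $P_j(\talpha)$ happens exactly when these two points coincide, which by construction is $\talpha = \talpha_j$, with associated objective value $\tgamma_j$ as already computed in Section~\ref{sec:OP}. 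For the second-type points I would invoke Lemma~\ref{lemma1}: at $\talpha = \talpha_\ell'$ the lines $L_{\ell r}, L_{\ell r+1}, \ldots, L_{\ell r+r}$ meet at $Q_\ell$ on $\tbeta_1 = \tbeta_2$. In the regime where $d \geq (r-1)\mu(j)$ for indices near $\ell r$, the objective is at least as steep as every $L_j$ in that index range, so the optimum cannot sit strictly between successive $P_j$'s on $\tbeta_1 = 2\tbeta_2$; it is instead pushed onto $\tbeta_1 = \tbeta_2$ and stays pinned at $Q_\ell$ across a full sub-interval of $\talpha$. Corners then appear at the jumps from $Q_{\ell-1}$ to $Q_\ell$. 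The two extreme points $(\tgamma_0', \talpha_0')$ (the MSCR point, at $\talpha = 1/k$) and $(\tgamma_k, \talpha_k)$ (the last first-type point, past which $\gamma_{\mathrm{LP}}^*$ becomes constant in $\talpha$) complete the enumeration.

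The main obstacle will be the exhaustiveness argument: one must verify that no pair of constraints other than those identified above can be simultaneously binding at an LP optimum, and that the transition between the first-type regime and the second-type regime at each index is controlled exactly by the sign of $d - (r-1)\mu(j)$. This reduces to a case analysis that compares $\mu(j)$ with $d/(r-1)$ for each $j$, using the explicit slope formulas in Lemma~\ref{lemma:P_j}(\ref{lemma:itemA})--(\ref{lemma:itemF}); the monotonicity of $P_j(\talpha)$ in $\talpha$ via~\eqref{eq:P} and the multi-way collapse in Lemma~\ref{lemma1} keep the bookkeeping tractable, although the adjacency relations between successive corner points on the two diagonals still require careful verification.
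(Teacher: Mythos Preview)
Your approach is essentially the paper's: geometric analysis in the $(\tbeta_1,\tbeta_2)$-plane, comparing the slope $d/(r-1)$ of the objective to the slopes $\mu(j)$ of the constraint lines $L_j(\talpha)$, and identifying the optimal vertex as either $P_j(\talpha)$ on $\tbeta_1=2\tbeta_2$ or $Q_\ell$ on $\tbeta_1=\tbeta_2$. The paper formalizes optimality via a short LP-duality lemma (a feasible point tight at two constraints whose slopes sandwich that of the objective is optimal) together with a lower-envelope lemma that orders the $P_j(\talpha)$ along the line $\tbeta_1=2\tbeta_2$.

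There is, however, a genuine gap in your treatment of the second-type regime. You write that when $d\ge (r-1)\mu(j)$ for indices near $\ell r$ the optimum ``stays pinned at $Q_\ell$ across a full sub-interval of~$\talpha$.'' That is not what happens. At $\talpha=\talpha_\ell'$ the optimum is indeed $Q_\ell$; but as $\talpha$ moves away from $\talpha_\ell'$ the optimal vertex becomes the intersection of two \emph{adjacent} constraint lines $L_{j_1}(\talpha)$ and $L_{j_1+1}(\talpha)$, where $j_1$ is the last index in the block with $\mu(j_1)<d/(r-1)$ and $j_1+1$ the first with $\mu(j_1+1)\ge d/(r-1)$. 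This intersection point leaves the diagonal $\tbeta_1=\tbeta_2$ and moves affinely in $\talpha$, so $\gamma^*_{\mathrm{LP}}(\talpha)$ is affine on the subinterval and no extra corner appears---but not because the vertex is fixed. Identifying $j_1$ correctly requires an additional structural fact the paper proves separately: $\mu(j)$, restricted to $\ell r<j<(\ell+1)r$, is unimodal (indeed a sum of terms of the form $c/(r-R)$ and $c'/(R(r-R))$), so the indices with $\mu(j)<d/(r-1)$ form a contiguous sub-block. Without this, your ``ruling out'' step is incomplete: your argument that a binding $L_j'$ must be paired with $L_j$ at $P_j(\talpha)$ only uses that \emph{some} steeper line must also be tight, not that it is $L_j$ specifically; and in the second-type regime the binding pair is in fact $L_{j_1}\cap L_{j_1+1}$, neither of which is an $L'$-line and neither of which lies on $\tbeta_1=2\tbeta_2$ in general.
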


The proof is technical and is given in Appendix~\ref{app:cornerpoint}.

{\em Remarks:} In the special case of single-loss recovery, i.e., when $r=1$, the variable $\tbeta_2$ can take any value without affecting the repair bandwidth, because the second phase of repair is vacuous. This is reflected by the geometrical fact that the line $L_j(\talpha)$ and $L_j'(\talpha)$ representing the linear constraints are vertical lines in the $\tbeta_1$-$\tbeta_2$ plane. Naturally, we take $\tbeta_2=0$ in the repair of a single failed node. However, in order to give a unified treatment covering both single-loss recovery $r=1$ and mutli-loss recovery $r\geq 2$, we allow the variable $\tbeta_2$ to take positive value in the single-loss case. When $r=1$, the $\tbeta_2$ coordinates of $P_j(\talpha_j)$ and $Q_\ell$ are nonzero, but it does not matter because in the calculation of repair bandwidth $d\tbeta_1 + (r-1)\tbeta_2$, we multiply $\tbeta_2$ by~0. The results in the next two sections hold for {\em all} $r\geq 1$.

\section{Construction of Maximal Flow}
\label{sec:max_flow}

In this section, the parameters $B$, $\alpha$, $\beta_1$ and $\beta_2$ are assumed to be integers. There is no loss of generality because we can always scale them up by a common factor.

We modify the information flow graph by adding more ``out'' vertices, so that at each stage, each storage node is associated with a unique ``out'' vertex. If the storage node is not repaired at stage $s$, we draw a directed edge with infinite capacity from the ``out'' node at stage $s-1$ to it. With these new vertices, all inter-stage edges are between two consecutive stages. A data collector connects to $k$ ``out'' vertices at the same stage.

A modified information flow graph is denoted by $G^m(n,d,k,r;\alpha,\beta_1,\beta_2)$. As an example,
the modified information flow graph $G^m(6,4,3,2;7,2,1)$ for the example in Fig.~\ref{fig:flow1} is shown in Fig.~\ref{fig:modified}.

\begin{figure}
\centering
\includegraphics[width=3.5in]{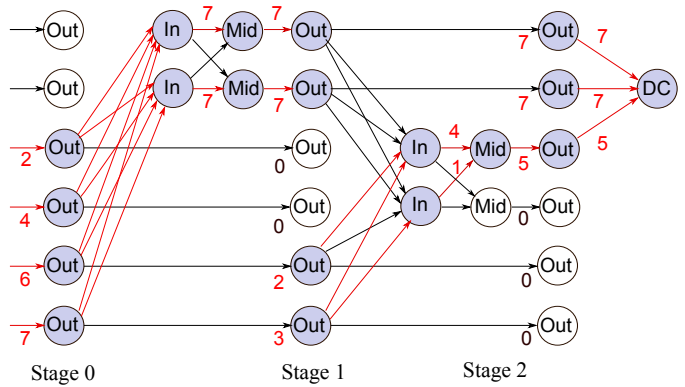}
\caption{An example of modified information flow graph ($n=6$, $d=4$, $k=3$, $r=2$, $\alpha=7$, $\beta_1=2$, $\beta_2=1$). Nodes 1 and 2 are repaired in the transition from stage 0 to stage 1 ($\mathcal{R}_1=\{1,2\}$). Nodes 3 and 4 are repaired in the transition from stage 1 to stage 2 ($\mathcal{R}_2=\{3,4\}$).} \label{fig:modified}
\end{figure}

In this section we study the ``vertical'' cuts  that separate two consecutive stages in the modified information flow graph.

\noindent  {\bf Definition:}
A vector $\mathbf{v}\in\mathbb{R}_+^n$ is called {\em transmissive at stage $s$} ($s\geq 0$), if in all possible modified information flow graph $G^m(n,d,k,r;\alpha,\beta_1,\beta_2)$, we can assign a non-negative real number $F(e)$ to the edges $e$ at and before stage~$s$, such that

(i) for every edge $e$ at or before stage $s$, $F(e)$ does not exceed the capacity of edge $e$,

(ii) for all vertices at stage $1$ to $s-1$, the  in-flow is equal to the out-flow, i.e.,
\[
 \partial F(\{\nu\}) = 0
\]
for all vertices $\nu$ between stage 1 and $s-1$ in the modified information flow graph,

(iii) the in-flow of the $i$-th ``out'' vertex at stage $s$ is equal to the $i$-th component in the given vector~$\mathbf{v}$.

Let $\Upsilon_s$ to be the set of transmissive vectors at stage $s$. A vector $\mathbf{v}\in\mathbb{R}_+^n$ which is transmissive at all stages is called {\em transmissive}, i.e., a vector is transmissive if and only if it belongs to $\cap_{s \geq 0} \Upsilon_s$.

Some comments on transmissive vectors are in order.

(a) To determine whether a vector is transmissive at stage $s$, we have to consider all possible information flow graphs with at least $s$ stages.

(b) No data collector is involved in the definition of transmissive vectors. The number of non-zero components in a transmissive vector may be more than~$k$.

(c) A vector which is transmissive at one stage may not be transmissive at another stage. For example, the vector $(\alpha, \alpha, \ldots, \alpha)$ with all components equal to $\alpha$ is in $\Upsilon_0$, but not in $\Upsilon_s$ for $s\geq 1$. This is why we need to take the intersection $\cap_{s \geq 0} \Upsilon_s$ in the definition of transmissive vectors.

It is trivial that the all-zero vector is a transmissive vector.
We next show that non-trivial transmissive vectors exist. In Theorem~\ref{thm:MBCR}, we show that the vectors in a certain base-polymatroid are transmissive, corresponding to the operating points of the first type. In Theorem~\ref{thm:MSCR}, we show that the vectors in another base-polymatroid are transmissive, corresponding to the operating points of the second type.

\smallskip

For $z= 0 , 1, \ldots, k-2$, let
\begin{equation}
\mathbf{p}_z :=
( \underbrace{\alpha, ... , \alpha}_{z+1 \text{ times}}, \underbrace{
 \alpha-2, \alpha-4 , ..., \alpha-2(k-z-1)}_{k-z-1 \text{ terms}}, \underbrace{0, ...,0}_{n-k \text{ times}} ),
\label{eq:AP}
\end{equation}
with components in non-increasing order.
For $j=0,1,\ldots, n$, let \begin{equation}
\theta_j:=
(\min(k,j) \cdot \alpha ) - \sum_{i=0}^{\min(k,j)-z-1} 2i
\label{eq:modular_MBCR}
\end{equation}
be the sum of the first $j$ components of the vector $\mathbf{p}_z$. (If the upper limit of a summation is negative, the summation is equal to 0 by convention.) Note that $\theta_0=0$ and
\begin{align*}
& \phantom{=}\theta_k = \theta_{k+1} = \cdots =\theta_n  =k\alpha - \sum_{i=1}^{k-z-1} 2i \\
&=k \alpha - (k-z-1)(k-z).
\end{align*}

\begin{theorem}
Let $z$ be an integer between 0 and $k-2$, and let
\begin{align*}
 \alpha &=2(d-z)+r-1, \text{ and }\\
 \beta_1 &= 2, \beta_2  = 1.
\end{align*}
If $\mathbf{h}\in \mathbb{R}_+^n$ is majorized by $\mathbf{p}_z$,
then $\mathbf{h}$ is transmissive.
Hence,  we can construct a flow to any possible data collector with flow value $$\theta_k = k(2d+r-k)-z-z^2.
$$
Furthermore, if the components of the vector $\mathbf{h}$  are non-negative integers, then the flow can be chosen to be integral.
\label{thm:MBCR}
\end{theorem}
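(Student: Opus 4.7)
The plan is to induct on the stage number $s$ and show that every vector majorized by $\mathbf{p}_z$ lies in $\Upsilon_s$ for every $s \geq 0$. By Lemma~\ref{lemma:polymatroid}, the set of vectors majorized by $\mathbf{p}_z$ is exactly the base-polymatroid $\mathcal{B}(f)$ of the rank function $f(\mathcal{S}) := \theta_{|\mathcal{S}|}$ on $\{1,2,\ldots,n\}$. It therefore suffices to prove that for every modified information flow graph $G^m$ and every $s \geq 0$, every $\mathbf{h}^{(s)} \in \mathcal{B}(f)$ is realizable as the in-flow vector at the $n$ stage-$s$ ``out'' vertices.

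The base case $s=0$ is immediate: since every entry of $\mathbf{p}_z$ is at most $\alpha$, we simply set $F(\Source, \Out_i) := h_i^{(0)}$ on the stage-$0$ edges. For the inductive step, fix $\mathcal{R}_{s+1}$, helper sets $\mathcal{H}_{s+1,i}$, and a target $\mathbf{h}^{(s+1)} \in \mathcal{B}(f)$. I will produce an auxiliary $\mathbf{h}^{(s)} \in \mathcal{B}(f)$ together with a feasible flow on the transition subgraph between stages $s$ and $s+1$, carrying at most $h_i^{(s)}$ out of each stage-$s$ out-vertex and delivering exactly $h_i^{(s+1)}$ into each stage-$(s+1)$ out-vertex. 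The induction hypothesis then realizes $\mathbf{h}^{(s)}$ up to stage $s$, and concatenation furnishes the required flow. Existence of the transition flow is obtained by applying Frank's theorem (Theorem~\ref{thm:Frank}) to the transition subgraph, equipped with a submodular constraint encoding both membership of the supply vector in $\mathcal{B}(f)$ and the prescribed demand $\mathbf{h}^{(s+1)}$, and with the edge capacities $\alpha$, $\beta_1 = 2$, $\beta_2 = 1$ (plus the infinite capacities of the $\In \to \Mid$ edges) already attached to the graph.

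The hard part will be verifying the cut condition \eqref{eq:cut} of Frank's theorem on every vertex subset of the transition subgraph. A potentially binding cut is specified by a subset of $\ell_0$ survivor pass-through out-vertices together with $\ell_1$ newcomers on the target side, and by collapsing each $\In/\Mid/\Out$ gadget using the $\In\to\Mid$ infinity edges and the $\beta_2$-capacity exchange edges, the required inequality assumes the form
\[
\theta_{\ell_0 + \ell_1} \;\leq\; \theta_{\ell_0} \;+\; \ell_1 \bigl(2(d-\ell_0) + (r-\ell_1)\bigr),
\]
which is structurally identical to the cut bound of Theorem~\ref{thm:cut0}. The crucial point is that the shape of $\mathbf{p}_z$, namely $z+1$ entries equal to $\alpha = 2(d-z)+r-1$ followed by entries decreasing in steps of $\beta_1 = 2$, was engineered precisely so that these inequalities are tight exactly on the worst-case cut type $(k{-}z{-}1,1,1,\ldots,1,z{+}1)$ and slack otherwise; the arithmetic comes out by matching the $\beta_2=1$ exchanges against the $r-1$ shift in the top block and the $\beta_1=2$ helper contributions against the successive decrements of $2$ below it. Once the cut condition is verified, Frank's theorem produces an integer-valued feasible submodular flow, completing the induction. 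Finally, for any data collector $\DC$ at stage $s$ joined to out-vertices indexed by a $k$-subset $\mathcal{S}$, choosing $\mathbf{h} \in \mathcal{B}(f)$ supported on $\mathcal{S}$ with $\sum_{i\in\mathcal{S}} h_i = f(\mathcal{S}) = \theta_k$ and appending unit-capacity unit-flow edges from those $k$ out-vertices to $\DC$ yields a flow from $\Source$ to $\DC$ of value $\theta_k = k(2d+r-k) - z - z^2$.
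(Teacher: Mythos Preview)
Your proposal is correct and follows essentially the same route as the paper: induction on the stage index, with the inductive step handled by applying Frank's theorem (Theorem~\ref{thm:Frank}) to the one-stage auxiliary graph using the submodular function $\sigma(\mathcal{S}) = f(\mathcal{S}\cap\mathcal{O}_{s-1}) - \mathbf{h}(\mathcal{S}\cap\mathcal{O}_s)$, so that the resulting feasible submodular flow both meets the demand $\mathbf{h}$ at stage~$s$ and keeps the induced supply vector at stage~$s-1$ inside $\mathcal{B}(f)$. The paper's proof carries out precisely the details you sketch: it first reduces the cut condition to subsets $\mathcal{S}$ of the canonical form~\eqref{eq:S} (this is where your ``collapsing the $\In/\Mid/\Out$ gadget'' remark is made rigorous), then proves the inequality $\theta_{a+b}-\theta_b \le a\bigl(2(d-b)^+ + r - a\bigr)$ (your displayed inequality, with the $(\cdot)^+$ needed when $b>d$), and finally checks separately that the submodular flow produced by Frank's theorem actually satisfies $\partial\phi(\{\In_i\})=\partial\phi(\{\Mid_i\})=0$ and $\phi(\Delta^-\Out_i)=h_i$, which does not come for free from feasibility alone.
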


Let $f$ be the rank function on $\{1,2,\ldots, n\}$ defined by
$ f(\mathcal{S}) = \theta_j$ for $\mathcal{S}\subseteq \{1,2,\ldots, n\}$ with $|\mathcal{S}| = j$. By Lemma~\ref{lemma:polymatroid}, the base polymatroid $\mathcal{B}(f)$ consists of the vectors in $\mathbb{R}_+^n$ which are majorized by vector $\mathbf{p}_z$ in~\eqref{eq:AP}. Theorem~\ref{thm:MBCR} asserts that the vectors in $\mathcal{B}(f)$ are transmissive.

For a distributed storage system with parameters as in the example in Fig.~\ref{fig:modified}, we can apply Theorem~\ref{thm:MBCR} with $z=1$ and show that any vector in $\mathbb{R}_+^6$ majorized by $(7, 7, 5, 0, 0, 0)$ is transmissive.

The proof of Theorem~\ref{thm:MBCR}  relies on the layered structure of the modified information flow graph, and the important property that the subgraph obtained by restricting to one stage is isomorphic to the subgraph obtained by restricting to another stage. This allows us to reduce the analysis to only one stage.

Consider the subgraph of the modified information flow graph consisting of the vertices at stage $s$ and the $n$ ``out'' vertices at stage $s-1$. We call this the {\em auxiliary graph}, and let $\mathcal{V}'$ be the vertex set of this auxiliary graph. By re-labeling the storage nodes, we assume without loss of generality that nodes 1 to $r$ are regenerated at stage~$s$. The first $r$ ``out'' vertices at stage $s-1$ are disconnected from the rest of the auxiliary graph. In order to distinguish the ``out'' vertices at stage $s-1$ and $s$, we re-label the $n-r$ ``out'' vertices at stage $s-1$ by $v_{r+1}, v_{r+2},\ldots, v_{n}$. An example for $n=6$ and $d=r=3$ is given in  Fig.~\ref{fig:auxiliary}.

\begin{figure}
\centering
\includegraphics[width=2.3in]{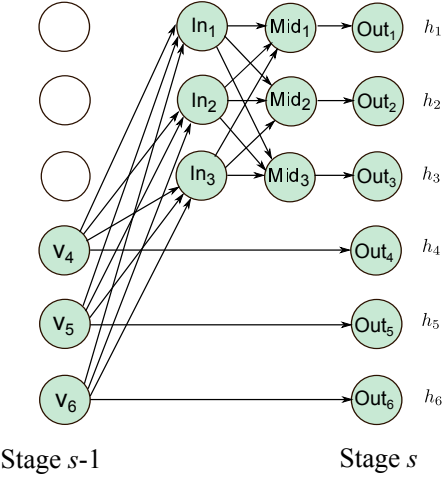}
\caption{An example of an auxiliary graph.} \label{fig:auxiliary}
\end{figure}

The construction of the flow in Theorem~\ref{thm:MBCR} is recursive. We consider the vertices on the left of the auxiliary graph as input vertices and the vertices on the right as output vertices.
Let $\mathbf{h}$ be a vector in $\mathcal{B}(f)$. The  components in $h$ are majorized by the vector in~\eqref{eq:AP} and the sum of the components is equal to $\theta_n$. The vector $\mathbf{h}$ is regarded as the demand from the ``out'' vertices on the right-hand side of the auxiliary graph. We want to look for a valid flow assignment in the auxiliary graph such that the flow to each ``out'' vertices is equal to the corresponding components in $\mathbf{h}$, and meanwhile, the input flow assignment is in the base polymatroid $\mathcal{B}(f)$.

Define a submodular function $\sigma:2^{\mathcal{V}'} \rightarrow \mathbb{R}_+$ as follows. Let $\mathcal{O}_{s-1}$ be the set of ``out'' vertices $\{v_{r+1}, v_{r+2},\ldots, v_n\}$ at stage $s-1$, and $\mathcal{O}_{s}$ be the set of ``out'' vertices $\{\Out_1, \Out_2,\ldots, \Out_n\}$ at stage $s-1$. Given a subset $\mathcal{S}$ of vertices in the auxiliary graph, define
\[
\sigma(\mathcal{S}) := f(\mathcal{S}\cap \mathcal{O}_{s-1}) - \mathbf{h}(\mathcal{S}\cap \mathcal{O}_s).
\]
The notation $\mathbf{h}(\mathcal{S}\cap \mathcal{O}_s)$ in the above definition means
\[
\mathbf{h}(\mathcal{S}\cap \mathcal{O}_s) = \mathop{\sum_{i}}_{\Out_i\in\mathcal{S}}  h_i.
\]
The function $\sigma(\mathcal{S})$ is submodular because it is the sum of a submodular function $f(\mathcal{S}\cap \mathcal{O}_{s-1})$ and a modular function $ -\mathbf{h}(\mathcal{S}\cap \mathcal{O}_s)$. Also, we note that
$$\sigma(\mathcal{V}')= f(\mathcal{O}_{s-1}) - h_1 -h_2-\ldots- h_n =\theta_n - \theta_n=0.$$

We define upper bounds and lower bounds on the edges in the auxiliary graph. For $i=r+1, r+2,\ldots, n$, the  edge joining $v_i$ and $\Out_i$ has lower bound and upper bound equal to~$h_i$. An edge terminating at an ``in'' vertex has lower bound 0 and upper bound $\beta_1$. An edge from $\In_i$ to $\Mid_j$ for $i\neq j$, has lower bound 0 and upper bound $\beta_2$, while an edge from from $\In_i$ to $\Mid_j$ for $i = j$, has lower bound 0 and upper bound $\infty$. An edge from a ``mid'' vertex to an ``out'' vertex has lower bound 0 and upper bound~$\alpha$. We summarize the lower and upper bounds on the edges in the auxiliary graph as follows.
\[
\begin{array}{|c|c|c|} \hline
\text{Edge } e & \lb(e) & \ub(e) \\ \hline \hline
 (v_i, \Out_i) & h_i & h_i \\ \hline
 (v_i, \In_j) & 0 & \beta_1 \\ \hline
 (\In_i, \Mid_j), i\neq j& 0 & \beta_2 \\ \hline
 (\In_i, \Mid_j), i= j& 0 & \infty \\ \hline
 (\Mid_i, \Out_i) & 0 & \alpha \\ \hline
\end{array}
\]

To apply Theorem~\ref{thm:Frank}, we need to verify that condition~\eqref{eq:cut}
holds for all subsets $\mathcal{S}\subseteq \mathcal{V}'$.

\begin{lemma} With notation as in Theorem~\ref{thm:MBCR}, we have
\begin{equation}
\lb(\Delta^+{\mathcal{S}}) - \ub(\Delta^- {\mathcal{S}}) \leq \sigma(\mathcal{S}), \label{eq:AAA}
\end{equation}
for all $\mathcal{S} \subseteq \mathcal{V}'$.
\label{lemma:MBCR}
\end{lemma}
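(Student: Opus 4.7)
The plan is to verify the submodular-flow feasibility inequality~(\ref{eq:AAA}) for every subset $\mathcal{S}\subseteq\mathcal{V}'$ by a case analysis that organises $\mathcal{S}$ by which of the four layers of the auxiliary graph each of its vertices belongs to. Write
\[
V := \{i>r : v_i\in\mathcal{S}\},\ I := \{i\leq r : \In_i\in\mathcal{S}\},\ M := \{i\leq r : \Mid_i\in\mathcal{S}\},
\]
$O_R := \{i\leq r : \Out_i\in\mathcal{S}\}$, and $O_{\bar R} := \{i>r : \Out_i\in\mathcal{S}\}$. A first, free reduction handles the edges of infinite capacity: if $M\not\subseteq I$, pick $i\in M\setminus I$; the edge $(\In_i,\Mid_i)$ lies in $\Delta^-\mathcal{S}$ and has $\ub=\infty$, so the right-hand side of~(\ref{eq:AAA}) dominates trivially. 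Henceforth we may assume $M\subseteq I$.

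Next I would enumerate the remaining edge contributions directly from the table. Since only the edges $(v_i,\Out_i)$ carry a positive lower bound,
$\lb(\Delta^+\mathcal{S}) = \mathbf{h}(V\setminus O_{\bar R})$, while the incoming-edge budget decomposes into a $\beta_1$-part at $\In$ vertices of $I$, a $\beta_2$-part at $\Mid$ vertices of $M$ (each receiving $r-|I|$ cross-edges of capacity $\beta_2$), an $\alpha$-part at $\Out$ vertices of $O_R\setminus M$, and an $\mathbf{h}$-part at $\Out$ vertices of $O_{\bar R}\setminus V$. Substituting $\sigma(\mathcal{S})=f(V)-\mathbf{h}(O_R)-\mathbf{h}(O_{\bar R})$ and using $V\subseteq\{r+1,\dots,n\}$ disjoint from $O_R\subseteq\{1,\dots,r\}$, the terms involving $\mathbf{h}(O_{\bar R})$ cancel, and~(\ref{eq:AAA}) reduces to
\[
\mathbf{h}(V)+\mathbf{h}(O_R)\ \leq\ f(V)+2\sum_{i\in I}|\mathcal{H}_{s,i}\setminus V|+|M|(r-|I|)+\alpha\,|O_R\setminus M|.
\]

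For the analytical step, I would apply two structural bounds. First, because $\mathcal{H}_{s,i}\subseteq\{r+1,\dots,n\}$ has size $d$, $|\mathcal{H}_{s,i}\setminus V|\geq\max\{0,d-|V|\}$. Second, since $\mathbf{h}$ is majorised by $\mathbf{p}_z$ and $V\cup O_R$ has cardinality $|V|+|O_R|$ (disjoint union), Lemma~\ref{lemma:polymatroid} yields $\mathbf{h}(V)+\mathbf{h}(O_R)=\mathbf{h}(V\cup O_R)\leq f(V\cup O_R)=\theta_{|V|+|O_R|}$. The task is therefore reduced to the purely numerical inequality
\[
\theta_{|V|+|O_R|}-\theta_{|V|}\ \leq\ 2|I|\max\{0,d-|V|\}+|M|(r-|I|)+\alpha\,|O_R\setminus M|
\]
in the integers $|V|\in[0,n-r]$, $|M|\leq|I|\leq r$ and $|O_R|\leq r$. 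Using the explicit shape of $\theta_j$ (affine of slope $\alpha$ for $j\leq z+1$, with each subsequent slope reduced by $2$, then constant beyond $j=k$), together with $\alpha=2(d-z)+r-1$, this splits naturally into the regimes $|V|\geq d$ (where the $2|I|(d-|V|)$-term is absent and $\alpha|O_R\setminus M|$ alone suffices by concavity of $\theta_j$) and $|V|<d$ (where the $2|I|(d-|V|)$-term carries the weight), and within each regime an elementary arithmetic check finishes the argument.

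I expect the main obstacle to be isolating the tight corners. The most constraining case is $V=\emptyset$, $I=M=\{1,\dots,r\}$, $O_R=\{1,\dots,r\}$, where the displayed inequality collapses to $\theta_r\leq 2rd$. The exact slack $2rd-\theta_r$ that the staircase vector $\mathbf{p}_z$ produces is $z(z+1)$ when $z\geq r-1$ and $z(z+1)+(r-z-1)(r-z)$ otherwise; confirming non-negativity here is what pins down the definition of $\alpha=2(d-z)+r-1$ and explains why the parameter $z$ is restricted to $0\leq z\leq k-2$. Once this corner is cleared, the remaining cases follow by monotonicity in $|I|,|M|,|O_R|$ and the piecewise-linear shape of $\theta_j$.
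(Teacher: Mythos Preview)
Your approach is the same as the paper's in spirit, but you stop the set-theoretic reduction too early and then try to absorb the remaining slack by a numerical argument that does not go through as stated. The paper pushes the reduction further: besides $M\subseteq I$ (your infinite-capacity observation) and the cancellation that eliminates $O_{\bar R}$, it shows vertex by vertex that for each $i\in\{1,\dots,r\}$ one may assume the triple $(\In_i,\Mid_i,\Out_i)$ is either wholly inside $\mathcal S$ or wholly outside it. This collapses $I$, $M$, $O_R$ to a single set $\mathcal A$ of size $a$, and with $b=|V|$ the numerical step becomes the two-parameter inequality
\[
\theta_{a+b}-\theta_b \ \le\ a\bigl(2(d-b)^{+}+r-a\bigr),
\]
which is then dispatched by a clean three-way split on where $a+b$ and $b$ sit relative to $k$.

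The gap in your version is the final paragraph. The right-hand side of your displayed numerical inequality is \emph{not} monotone in $|I|$, $|M|$, $|O_R|$ in the direction you need: increasing $|I|$ raises the $2|I|(d-|V|)^{+}$ term but lowers the $|M|(r-|I|)$ term, and the sign of the net effect flips at $|M|=2(d-|V|)^{+}$. So a single corner does not dominate, and you are left with a genuine multi-parameter check that you have not carried out. (Incidentally, your slack formula $2rd-\theta_r=z(z+1)$ for $z\ge r-1$ is also off; a direct computation gives $r(2z-r+1)$, which agrees with yours only at $z\in\{r-1,r\}$.) The fix is precisely the paper's extra reduction: once $I=M=O_R$ is forced, the $\alpha|O_R\setminus M|$ term vanishes and $|M|(r-|I|)$ becomes $a(r-a)$, leaving exactly the two-parameter inequality above, where the piecewise-linear shape of $\theta_j$ does the work without any appeal to monotonicity.
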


The proof of Lemma~\ref{lemma:MBCR} is given in Appendix~\ref{app:lemmaMBCR}.

\begin{proof}[Proof of Theorem~\ref{thm:MBCR}]
We proceed by induction on stages. Let $\mathbf{h}$ be a vector in $\mathcal{B}(f)$. Since each component of $\mathbf{h}$ is less than or equal to~$\alpha$, we can always assign a flow on the edges from the source vertex to the vertices at stage 0 such that $\mathbf{h}^{(0)}=\mathbf{h}$, without violating any capacity constraint. Hence $\mathbf{h}$ is transmissive at stage~0.

Suppose that all vectors in $\mathcal{B}(f)$ are transmissive at stage $s-1$. Consider the auxiliary graph consisting of the vertices at stage $s$ and the $n$ ``out'' vertices at stage $s-1$. By applying Frank's theorem (Theorem~\ref{thm:Frank}), there exists a feasible submodular flow  on the auxiliary graph. Let $\phi$ be a submodular flow on the auxiliary graph.

By the defining property of a submodular flow, we have
\[
\partial \phi(\{ \Out_i \}) = -\phi(\Delta^- \Out_i) \leq -h_i,
\]
and
\[
  \partial \phi(\{v_{r+1},v_{r+2},\ldots, v_n\}) \leq f(\mathcal{O}_{s-1}) = \theta_n.
\]
Let $\mathcal{S}_0$ be the subset
\[
\mathcal{S}_0 := \{\In_1,\In_2,\ldots, \In_r, \Mid_1,\Mid_2,\ldots \Mid_r\}
\]
of vertices in the auxiliary graph.
We have
\begin{align*}
0 = \sigma(\mathcal{S}_0 ) &\geq
  \partial \phi (\mathcal{S}_0)
 =  -\partial \phi (\mathcal{S}_0^c) \\
& =
-\sum_{i=1}^n \partial \phi(\{\Out_i\}) - \partial \phi(\{ v_{r+1},\ldots, v_{n} \} ) \\
&\geq \sum_{i=1}^n h_i - f(\{v_{r+1},v_{r+2},\ldots, v_n\}) \\
& = \theta_n-\theta_n =0.
\end{align*}
Therefore, all inequalities above are in fact equalities. Thus $\phi(\Delta^- \Out_i) = h_i$ for all $i$.

To show that the flow conservation constraint is satisfied for the ``in'' and ``mid'' vertices in the auxiliary graph, we add the inequalities $\partial \phi(\{\In_i\})\leq 0$ and $\partial \phi(\{\Mid_i\})\leq 0$ for $i=1,2,\ldots, r$, and get
\[
 0 \geq \sum_{i=1}^r (\partial \phi(\{\In_i\})+\partial \phi(\{\Mid_i\})) = \partial \phi(\mathcal{S}_0) = \sigma(\mathcal{S}_0)=0.
\]
We note that $\partial \phi(\mathcal{S}_0) = \sigma(\mathcal{S}_0)$ follows from last paragraph.
Since equality holds in the above inequality, we have
$$\partial \phi(\{\In_i\})= \partial \phi(\{\Mid_i\})= 0$$
for $i=1,2,\ldots, r$.

If we take any subset $\mathcal{A}$ of $\{v_{r+1}, v_{r+2}, \ldots, v_n\}$ at stage $s-1$, from the definition of a submodular flow, we obtain
\[
\partial \phi(\mathcal{A}) = \phi(\Delta^+{\mathcal{A}}) \leq \sigma(\mathcal{A}) = f(\mathcal{A}).
\]
The ``input'' at the $(s-1)$-th stage is thus transmissive at stage $s-1$.
By the induction hypothesis, we can assign real values to the edges from stage $-1$ to $s-1$ in the modified information flow graph, such that the flow conservation constraint is satisfied and the in-flow of the ``out'' vertices at stage $s-1$ is precisely the inputs of the corresponding vertices in the auxiliary graph. This gives a flow at the $s$-th stage of the modified information flow graph yielding the desired vector $\mathbf{h}$, and proves that $\mathbf{h}$ is transmissive at stage $s$.

If the components of $\mathbf{h}$ are non-negative integers, by the second statement in Theorem~\ref{thm:Frank}, we can find a flow which is integral. This completes the proof of Theorem~\ref{thm:MBCR}.
\end{proof}

\begin{theorem} For $j=2,3,\ldots, k$, the operating point $(\tilde{\gamma}_j,\tilde{\alpha}_j)$  is in $\mathcal{C}_{\mathrm{MF}}(d,k,r)$. Thus, all operating points of the first type are in $\mathcal{C}_{\mathrm{MF}}(d,k,r)$.
\label{thm:admissible_MBCR}
\end{theorem}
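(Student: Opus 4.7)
The plan is to reduce everything to Theorem~\ref{thm:MBCR} by a careful choice of parameters. Given $j \in \{2,3,\ldots,k\}$, I would first choose the (unnormalized) system parameters
$$z := k-j, \quad \alpha := 2(d-k+j)+r-1, \quad \beta_1 := 2, \quad \beta_2 := 1, \quad B := 2D_j,$$
where $z$ lies in the permitted range $\{0,1,\ldots,k-2\}$ because $j$ ranges over $\{2,\ldots,k\}$. A short arithmetic verification then shows that the resulting normalized operating point $(\gamma/B,\alpha/B)$ equals $(\tilde\gamma_j,\tilde\alpha_j)$: the numerators $2d+r-1$ and $2(d-k+j)+r-1$ divided by $2D_j$ match the formulas \eqref{eq:tilde_alpha1}--\eqref{eq:tilde_gamma1} after canceling the factor of $2$.

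Next I would invoke Theorem~\ref{thm:MBCR} with the above $z$, $\alpha$, $\beta_1$, $\beta_2$. That theorem guarantees that any vector $\mathbf h\in \mathbb{R}_+^n$ majorized by $\mathbf{p}_z$ is transmissive in every modified information flow graph $G^m(n,d,k,r;\alpha,\beta_1,\beta_2)$, and moreover that, for any data collector \DC\ connected to $k$ of the ``out'' vertices, there exists a flow from the source to \DC\ of value $\theta_k$. Concretely, one takes $\mathbf{h}$ to be the permutation of $\mathbf{p}_z$ whose $k$ nonzero coordinates are placed on the indices the data collector contacts; then $\mathbf{h}\in\mathcal{B}(f)$, the transmissive flow realizes the demand at those $k$ out-vertices, and the infinite-capacity edges from the out-vertices to \DC\ deliver the total $\sum_i h_i = \theta_k$ to~\DC.

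The final step is the key numerical identity $\theta_k = B$. Expanding
$$\theta_k \;=\; k(2d+r-k) - z - z^2 \;=\; k(2d+r-k) - (k-j) - (k-j)^2$$
and
$$2D_j \;=\; 2k\!\left(d-k+j+\tfrac{r-1}{2}\right) - j(j-1),$$
one collects terms and checks that both expressions equal $2kd+kr-2k^2+2kj-j^2+j-k$. Hence $\theta_k = 2D_j = B$, so for \emph{every} information flow graph $G(n,d,k,r;\alpha,\beta_1,\beta_2)$ and every data collector we have $\maxflow(G,\DC)\geq B$. This is precisely the max-flow condition~\eqref{eq:maxflow2}, and thus $(\tilde\gamma_j,\tilde\alpha_j)\in\mathcal{C}_{\mathrm{MF}}(d,k,r)$, proving the theorem.

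There is no real obstacle here: the heavy lifting was already carried out in Theorem~\ref{thm:MBCR} (and in the submodular-flow machinery of Lemma~\ref{lemma:MBCR} and Frank's Theorem~\ref{thm:Frank}). The proof is essentially a translation between two equivalent parameterizations together with the verification $\theta_k=2D_j$. The only thing to be careful about is the book-keeping of which vector in the base-polymatroid $\mathcal{B}(f)$ should be handed to Theorem~\ref{thm:MBCR} for a given data collector, but this is immediate since $\mathcal{B}(f)$ is closed under coordinate permutations and $\mathbf{p}_z$ already has exactly $k$ nonzero components summing to $\theta_k$.
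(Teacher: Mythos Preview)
Your proposal is correct and follows essentially the same route as the paper: set $z=k-j$, invoke Theorem~\ref{thm:MBCR} with $\alpha=2(d-z)+r-1$, $\beta_1=2$, $\beta_2=1$, and verify the algebraic identity $\theta_k=k(2d+r-k)-z-z^2=2D_j$ so that the normalized operating point equals $(\tilde\gamma_j,\tilde\alpha_j)$. The paper's own proof is organized identically, only slightly terser in how it passes from the transmissive vector to the flow at~\DC.
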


\begin{proof}
Consider a data collector \DC\ who connects to $k$ storage nodes at stage~$s$. Let $z$ be an integer between 0 and $k-2$.
We want to construct a flow from the source node to \DC\ such that the flow of the $k$ links from the $k$ ``out'' vertices to the data collector are precisely the non-zero components in~\eqref{eq:AP}, i.e.,
$$\underbrace{\alpha, \alpha, \ldots \alpha}_{z+1 \text{ times}}, \alpha-2, \alpha-4,\ldots, \alpha-2(k-z+1).$$

By Theorem~\ref{thm:MBCR}, for any failure pattern, we can always find a flow with flow value $k(2d+r-k)-z-z^2$, $\alpha=2(d-z)+r-1$ and $\gamma=2d+r-1$. Hence, for $z=0,1,\ldots, k-2$, the operating point
\[
\frac{1}{k(2d+r-k)-z-z^2} \big(2d+r-1, 2(d-z)+r-1 \big)
\]
is in $\mathcal{C}_{\mathrm{MF}}(d,k,r)$. After a change of the indexing variable by
$$z=k-j,$$
we check that the denominator in the above fraction is
\begin{align*}
&\phantom{=} k(2d+r-k)-(k-j)-(k-j)^2 \\
&=k(2d+r-k) - k+j-k^2 + 2kj - j^2 \\
&=k(2d-2k+2j+r-1) + j - j^2 \\
&= 2D_j.
\end{align*}
Thus, for $j=2,3,\ldots, k$, the operating point
\[
(\tilde{\gamma}_j, \tilde{\alpha}_j) = \frac{1}{D_j} \big(d+\frac{r-1}{2}, d-k+j+\frac{r-1}{2} \big)
\]
is in $\mathcal{C}_{\mathrm{MF}}(d,k,r)$.
\end{proof}

\smallskip

Analogous to Theorem~\ref{thm:MBCR} and Theorem~\ref{thm:admissible_MBCR}, we have the following two theorems for the operating points of the second type. For $\ell=0,1,\ldots, \lfloor k/r \rfloor$, let \begin{align}
\mathbf{q}_\ell :=
( \underbrace{\alpha, \ldots, \alpha}_{k-\ell r \text{ times}},
 \underbrace{\alpha-r, \ldots, \alpha-r}_{r \text{ times}}, \underbrace{\alpha-2r, \ldots,\alpha-2r}_{r \text{ times}}, \notag \\
\ldots
,\underbrace{\alpha-\ell r, \ldots,\alpha- \ell r}_{r \text{ times}}
   ,\underbrace{0,  \ldots,0}_{n-k \text{ times}}
).
\label{eq:AP2}
\end{align}

For $j=0,1,\ldots, n$, let
\begin{equation}
\varphi_j := \min(k,j)\alpha - \sum_{i=0}^{\min(k,j)-k+\ell r} \lceil i/r \rceil r
\end{equation}
be the sum of the first $j$ components of the vector $\mathbf{q}_\ell$.
We check that
\begin{align*}
\varphi_k = \varphi_{k+1} = \cdots = \varphi_n &=  (k-\ell r)\alpha + r \sum_{i=1}^\ell (\alpha - ir) \\
&= k\alpha -r^2 \sum_{i=1}^\ell i = k\alpha - r^2 \frac{\ell(\ell+1)}{2}.
\end{align*}

\begin{theorem}
Let $\ell$ be an integer between 0 and $\lfloor k/r \rfloor$, and let
\begin{align*}
 \alpha &=d +r(\ell+1)-k, \text{ and }\\
 \beta_1 &= \beta_2  = 1.
\end{align*}
Every vector $\mathbf{h} \in \mathbb{R}_+^n$ majorized by $\mathbf{q}_\ell$ is transmissive. Hence,  we can construct a flow to any possible data collector with flow value
\[
 \phi_k = k(d+r(\ell+1)-k) - \frac{r^2\ell(\ell+1)}{2}.
\]
Furthermore, if the components of $\mathbf{h}$ are non-negative integers, then the flow can be chosen to be integral.
\label{thm:MSCR}
\end{theorem}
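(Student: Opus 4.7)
The plan is to mirror the proof of Theorem~\ref{thm:MBCR} essentially line by line, replacing the rank function coming from $\mathbf{p}_z$ by the one coming from $\mathbf{q}_\ell$, and replacing the parameters $(\alpha,\beta_1,\beta_2)=(2(d-z)+r-1,2,1)$ by $(\alpha,\beta_1,\beta_2)=(d+r(\ell+1)-k,1,1)$. Concretely, I would define the rank function $f$ on subsets of $\{1,\dots,n\}$ by $f(\mathcal{S}):=\varphi_{|\mathcal{S}|}$, invoke Lemma~\ref{lemma:polymatroid} to identify $\mathcal{B}(f)$ with the set of non-negative vectors majorized by $\mathbf{q}_\ell$, and then prove by induction on the stage index $s$ that every vector in $\mathcal{B}(f)$ is transmissive at stage~$s$. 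The base case $s=0$ is immediate since every component of $\mathbf{h}$ is at most $\alpha$, so the edges from $\Source$ to the stage-0 ``out'' vertices can carry the required flow.

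For the inductive step, I would work on the same auxiliary graph as in the proof of Theorem~\ref{thm:MBCR}, with the submodular set function $\sigma(\mathcal{S}):=f(\mathcal{S}\cap\mathcal{O}_{s-1})-\mathbf{h}(\mathcal{S}\cap\mathcal{O}_s)$; note $\sigma(\emptyset)=\sigma(\mathcal{V}')=0$ because $\sum_i h_i=\varphi_n$. The edge capacities now read $\ub(v_i,\In_j)=\beta_1=1$, $\ub(\In_i,\Mid_j)=\beta_2=1$ for $i\neq j$, $\ub(\In_i,\Mid_i)=\infty$, $\ub(\Mid_i,\Out_i)=\alpha$, and the edges $(v_i,\Out_i)$ are pinned at $h_i$. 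The main task is to prove the analogue of Lemma~\ref{lemma:MBCR}, namely
\[
 \lb(\Delta^+\mathcal{S})-\ub(\Delta^-\mathcal{S})\leq \sigma(\mathcal{S}),\qquad \forall \mathcal{S}\subseteq\mathcal{V}'.
\]
Once this is established, Frank's theorem (Theorem~\ref{thm:Frank}) produces a feasible $\sigma$-submodular flow $\phi$. Exactly as in the proof of Theorem~\ref{thm:MBCR}, applying $\sigma(\mathcal{S}_0)=0$ for $\mathcal{S}_0=\{\In_1,\dots,\In_r,\Mid_1,\dots,\Mid_r\}$ and summing the per-vertex submodular-flow inequalities forces equality everywhere inside the auxiliary graph; this both saturates the right-hand demands $\phi(\Delta^-\Out_i)=h_i$ and yields conservation at all ``in'' and ``mid'' vertices, while the induced stage-$(s-1)$ ``input'' vector lies in $\mathcal{B}(f)$ and can therefore be extended backwards by the induction hypothesis. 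The flow value to any data collector at stage~$s$ is $\sum_{i\in\mathcal{I}}h_i\leq\varphi_k=k(d+r(\ell+1)-k)-r^2\ell(\ell+1)/2$, with equality attained by choosing $\mathbf{h}$ supported on the data collector's $k$ chosen nodes with the values listed in $\mathbf{q}_\ell$. The integrality conclusion follows because $\lb,\ub,f$ are all integer-valued when the components of $\mathbf{h}$ are integers, so Frank's theorem supplies an integer submodular flow, and the inductive construction preserves integrality stage by stage.

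The main obstacle will be verifying the cut condition displayed above, which is the only place where the specific structure of $\mathbf{q}_\ell$ and of the parameters $(\alpha,\beta_1,\beta_2)=(d+r(\ell+1)-k,1,1)$ enters. I would parameterize a subset $\mathcal{S}\subseteq\mathcal{V}'$ by the four quantities $a=|\mathcal{S}\cap\{\In_1,\dots,\In_r\}|$, $b=|\mathcal{S}\cap\{\Mid_1,\dots,\Mid_r\}|$, $c=|\mathcal{S}\cap\mathcal{O}_s|$, and $p=|\mathcal{S}\cap\{v_{r+1},\dots,v_n\}|$, and, after discarding cases where tightening $\mathcal{S}$ only improves the inequality (e.g.\ matching $\In_i$ with $\Mid_i$ through the $\infty$-capacity edge), reduce to a handful of canonical configurations. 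In each of those, the left-hand side becomes an explicit expression linear in $a,b,c,p$ using $\beta_1=\beta_2=1$, while the right-hand side $\sigma(\mathcal{S})=\varphi_p-\mathbf{h}(\mathcal{S}\cap\mathcal{O}_s)$ can be lower-bounded using that $\mathbf{h}$ is majorized by $\mathbf{q}_\ell$. The crucial arithmetic inequality will be of the form
\[
a(n-r-p)+b(r-a)+c\,\alpha\ \geq\ \varphi_p-\mathbf{h}(\mathcal{S}\cap\mathcal{O}_s),
\]
and the choice $\alpha=d+r(\ell+1)-k$ is precisely what makes it tight, exactly analogously to the way $\alpha=2(d-z)+r-1$ made Lemma~\ref{lemma:MBCR} tight. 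The details should be relegated to an appendix paralleling Appendix~\ref{app:lemmaMBCR}.
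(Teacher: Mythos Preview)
Your plan is exactly the paper's approach: Appendix~\ref{app:MSCR} carries out precisely this mirror of the proof of Theorem~\ref{thm:MBCR}, defining $\rho(\mathcal{S})=g(\mathcal{S}\cap\mathcal{O}_{s-1})-\mathbf{h}(\mathcal{S}\cap\mathcal{O}_s)$ on the same auxiliary graph and reducing the cut condition (via the same case eliminations as in Appendix~\ref{app:lemmaMBCR}) to the combinatorial inequality $\varphi_{a+b}-\varphi_b\le a\big((d-b)^{+}+r-a\big)$, after which Frank's theorem and the stage-by-stage induction give transmissivity and integrality exactly as you describe. One caution on the details: the ``crucial arithmetic inequality'' you display is not in the correct form---after the reduction the in-capacity contribution from the helper edges is $a(d-b)^{+}$ (each newcomer contacts only $d$ helpers, so the worst case over helper sets leaves $(d-b)^{+}$ edges crossing into $\mathcal{S}$), not $a(n-r-p)$, and the right-hand side should be $\mathbf{h}(\mathcal{S}\cap\mathcal{O}_s)-\varphi_p$ rather than its negative.
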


Theorem~\ref{thm:MSCR} asserts that the vectors in  the base-polymatroid $\mathcal{B}(g)$ associated with the rank function $g$ defined by $g(\mathcal{S}) = \varphi_{|\mathcal{S}|}$, for  $\mathcal{S}\in\{1,2,\ldots,n \}$, are transmissive.

The proof of Theorem~\ref{thm:MSCR} is given in Appendix~\ref{app:MSCR}.

\begin{theorem}
For $\ell=0,1,2,\ldots, \lfloor k/r \rfloor$, the operating point $(\tilde{\gamma}_\ell',\tilde{\alpha}_\ell')$  is in $\mathcal{C}_{\mathrm{MF}}(d,k,r)$. Thus, all operating points of the second type are in $\mathcal{C}_{\mathrm{MF}}(d,k,r)$.
\label{thm:admissible_MSCR}
\end{theorem}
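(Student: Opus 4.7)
The plan is to mirror the structure of the proof of Theorem~\ref{thm:admissible_MBCR}, substituting Theorem~\ref{thm:MSCR} for Theorem~\ref{thm:MBCR}. Fix $\ell \in \{0,1,\ldots,\lfloor k/r \rfloor\}$ and set $\alpha = d + r(\ell+1) - k$ and $\beta_1 = \beta_2 = 1$, as prescribed by Theorem~\ref{thm:MSCR}. Then the repair bandwidth per newcomer is $\gamma = d\beta_1 + (r-1)\beta_2 = d+r-1$. The goal is to show that for every information flow graph $G(n,d,k,r;\alpha,\beta_1,\beta_2)$ and every data collector $\DC$ in it, the max-flow from the source $\Source$ to $\DC$ is at least $\varphi_k = D_\ell'$, so that the file size $B = D_\ell'$ satisfies the bound~\eqref{eq:maxflow2}.

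Given such a $\DC$, it contacts $k$ ``out'' vertices at some stage $s$, corresponding to some $k$-subset $\mathcal{S}$ of the storage nodes. By permuting the node labels we can assume $\mathcal{S}$ consists of the first $k$ indices, so that the nonzero components of the vector $\mathbf{q}_\ell$ defined in~\eqref{eq:AP2} are supported precisely on $\mathcal{S}$. Take $\mathbf{h} = \mathbf{q}_\ell$. Since $\mathbf{h}$ is trivially majorized by $\mathbf{q}_\ell$, Theorem~\ref{thm:MSCR} supplies an integral flow assignment on $G$ (up to stage $s$) in which the in-flow of the $i$-th ``out'' vertex at stage $s$ equals $h_i$. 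Routing all of this flow through the infinite-capacity edges from the $k$ contacted ``out'' vertices into $\DC$ yields an $(\Source,\DC)$-flow of value $\sum_{i=1}^n h_i = \varphi_k = D_\ell'$.

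Taking the minimum over all $\DC$ and over all information flow graphs $G$ shows $\min_G \min_{\DC} \maxflow(G,\DC) \geq D_\ell'$, so the file size $B = D_\ell'$ is supported by~\eqref{eq:maxflow2}. Normalizing by $B$ gives
\[
\tilde{\alpha} \;=\; \frac{\alpha}{B} \;=\; \frac{d+r(\ell+1)-k}{D_\ell'} \;=\; \tilde{\alpha}_\ell',
\qquad
\tilde{\gamma} \;=\; \frac{\gamma}{B} \;=\; \frac{d+r-1}{D_\ell'} \;=\; \tilde{\gamma}_\ell',
\]
so $(\tilde{\gamma}_\ell',\tilde{\alpha}_\ell') \in \mathcal{C}_{\mathrm{MF}}(d,k,r)$, as required. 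Since this holds for every $\ell$ in the allowed range, all operating points of the second type lie in $\mathcal{C}_{\mathrm{MF}}(d,k,r)$.

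There is no real obstacle to overcome here: all substantive work has been absorbed into Theorem~\ref{thm:MSCR}, whose proof (deferred to Appendix~\ref{app:MSCR}) already established the hard submodular-flow statement via Frank's theorem applied to the auxiliary graph. The present statement is a direct corollary: it only requires noting that the total demand $\varphi_k$ of a transmissive vector $\mathbf{q}_\ell$ equals exactly the denominator $D_\ell'$ appearing in the definition of the second-type operating point, and that the chosen $\alpha$ and $\gamma$ produce the correct numerators.
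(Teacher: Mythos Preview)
Your proposal is correct and follows essentially the same approach the paper indicates: the paper explicitly states that the proof of Theorem~\ref{thm:admissible_MSCR} is similar to that of Theorem~\ref{thm:admissible_MBCR} and omits it, and your argument is precisely that parallel with Theorem~\ref{thm:MSCR} substituted for Theorem~\ref{thm:MBCR}. The only minor imprecision is that in the unmodified information flow graph a data collector may contact ``out'' vertices at different stages, but passing to the modified graph of Section~\ref{sec:max_flow} (where transmissive vectors are defined) resolves this without affecting the max-flow, so your reasoning goes through.
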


The proof of Theorem~\ref{thm:admissible_MSCR} is similar to the proof of Theorem~\ref{thm:admissible_MBCR} and is omitted.

\smallskip

In summary, we have shown that all the corner points in Theorem~\ref{thm:cornerpoint} are in $\mathcal{C}_{\mathrm{MF}}(d,k,r)$. This implies that all operating points in $\mathcal{C}_{\mathrm{LP}}(d,k,r)$ are also in $\mathcal{C}_{\mathrm{MF}}(d,k,r)$. We have thus proved

\begin{corollary}
$\mathcal{C}_{\mathrm{LP}}(d,k,r) = \mathcal{C}_{\mathrm{MF}}(d,k,r)$.
\label{thm:extreme_point}
\end{corollary}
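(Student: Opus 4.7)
The plan is to establish the nontrivial inclusion $\mathcal{C}_{\mathrm{LP}}(d,k,r) \subseteq \mathcal{C}_{\mathrm{MF}}(d,k,r)$; the reverse containment is part of the chain~\eqref{eq:inclusion}. By Theorem~\ref{thm:cornerpoint}, the region $\mathcal{C}_{\mathrm{LP}}$ is the convex hull of the finitely many corner points enumerated there, together with the horizontal ray~\eqref{eq:ray1} anchored at $(\tgamma_0',\talpha_0')$ and the vertical ray~\eqref{eq:ray2} anchored at $(\tgamma_k,\talpha_k)$. Theorem~\ref{thm:admissible_MBCR} places every first-type corner, including the MBCR endpoint $(\tgamma_k,\talpha_k)$, inside $\mathcal{C}_{\mathrm{MF}}$; Theorem~\ref{thm:admissible_MSCR} does the same for every second-type corner, including the MSCR endpoint $(\tgamma_0',\talpha_0')$. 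So the generating corner points are already handled by what precedes.

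To promote this pointwise containment to set containment, I would verify two closure properties of $\mathcal{C}_{\mathrm{MF}}$. For any fixed information flow graph $G$ and data collector \DC, $\maxflow(G,\DC)$ is the minimum over $(\Source,\DC)$-cuts of a cut capacity, and each cut capacity is a nonnegative linear function of $(\alpha,\beta_1,\beta_2)$. Hence $\maxflow(G,\DC)$ is concave, coordinate-monotone, and positively homogeneous in $(\alpha,\beta_1,\beta_2)$, and taking the further infimum over all $G$ and \DC\ preserves all three attributes. Consequently the set of normalized triples $(\talpha,\tbeta_1,\tbeta_2)\in\mathbb{R}_+^3$ whose max-flow value is at least $1$ in every information flow graph is a convex, upward-closed subset of $\mathbb{R}_+^3$. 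Its image under the affine projection $(\talpha,\tbeta_1,\tbeta_2)\mapsto(d\tbeta_1+(r-1)\tbeta_2,\talpha)$ is precisely $\mathcal{C}_{\mathrm{MF}}$, which is therefore convex. Moreover, increasing $\tbeta_1$ or $\tbeta_2$ with $\talpha$ fixed pushes the image rightward along the horizontal ray without violating feasibility, and increasing $\talpha$ with the $\tbeta$'s fixed pushes it upward along the vertical ray, so $\mathcal{C}_{\mathrm{MF}}$ is closed under both ray directions.

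With convexity and closure along the two rays in hand, every point of $\mathcal{C}_{\mathrm{LP}}$ is a convex combination of corner points and points on the two rays, each of which lies in $\mathcal{C}_{\mathrm{MF}}$, so the whole region lies in $\mathcal{C}_{\mathrm{MF}}$. I do not anticipate any genuine technical obstacle: the combinatorial substance of this section was absorbed into Theorems~\ref{thm:admissible_MBCR} and~\ref{thm:admissible_MSCR} via Frank's submodular-flow theorem, while the present step is a routine polyhedral bookkeeping argument built on convexity and monotonicity of the max-flow functional.
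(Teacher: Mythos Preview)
Your proposal is correct and follows essentially the same route as the paper: use the reverse inclusion from~\eqref{eq:inclusion}, invoke Theorems~\ref{thm:admissible_MBCR} and~\ref{thm:admissible_MSCR} to place all corner points of Theorem~\ref{thm:cornerpoint} inside $\mathcal{C}_{\mathrm{MF}}$, and then pass from corners to the full region. The paper compresses the last step into a single sentence, whereas you spell out the convexity and ray-closure of $\mathcal{C}_{\mathrm{MF}}$ explicitly via concavity and monotonicity of the max-flow functional; this is the natural justification for the implication the paper leaves implicit.
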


\section{Linear Network Codes for Cooperative Repair}
\label{sec:LN}

The objective of this section is to show that the Pareto-optimal operating points in $\mathcal{C}_{\mathrm{MF}}$ can be achieved by linear network coding, with an explicit bound on the required finite field size.

Let $\mathbb{F}_q$ denote the finite field of size $q$, where $q$ is a power of prime. The size of $\mathbb{F}_q$ will be determined later in this section. In this section and the next section, we scale the value of $B$, $\beta_1$, $\beta_2$, and $\alpha$, so that they are all integers, and normalize the unit of data such that an element in $\mathbb{F}_q$ is one unit of data. The whole data file is divided into a number of chunks, and each chunk contains $B$ finite field elements.
As each chunk of data will be encoded and treated in the same way, it suffices to describe the operations on one chunk of data. A packet is identified with an element in $\mathbb{F}_q$, and we will use ``an element in $\mathbb{F}_q$'', ``a packet'' and ``a symbol'' synonymously.

 A chunk of data is represented by a $B$-dimensional column vector $\mathbf{m} \in \mathbb{F}_q^B$. The data packet stored in a storage node is a linear combination of the components in $\mathbf{m}$, with coefficients taken from $\mathbb{F}_q$. The coefficients associated with a packet form a vector, called the {\em global encoding vector} of the packet. For $i=1,2,\ldots, n$, and $t \geq 0$, the packets stored in node~$i$ are represented by  $\mathbf{M}_{i}^{(t)} \mathbf{m}$, where $\mathbf{M}_i^{(t)}$ is an $\alpha\times B$ matrix and the rows of $\mathbf{M}_{i}^{(t)}$ are the global encoding vectors of the packets in node~$i$ at stage~$t$.
We use superscript $^{(t)}$ to signify that a variable is pertaining to stage~$t$.  We will assume that the global encoding vectors are stored together with the packets in the storage nodes. The overhead on storage incurred by the global encoding vectors can be made vanishingly small when the number of chunks is very large. The $(n,k)$ recovery property is translated to the requirement that the totality of the global encoding vectors in any $k$ storage nodes span the vector space~$\mathbb{F}_q^B$.

\smallskip

The realization of cooperative repair using a linear network code is described as follows.

{\em Stage 0:} For $i=1,2,\ldots,n $, node $i$ is initialized
by storing the $\alpha$ components in $\mathbf{M}_i^{(0)} \mathbf{m}$.

{\em Stage $t$:} We suppose without loss of generality that node 1 to node $r$ fail at stage $t$, and we want to regenerate them at stage $t+1$.

\begin{itemize}
\item {\em Phase 1.} For $j = 1, 2, \ldots , r$ and $i \in \mathcal{H}_{t,j}$,
    the $\beta_1$ packets sent from node $i$ to node $j$  are linear
combinations of the packets stored in node $i$ at stage~$t$. For $\ell=1,2,\ldots, \beta_1$, let the $\ell$-th packet sent from node $i$ to node $j$ be $\mathbf{p}_{ij\ell}^{(t)} \mathbf{M}_i^{(t)} \mathbf{m}$,  where $\mathbf{p}_{ij\ell}^{(t)}$ is a $1\times \alpha $ row vector over $\mathbb{F}_q$.

\item {\em Phase 2.} Stack the $d\beta_1$ received packets by node $j$ into a column vector called $\mathbf{u}_j^{(t)}$.
For $j_1, j_2 \in\{1, 2,\ldots, r\}$ and $j_1 \neq j_2$, node $j_1$ sends $\beta_2$ packets to node $j_2$. For  $\ell=1,2,\ldots, \beta_2$, the $\ell$-th packet sent from node $j_1$ to node $j_2$ is
$\mathbf{q}_{j_1,j_2,\ell}^{(t)} \mathbf{u}_{j_1}^{(t)}$, where $\mathbf{q}_{j_1,j_2,\ell}^{(t)}$ is a ($d\beta_1$)-dimensional row vector over $\mathbb{F}_q$.
\end{itemize}

The $(r-1)\beta_2$ packets received by newcomer $j$ during phase~2 are put together to form an $((r-1)\beta_2)$-dimensional column vector $\mathbf{v}_j^{(t)}$. For $\ell=1,2,\ldots, \alpha$, newcomer $j$ takes the inner product of the vector obtained by concatenating $\mathbf{u}_j^{(t)}$ and $\mathbf{v}_j^{(t)}$, and a vector $\mathbf{r}_{j\ell}^{(t)}$ of length $(d\beta_1+(r-1)\beta_2)$. The resulting finite field element is stored as the $\ell$-th packet in the memory.

The vectors $\mathbf{p}_{ij\ell}^{(t)}$'s, $\mathbf{q}_{j_1,j_2,\ell}^{(t)}$'s and $\mathbf{r}_{j}^{(t)}$'s are called the {\em local encoding vectors}. The components in the local encoding vectors are variables assuming values in $\mathbb{F}_q$. The total number of ``degrees of freedom'' in choosing the local encoding vectors is
$$
N=  rd\beta_1\alpha + r(r-1)\beta_2(d\beta_1) + r\alpha(d\beta_1+(r-1)\beta_2).
$$
We will call these $N$ variables the {\em local encoding variables} at stage~$t$.

The local encoding vectors are chosen in order to satisfy a special property. In the followings, $\mathbf{p}$ is a vector of dimension $n$, whose components are non-negative integers summing to the file size~$B$.

Let $\mathbb{Z}_+^n$ be the set of all vectors of dimension $n$ with non-negative integral components, and
$\mathbf{r}$ be a vector in $\mathbb{Z}_{+}^n$. For each $t\geq 0$ and $\mathbf{h}= (h_1, h_2, \ldots, h_n)$ in $\mathbb{Z}_{+}^n$ majorized by $\mathbf{r}$, let $D_\mathbf{h}^{(t)}$ be the determinant of the matrix obtained by putting together the first $h_i$ rows of $\mathbf{M}^{(t)}_i$ for $i=1,2,\ldots, n$.

\noindent {\bf Regularity property with respect to $\mathbf{r}$:} We say that the regularity property with respect to $\mathbf{r}$ is satisfied if $$D_\mathbf{h}^{(t)}\neq 0$$ for all $t\geq 0$ and all vectors $\mathbf{h}$ in $\mathbb{Z}_+^n$ majorized by $\mathbf{r}$.

\smallskip

We borrow the terminology in~\cite{Kamath_ITA} and call the vector $\mathbf{r}$  the {\em rank accumulation profile}.

We are interested in regularity property with respect to either
$\mathbf{p}_z$ or $\mathbf{q}_\ell$, defined in
\eqref{eq:AP} and \eqref{eq:AP2}, respectively.  The regularity property implies the $(n,k)$ recovery property, because there are precisely $k$ non-zero entries in the rank accumulation profiles in~\eqref{eq:AP} and \eqref{eq:AP2}, and the sum of the components in~\eqref{eq:AP} or \eqref{eq:AP2} is equal to the file size $B$. For example, if we consider $z=1$ in \eqref{eq:AP}, then we have $\alpha= 2(d-1)+r-1$, and the rank accumulation profile in \eqref{eq:AP} becomes
\[
(\alpha, \alpha, \alpha-2, \alpha-4, \ldots, \alpha-2(k-2), \underbrace{0, \ldots,0}_{n-k \text{ times}}).
\]
If the regularity property with respect to this rank accumulation profile is satisfied, then the global encoding vectors in any storage node have rank $\alpha$, the global encoding vectors in any pair of storage nodes have rank $2\alpha$, the global encoding vectors in any three storage nodes have rank $3\alpha-2$, and so~on.

The construction depends on the layered structure of the modified information flow graph defined in the last section, and the factorization of the ``transfer function'' into products of matrices.
We concatenate all packets in the $n$ storage nodes at stage~$t$ into an $(n\alpha)$-dimensional vector, and write
\[
 \mathbf{s}^{(t)} := \mathbf{M}^{(t)}  \mathbf{m},
\]
where $\mathbf{M}^{(t)}$ is the $(\alpha n)\times B$ matrix \[
\mathbf{M}^{(t)} := \begin{bmatrix}
  \mathbf{M}_1^{(t)}  \\
  \mathbf{M}_2^{(t)}  \\
  \vdots\\
  \mathbf{M}_n^{(t)}
 \end{bmatrix}.
\]
At stage 0, the distributed storage system is initialized by $\mathbf{s}^{(0)}= \mathbf{M}^{(t)} \mathbf{m}$. The entries in $\mathbf{M}^{(0)}$ are variables, with values drawn from~$\mathbb{F}_q$.

For $t\geq 1$, the packets at stage $t$ can be obtained  by multiplying $\mathbf{s}^{(t-1)}$ by an $(n\alpha)\times (n\alpha)$ transfer matrix $\mathbf{T}^{(t)}$,
\begin{equation}
 \mathbf{s}^{(t)} = \mathbf{T}^{(t)} \mathbf{s}^{(t-1)}.
 \label{eq:T}
\end{equation}
Suppose that
nodes 1 to $r$ fail and are repaired at stage $t$. The matrix $\mathbf{T}^{(t)}$ can be partitioned into
\begin{equation}
\mathbf{T}^{(t)} = \left[
\begin{array}{c|c}
\mathbf{0} & \mathbf{A} \\ \hline
\mathbf{0} & \mathbf{I}
\end{array}
\right],
\label{eq:AA}
\end{equation}
where $\mathbf{I}$ is the identity matrix of size $(n-r)\alpha \times (n-r)\alpha$, and $\mathbf{A}$ is an $r\alpha \times (n-r)\alpha$ matrix. The entries of $\mathbf{A}$ are multi-variable polynomials with the $N$ local encoding variables at stage~$t$ as the variables. In summary, we can write
\[
\mathbf{s}^{(t)}
= \mathbf{T}^{(t)} \mathbf{T}^{(t-1)} \cdots \mathbf{T}^{(1)}  \mathbf{M}^{(0)} \mathbf{m}.
\]

A multi-variable polynomial is said to be {\em non-zero} if, after expanding it as a summation of terms, there is at least one term with non-zero coefficient.
The {\em local degree} with respect to a given variable is defined as the maximal exponent of this variable, with the maximal taken over all terms.
 A multi-variable polynomial induces a function, called the {\em evaluation mapping}, by substituting the variables by values in~$\mathbb{F}_q$. The next lemma gives sufficient condition under which the induced evaluation mapping is not identically zero.

\begin{lemma}
If $F$ is a non-zero multi-variable polynomial over~$\mathbb{F}_q$ with local degree with respect to each variable strictly less than~$q$,
then we can assign values to the variables such that the polynomial is evaluated to a non-zero value.
\label{lemma:Zippel}
\end{lemma}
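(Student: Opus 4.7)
The plan is to prove the lemma by induction on the number of variables, using the fact that a non-zero univariate polynomial of degree strictly less than $q$ over $\mathbb{F}_q$ cannot vanish identically on $\mathbb{F}_q$, since $|\mathbb{F}_q|=q$ exceeds the number of roots.

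For the base case, let $F(x)$ be a non-zero univariate polynomial of degree $d<q$. Since $F$ has at most $d$ roots in $\mathbb{F}_q$, and $d<q=|\mathbb{F}_q|$, there exists $a\in\mathbb{F}_q$ with $F(a)\neq 0$.

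For the inductive step, suppose the statement holds for all non-zero polynomials in $n-1$ variables with local degree less than $q$ in each variable, and let $F(x_1,\ldots,x_n)$ be a non-zero polynomial in $n$ variables satisfying the hypothesis. I would expand $F$ as a polynomial in $x_n$ with coefficients in $\mathbb{F}_q[x_1,\ldots,x_{n-1}]$, writing
\[
F(x_1,\ldots,x_n)=\sum_{i=0}^{d_n} F_i(x_1,\ldots,x_{n-1})\,x_n^i,
\]
where $d_n<q$ is the local degree of $F$ in $x_n$. Since $F$ is non-zero, at least one coefficient polynomial $F_{i_0}$ is a non-zero polynomial in $n-1$ variables; its local degree in each variable is bounded above by the local degree of $F$ in that variable, hence also strictly less than $q$. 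By the induction hypothesis, there exist values $a_1,\ldots,a_{n-1}\in\mathbb{F}_q$ such that $F_{i_0}(a_1,\ldots,a_{n-1})\neq 0$. Substituting these values into $F$ yields a univariate polynomial $\tilde F(x_n):=F(a_1,\ldots,a_{n-1},x_n)$ whose coefficient of $x_n^{i_0}$ is non-zero, so $\tilde F$ is a non-zero univariate polynomial of degree at most $d_n<q$. The base case then produces $a_n\in\mathbb{F}_q$ with $\tilde F(a_n)\neq 0$, i.e., $F(a_1,\ldots,a_n)\neq 0$.

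There is no real obstacle here; the argument is a standard consequence of the finiteness of the root set of a univariate polynomial. The only subtlety worth flagging is the bookkeeping that the local degree of each coefficient $F_i$ in any variable $x_j$ ($j<n$) does not exceed the local degree of $F$ in $x_j$, which is what allows the induction hypothesis to apply with the same field size~$q$.
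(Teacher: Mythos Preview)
Your proof is correct; this is precisely the standard induction argument. The paper does not supply its own proof of this lemma but simply refers the reader to \cite[p.~143]{IrelandRosen} and \cite[IV.1.8]{Lang}, where the same inductive argument appears.
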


We refer the reader to \cite[p.143]{IrelandRosen} or \cite[IV.1.8]{Lang} for a proof of Lemma~\ref{lemma:Zippel}.

\begin{lemma}
The entries of the matrix $\mathbf{A}$ in~\eqref{eq:AA} are multi-variable polynomials with local degree at most 1 in each of the local encoding variables.
\label{lemma:local_degree}
\end{lemma}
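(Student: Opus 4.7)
The plan is to write $\mathbf{A}$ as an explicit composition of the three phases of the repair protocol and track how the local encoding variables enter in each phase. The crucial observation is that the $p$-, $q$-, and $r$-variables belong to three disjoint groups (coming from Phase 1, Phase 2, and Phase 3 respectively), and each phase contributes a single linear factor per term.

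Fix a newcomer $j \in \{1, \ldots, r\}$ and a memory slot $\ell \in \{1, \ldots, \alpha\}$. The corresponding row of $\mathbf{A}$ is the row mapping the stacked content $\mathbf{s}_{\text{surv}}^{(t-1)}$ of the $n-r$ surviving nodes to the $\ell$-th packet of node $j$ at stage $t$. By the Phase 3 description in the text, this row equals $\mathbf{r}_{j\ell}^{(t)} \cdot \mathbf{B}_j$, where $\mathbf{B}_j$ is the matrix with $d\beta_1 + (r-1)\beta_2$ rows that produces the concatenated vector $[\mathbf{u}_j^{(t)} ; \mathbf{v}_j^{(t)}]$ from $\mathbf{s}_{\text{surv}}^{(t-1)}$. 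Since $\mathbf{r}_{j\ell}^{(t)}$ is a vector of local encoding variables disjoint from those in $\mathbf{B}_j$, it suffices to show that each entry of $\mathbf{B}_j$ is a polynomial of local degree at most 1 in each $p$- and each $q$-variable.

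Split $\mathbf{B}_j$ into the $d\beta_1$ rows $\mathbf{B}_j^{(1)}$ producing $\mathbf{u}_j^{(t)}$ and the $(r-1)\beta_2$ rows $\mathbf{B}_j^{(2)}$ producing $\mathbf{v}_j^{(t)}$. By the Phase 1 description, each row of $\mathbf{B}_j^{(1)}$ is of the form $\mathbf{p}_{ij\ell'}^{(t)} \mathbf{M}_i^{(t-1)}$ (for $i \in \mathcal{H}_{t,j}$ and some $\ell'$), padded with zeros outside the columns corresponding to node $i$. Hence the entries of $\mathbf{B}_j^{(1)}$ are linear forms in the $p$-variables and constant in the $q$-variables. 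The rows of $\mathbf{B}_j^{(2)}$, in turn, are $\mathbf{q}_{j_1,j,\ell'}^{(t)} \mathbf{B}_{j_1}^{(1)}$ for $j_1 \ne j$, so each of their entries is a sum of products of exactly one $q$-variable with exactly one $p$-variable. In either block, each distinct $p$- (resp.\ $q$-) variable appears in at most one factor per term, so the local degree in each of these variables is at most 1.

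Finally, multiplying by the row $\mathbf{r}_{j\ell}^{(t)}$ produces, in each term, a single factor from the $r$-variables and does not increase the degree in the $p$- or $q$-variables. Therefore every entry of $\mathbf{A}$ is a sum of monomials, each of which is a product of at most one $p$-variable, at most one $q$-variable, and at most one $r$-variable, establishing local degree at most 1 in each local encoding variable. No real obstacle arises; the only care needed is verifying that the variable groups from the three phases are disjoint (clear from the indexing $\mathbf{p}_{ij\ell}^{(t)}$, $\mathbf{q}_{j_1,j_2,\ell}^{(t)}$, $\mathbf{r}_{j\ell}^{(t)}$), so that each monomial has the desired multilinear structure.
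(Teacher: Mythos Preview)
Your argument is correct, but it takes a noticeably different route from the paper's. The paper gives a one-line proof: fix all local encoding variables except a single one, and observe that every packet produced in the repair process is then an affine function of that remaining variable; since this holds for each variable separately, the local degree in each is at most~$1$. Your approach instead unrolls the three phases explicitly and exhibits each entry of $\mathbf{A}$ as a sum of monomials of the form (one $p$-variable)$\cdot$(one $q$-variable)$\cdot$(one $r$-variable), which is a strictly stronger structural statement. The paper's argument is shorter and avoids any bookkeeping about which variable sits where; yours gives finer information (the trilinear structure across the three phases) at the cost of tracking the factorization carefully.

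One small slip to clean up: you define $\mathbf{B}_j$ as the matrix taking $\mathbf{s}_{\text{surv}}^{(t-1)}$ to $[\mathbf{u}_j^{(t)};\mathbf{v}_j^{(t)}]$, but then describe the rows of $\mathbf{B}_j^{(1)}$ as $\mathbf{p}_{ij\ell'}^{(t)}\mathbf{M}_i^{(t-1)}$. The factor $\mathbf{M}_i^{(t-1)}$ should not be there---$\mathbf{s}_{\text{surv}}^{(t-1)}$ already equals $\mathbf{M}_{\text{surv}}^{(t-1)}\mathbf{m}$, so the row is just $\mathbf{p}_{ij\ell'}^{(t)}$ placed in the columns of node~$i$ and zero elsewhere. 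This does not affect your conclusion (the entries are still linear in the $p$-variables either way), but the statement as written is inconsistent with your own definition of~$\mathbf{B}_j$.
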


\begin{proof}
We can see this by fixing all but one local encoding variables. Then each packet generated during the repair process is an affine function of the variable which is not fixed.
\end{proof}

In the following, we treat the two different types of Pareto-optimal operating points separately.

{\bf Pareto-optimal operating point of the first type:}
Let $z$ be an integer between 0 and $k-2$.  We want to construct a linear cooperative regenerating code with parameters
\begin{align*}
B &= 2D_{k-z} = k(2d+r-k)-z - z^2, \\
\beta_1&=2,\ \beta_2=1,
\alpha = 2(d-z)+r-1, \text{ and }
\gamma = 2d+r-1,
\end{align*}
and rank accumulation profile $\mathbf{p}_z$ given as in~\eqref{eq:AP}.
Let $\mathcal{P}_z$ be the subset of vectors in $\mathbb{Z}_+^n$ which are majorized by $\mathbf{p}_z$, and $|\mathcal{P}_z|$ be the cardinality of~$\mathcal{P}_z$. We will show by mathematical induction that the regularity property can be maintained as the number of stages increases.

At stage 0, we choose the entries in $\mathbf{M}^{(0)}$ such that the regularity property with respect to $\mathbf{p}_z$ holds at stage~0, i.e.,
the determinant $D_{\mathbf{h}}^{(0)}$ defined in the regularity property is non-zero for all $\mathbf{h} \in \mathcal{P}_z$. This is equivalent to choosing the entries in $\mathbf{M}^{(0)}$ such that $\prod_{\mathbf{h}\in\mathcal{P}_z} D_{\mathbf{h}}^{(0)} \neq 0$.
For each $\mathbf{h} \in \mathcal{P}_z$, the entries in $D_{\mathbf{h}}^{(0)}$ are distinct variables. Hence, the local degree of each entry with respect to each local encoding variable is equal to one. We can loosely upper bound the local degree of $\prod_{\mathbf{h}\in\mathcal{P}_z}D_{\mathbf{h}}^{(0)}$ by $|\mathcal{P}_z|$.
By Lemma~\ref{lemma:Zippel}, we can pick $\mathbf{M}^{(0)}$ such that the regularity property is satisfied at $t=0$ if $q > |\mathcal{P}_z|$.

Let $t$ be a stage number larger than or equal to~1.
Suppose that $D_{\mathbf{h}}^{(t-1)}$ is non-zero for all $\mathbf{h} \in \mathcal{P}_z$. For each $\mathbf{h}\in \mathcal{P}_z$, we let $\mathbf{T}_{\mathbf{h}}^{(t)}$ be the $B \times (\alpha n)$ submatrix of $\mathbf{T}^{(t)}$ obtained by extracting the rows associated with  $\mathbf{h}$. If the rows of $\mathbf{T}^{(t)}$ is divided into $n$ blocks, with each block consisting of $\alpha$ rows, then $\mathbf{T}_{\mathbf{h}}^{(t)}$ is obtained by retaining the first $h_i$ rows of the $i$-th block of rows of $\mathbf{T}_{\mathbf{h}}^{(t)}$, for $i=1,2,\ldots, n$. The entries in $\mathbf{T}_{\mathbf{h}}^{(t)}$ involve the local encoding variables to be determined, but the entries in $\mathbf{M}^{(t-1)}$ are fixed elements in~$\mathbb{F}_q$. The determinant $D_{\mathbf{h}}^{(t)}$ can be written as
\[
D_{\mathbf{h}}^{(t)} = \det(\mathbf{T}_{\mathbf{h}}^{(t)} \mathbf{M}^{(t-1)}) .
\]

By Theorem~\ref{thm:MBCR}, there is an integral flow in the auxiliary graph with input $\mathbf{g}$ and output $\mathbf{h}$, for some integral transmissive vector~$\mathbf{g}$. This means that if the local encoding variables are chosen appropriately, the square submatrix of $\mathbf{T}_{\mathbf{h}}^{(t)}$ obtained by retaining the columns associated with $\mathbf{g}$ is a permutation of the identity matrix, while the other columns not associated with $\mathbf{g}$ are zero.  The square submatrix of $\mathbf{M}^{(t-1)}$ obtained by retaining the rows associated with $\mathbf{g}$ has non-zero determinant by the induction hypothesis.  We can thus choose the local encoding variables such that $D_{\mathbf{h}}^{(t)}$ is evaluated to a non-zero value. In particular, $D_{\mathbf{h}}^{(t)}$ is
a non-zero polynomial with the local encoding variables as the variables.

After multiplying $D_{\mathbf{h}}^{(t)}$ over all $\mathbf{h}\in \mathcal{P}_z$, we see that $\prod_{\mathbf{h}\in\mathcal{P}_j}D_{\mathbf{h}}^{(t)}$ is also a non-zero polynomial.
Each local encoding variable appears in at most $r \alpha$ rows in the determinant $D_{\mathbf{h}}^{(t)}$. By Lemma~\ref{lemma:local_degree}, the local degree of $\prod_{\mathbf{h}\in\mathcal{P}_z}D_{\mathbf{h}}^{(t)}$ can be upper bounded by
$r\alpha |\mathcal{P}_z|$. By Lemma~\ref{lemma:Zippel},  we can choose the local encoding vector at stage $t$ such that the regularity property will continue to hold at stage $t$ provided that
$$q > r\alpha|\mathcal{P}_z| =  r(2(d-z)+r-1)|\mathcal{P}_z|.$$
The cardinality of $\mathcal{P}_z$ is a constant that does not depend on the total number of stages nor the total number of data collectors.
After a change of indexing variable $z=k-j$, we see that the operating points $(\tgamma_j, \talpha_j)$, for $j=2,3,\ldots, k$, can be achieved by linear network coding over a sufficiently large finite field.

\smallskip

{\bf Pareto-optimal operating point of the second type:}
Let $\ell$ be an integer between 0 and $\lfloor k/r \rfloor$,  and set
\begin{align*}
B &= D_\ell' = k(d+r(\ell+1)-k) - \frac{r^2 \ell (\ell+1)}{2}, \\
\beta_1&=\beta_2=1, \
\alpha = d-k+r(\ell+1), \text{ and }
\gamma =d+r-1.
\end{align*}

Consider  the rank accumulation profile $\mathbf{q}_\ell$ defined in~\eqref{eq:AP2}.
Let $\mathcal{Q}_\ell$ be the subset of vectors in $\mathbb{Z}_+^n$ which are majorized by $\mathbf{q}_\ell$.
By similar arguments for the operation point of the first type, we can guarantee that the regularity property with respect to $\mathbf{q}_\ell$ is satisfied at all stages provided that the size of the finite field is lower bounded by $$q > r(d-k+r(\ell+1))|\mathcal{Q}_\ell|.$$

The next theorem summarizes the main result in this section.

\begin{theorem}
\label{thm:LNC}
If the size of the finite field $q$ is larger than
\begin{align*}
&\max_{j=2,\ldots,k} r(2(d-k+j)+r-1)|\mathcal{P}_{k-j}|, \text{ and} \\
&\max_{\ell=0,\ldots, \lfloor k/r\rfloor} r(d-k+r(\ell+1))|\mathcal{Q}_\ell|,
\end{align*}
then we can implement
linear network codes over $\mathbb{F}_q$ for functional and cooperative repair, attaining the boundary points of $\mathcal{C}_{\mathrm{MF}}$. Thus,
$ \mathcal{C}_{\mathrm{MF}} = \mathcal{C}_{\mathrm{AD}}$.
\end{theorem}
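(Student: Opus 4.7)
The plan is to verify, for each Pareto-optimal corner point, that the regularity property with respect to the relevant rank-accumulation profile ($\mathbf{p}_z$ for points of the first type, $\mathbf{q}_\ell$ for points of the second type) can be maintained for all stages by working over a fixed finite field. First I would induct on the stage index $t$. At stage $0$, the determinants $D_{\mathbf{h}}^{(0)}$ for $\mathbf{h}\in\mathcal{P}_z$ (resp.\ $\mathcal{Q}_\ell$) are polynomials of local degree $1$ in the entries of $\mathbf{M}^{(0)}$, so the product $\prod_{\mathbf{h}}D_{\mathbf{h}}^{(0)}$ has local degree at most $|\mathcal{P}_z|$ (resp.\ $|\mathcal{Q}_\ell|$); Lemma~\ref{lemma:Zippel} then furnishes a valid initialization once $q$ exceeds this bound.

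For the inductive step, fix the entries of $\mathbf{M}^{(t-1)}$ obtained from the previous step and consider $\mathbf{T}^{(t)}$ with its $N$ local encoding variables. For each $\mathbf{h}$ majorized by the profile, $D_{\mathbf{h}}^{(t)} = \det(\mathbf{T}_{\mathbf{h}}^{(t)}\mathbf{M}^{(t-1)})$ is a polynomial in the local encoding variables. The crucial point is to show this polynomial is \emph{not identically zero}: here I invoke Theorem~\ref{thm:MBCR} (resp.\ Theorem~\ref{thm:MSCR}), which guarantees an integral flow in the auxiliary graph from some integral transmissive input vector $\mathbf{g}$ to the prescribed output $\mathbf{h}$. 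This flow yields an explicit assignment of the local encoding variables making $\mathbf{T}_{\mathbf{h}}^{(t)}$ extract, up to a permutation, exactly the rows of $\mathbf{M}^{(t-1)}$ indexed by $\mathbf{g}$; since $\mathbf{g}$ lies in the same base-polymatroid, the inductive hypothesis ensures the corresponding submatrix of $\mathbf{M}^{(t-1)}$ is non-singular, so $D_{\mathbf{h}}^{(t)}$ evaluates to a non-zero element. Consequently $\prod_{\mathbf{h}}D_{\mathbf{h}}^{(t)}$ is a non-zero polynomial; by Lemma~\ref{lemma:local_degree} each local encoding variable appears with local degree at most $1$ within a single $D_{\mathbf{h}}^{(t)}$ and in at most $r\alpha$ of the constituent rows, so the total local degree is bounded by $r\alpha|\mathcal{P}_z|$ (resp.\ $r\alpha|\mathcal{Q}_\ell|$). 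A second application of Lemma~\ref{lemma:Zippel} then picks the stage-$t$ local encoding variables preserving the regularity property. The key observation is that this bound on $q$ depends only on $n$, $d$, $k$, $r$, $z$ (or $\ell$), never on $t$ or on the number of data collectors, which is what allows the system to run indefinitely over a fixed field.

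Taking the maximum over all corner-point choices $z=k-j$ for $j=2,\ldots,k$ and $\ell=0,\ldots,\lfloor k/r\rfloor$ yields the displayed lower bound on $q$. Because regularity implies the $(n,k)$ recovery property (the nonzero components of the profile have cardinality $k$ and sum to $B$), each Pareto-optimal corner point in Theorem~\ref{thm:cornerpoint} is admissible; the remaining boundary points of $\mathcal{C}_{\mathrm{MF}}$ are obtained by standard time-sharing (space-sharing over file chunks) between adjacent corners, and the two half-line components \eqref{eq:ray1}, \eqref{eq:ray2} are obtained by padding $\alpha$ or $\gamma$ with unused capacity. Combined with the reverse inclusion $\mathcal{C}_{\mathrm{MF}}\supseteq\mathcal{C}_{\mathrm{AD}}$ from \eqref{eq:reverse} and the identification $\mathcal{C}_{\mathrm{LP}}=\mathcal{C}_{\mathrm{MF}}$ from Corollary~\ref{thm:extreme_point}, this gives $\mathcal{C}_{\mathrm{MF}}=\mathcal{C}_{\mathrm{AD}}$.

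The main obstacle is the inductive step: one must argue that $D_{\mathbf{h}}^{(t)}$ is a nonzero polynomial without losing control of the field size as $t$ grows. The layered/self-similar structure of the modified information flow graph together with Theorem~\ref{thm:MBCR}/\ref{thm:MSCR} is exactly what decouples the problem at each stage and keeps the degree bound independent of $t$; verifying that the auxiliary-graph flow really does exhibit $D_{\mathbf{h}}^{(t)}$ as a nonzero polynomial, and that the input vector $\mathbf{g}$ it selects lies again in the polymatroid so the hypothesis carries through, is the subtle part that makes the whole induction close.
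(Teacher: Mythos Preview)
Your proposal is correct and follows essentially the same route as the paper: induction on the stage index, using Theorems~\ref{thm:MBCR} and~\ref{thm:MSCR} to exhibit an integral flow that certifies $D_{\mathbf{h}}^{(t)}$ is a nonzero polynomial (with input vector $\mathbf{g}$ back in the same base-polymatroid so the induction closes), bounding the local degree by $r\alpha|\mathcal{P}_z|$ or $r\alpha|\mathcal{Q}_\ell|$ via Lemma~\ref{lemma:local_degree}, and applying Lemma~\ref{lemma:Zippel} at each stage. The time-sharing argument for the remaining boundary points and the appeal to~\eqref{eq:reverse} for the reverse inclusion are likewise exactly what the paper does.
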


\begin{proof}
We have already shown that the corner points of $\mathcal{C}_{\mathrm{MF}}$ can be achieved by linear network coding. By an analog of ``time-sharing'' argument, we see that all boundary points of $\mathcal{C}_{\mathrm{MF}}$ are achievable by linear network coding, if the finite field size is sufficiently large. Therefore,  $\mathcal{C}_{\mathrm{AD}} \supseteq \mathcal{C}_{\mathrm{MF}}$.
The reverse inclusion $\mathcal{C}_{\mathrm{AD}} \subseteq \mathcal{C}_{\mathrm{MF}}$ is shown in~\eqref{eq:reverse}. We conclude that $\mathcal{C}_{\mathrm{MF}} = \mathcal{C}_{\mathrm{AD}}$.
\end{proof}

The cardinality of $\mathcal{P}_z$ and $\mathcal{Q}_\ell$ depend on parameters $n$, $k$, $d$ and $r$, but do not depend on the number of stages. Hence a fixed finite field is sufficient to maintain the $(n,k)$ recovery property at all stages. The proof Theorem~\ref{thm:C} is now completed.

%We note that in an operating point of the first type, we have $\beta_1 = 2\beta_2$, and in an operating of the second type, we have $\beta_1=\beta_2$.

\begin{corollary}
The operating point of the first type (in particular the MBCR point $ (\tilde{\gamma}_{\mathrm{MBCR}}, \tilde{\alpha}_{\mathrm{MBCR}})$) is achieved if and only if $\beta_1 = 2\beta_2$. On the other hand, the operating point of the second type (in particular the MSCR point $ (\tilde{\gamma}_{\mathrm{MSCR}}, \tilde{\alpha}_{\mathrm{MSCR}}) $) is achieved if and only if $\beta_1 = \beta_2$.
\label{cor:MSCRMBCR}
\end{corollary}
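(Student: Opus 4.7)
The plan is to leverage the equivalence $\mathcal{C}_{\mathrm{AD}} = \mathcal{C}_{\mathrm{MF}} = \mathcal{C}_{\mathrm{LP}}$ established in Corollary~\ref{thm:extreme_point} and Theorem~\ref{thm:LNC}, which reduces the achievability of a corner $(\tgamma, \talpha)$ by parameters $(\tbeta_1, \tbeta_2)$ to $(\tbeta_1, \tbeta_2)$ being an optimal solution of the parametric linear program~\eqref{eq:LP_objective} at parameter $\talpha$ with objective value $\tgamma$. Sufficiency is then immediate from the constructions in Section~\ref{sec:LN}: the first-type point $(\tgamma_j, \talpha_j)$ is realized with $\beta_1 = 2$, $\beta_2 = 1$, and the second-type point $(\tgamma_\ell', \talpha_\ell')$ with $\beta_1 = \beta_2 = 1$; these cover the MBCR and MSCR cases as $j=k$ and $\ell=0$, respectively.

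For necessity at a first-type point $(\tgamma_j, \talpha_j)$, I would argue that the LP optimum is uniquely attained at $P_j(\talpha_j)$, which by definition in Section~\ref{sec:OP} lies on the line $\tbeta_1 = 2\tbeta_2$. Uniqueness reduces to a slope comparison via Lemma~\ref{lemma:P_j}: part~\ref{lemma:itemB} states that the magnitude of the slope of $L_j'(\talpha)$ is strictly less than $d/(r-1)$, while the defining condition $d \leq (r-1)\mu(j)$ for the first-type corner (made strict at $j=k$ by part~\ref{lemma:itemF}) forces the slope of $L_j(\talpha)$ to have magnitude at least $d/(r-1)$. Hence the objective slope $-d/(r-1)$ is strictly sandwiched between the slopes of the two tight constraints $L_j'$ and $L_j$ at $P_j(\talpha_j)$, so any deviation along the level set $d\tbeta_1 + (r-1)\tbeta_2 = \tgamma_j$ violates one of the constraints in~\eqref{eq:LP1} or~\eqref{eq:LP2}.

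For necessity at a second-type point $(\tgamma_\ell', \talpha_\ell')$, Lemma~\ref{lemma1} supplies that all constraints $L_j(\talpha_\ell')$ for $j = \ell r, \ell r + 1, \ldots, (\ell+1)r$ pass through $Q_\ell = (1/D_\ell', 1/D_\ell')$. Among the indices in this range, at least one index of the form $mr$ lies in the valid range $1 \leq j \leq k$ (namely $(\ell+1)r$ when $(\ell+1)r \leq k$, and $\ell r$ otherwise), and the corresponding $L_{mr}(\talpha_\ell')$ is vertical in the $\tbeta_1$-$\tbeta_2$ plane by part~\ref{lemma:itemC} of Lemma~\ref{lemma:P_j}. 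Tightness of this vertical constraint forces $\tbeta_1 = 1/D_\ell'$, and then the objective equation pins down $\tbeta_2 = 1/D_\ell'$, yielding $\tbeta_1 = \tbeta_2$.

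The main obstacle I anticipate is justifying rigorously that the identified constraints ($L_j, L_j'$ for the first type, the vertical $L_{mr}$ for the second type) must actually be tight at every $(\tbeta_1, \tbeta_2)$ achieving the target operating point. A priori, another facet of the LP feasible region could pass through the same level set of the objective and admit a different feasible solution attaining $\tgamma$; ruling this out requires checking complementary slackness, or equivalently verifying that no other LP constraint has its defining line intersect the objective line at the claimed optimum. The slope-sandwiching in Lemma~\ref{lemma:P_j} and the explicit common intersection in Lemma~\ref{lemma1} are exactly the tools that should enable this verification.
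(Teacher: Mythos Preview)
The paper states this corollary without an explicit proof; it is meant to follow directly from the analysis in Sections~\ref{sec:OP}--\ref{sec:LN}, and your approach is precisely the natural one that analysis suggests. Sufficiency is immediate from the constructions, and for necessity your reduction---via the cut-set inequalities of Theorem~\ref{thm:LP}---to uniqueness of the LP optimum at $P_j(\talpha_j)$ or $Q_\ell$ is the right move.

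There is, however, a concrete gap in your second-type argument. You claim that among the indices $\ell r,\ldots,(\ell+1)r$ at least one multiple of $r$ lies in $\{1,\ldots,k\}$; this fails precisely at the MSCR point $\ell=0$ when $r>k$, since then neither $0$ nor $r$ is a valid constraint index and no $L_j$ with $1\le j\le k$ is vertical. The repair is easy and uses the same sandwiching you deployed for the first type: when $r>k$ the index $j=k$ lies in the range $\{0,\ldots,r\}$ covered by Lemma~\ref{lemma1}, so $L_k(\talpha_0')$ is tight at $Q_0$, and by Lemma~\ref{lemma:P_j}\,(\ref{lemma:itemF}) its slope has magnitude $\mu(k)>d/(r-1)$; pair this with $L_1$, whose slope magnitude is $\mu(1)<d/(r-1)$ by part~(\ref{lemma:itemD}). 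Strict slope-sandwiching of the objective then forces the optimum to be the single point $Q_0$, giving $\tbeta_1=\tbeta_2$. With this patch your argument goes through.
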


\section{Two Families of Explicit Cooperative Regenerating Codes}
\label{sec:explicit}

In this section we present two families of explicit constructions of optimal cooperative regenerating codes for exact repair, one for MSCR and one for MBCR. The constructed regenerated codes are {\em systematic}, meaning that the native data packets are stored somewhere in the storage network. Hence, if a data collector is interested in part of the data file, he/she can contact some particular storage nodes and download directly without any decoding.
Both constructions are for the case $d=k$. We note that all single-failure regenerating codes for $d=k$ are trivial, but in the multi-failure case, something interesting can be done when $d=k$.
As in the previous section, a finite field element is referred to as a packet.

\subsection{Construction of MSCR Codes for Exact Repair}

In this construction, the number of packets in a storage node is identical to the number of nodes contacted by a newcomer, namely $\alpha=r$. The parameters of the cooperative regenerating code in the first family are
\begin{align*}
 d &= k, \
 B = kr,\
 n \geq d+r, \\
 \alpha &= r,\
 \gamma=d+r-1.
\end{align*}
The operating point
\[
(\tilde{\alpha}, \tilde{\gamma}) = \frac{1}{B}(\alpha, \gamma) = \big( \frac{1}{k}, \frac{d+r-1}{kr}\big)
\]
attains the MSCR point when $d=k$.
%The cooperative regenerating code constructed below has rank accumulation profile
%\[
%(\underbrace{\alpha, \alpha, \ldots, \alpha}_{k\text{ times}},0,0,\ldots )
%\]

We divide the data file into chunks. Each chunk contains $B=kr$ elements in finite field $\mathbb{F}_q$.
We need $r$ matrices $\mathbf{G}_j$, for $j=1,2,\ldots, r$, as building blocks. For each $j$, the matrix  $\mathbf{G}_j$ is an $n\times k$ matrix over $\mathbb{F}_q$ (with $n > k$), satisfying that property that any $k\times k$ submatrix is non-singular. For example, $\mathbf{G}_j$ may be a Vandermonde matrix with distinct rows. Hence, the finite field size can be any prime power larger than or equal to~$n$. We can also use the same $n\times k$ matrix for all $\mathbf{G}_j$'s, but the construction also works if the $\mathbf{G}_j$'s are different.
For $1\leq a \leq n$, and any $a$ distinct integers $i_1, i_2,\ldots, i_a$ between 1 and $n$, we let the matrix obtained by retaining rows $i_1, i_2,\ldots, i_a$ in $\mathbf{G}_j$ by $\mathbf{G}_j[i_1, i_2,\ldots, i_a]$.

In a chunk of data, there are $B=kr$ source packets.
We divide the $kr$ source packets into $r$ groups, with each group containing $k$ packets. The $r$ groups of packets are represented by
$k$-dimensional column vectors, $\mathbf{m}_1$, $\mathbf{m}_2,\ldots, \mathbf{m}_r$.  For $i=1,2,\ldots, n$, and $j=1,2,\ldots, r$, we store $\mathbf{G}_j[i] \cdot\mathbf{m}_j$ as
the $j$-th packet stored in the $i$-th storage node, where ``$\cdot$''
denotes the dot product of two vectors.
In other words, the $r$ packets stored in node $i$ are
\[
 \mathbf{G}_1[i] \cdot \mathbf{m}_1,\
 \mathbf{G}_2[i] \cdot \mathbf{m}_2,\ \ldots, \
 \mathbf{G}_r[i] \cdot \mathbf{m}_r.
 \]

Suppose that a data collector connects to  nodes $i_1$, $i_2,\ldots,i_k$. It downloads all the $kr$ packets stored in these $k$ nodes, namely, $\mathbf{G}_j[i_\ell] \cdot \mathbf{m}_j$, for $\ell =1,2,\ldots, k$, and $j=1,2,\ldots, r$. For each $j$, the $k$ symbols  $\mathbf{G}_j[i_\ell]\cdot \mathbf{m}_j$, $\ell =1,2,\ldots, k$, can be put together as a column vector
$$ \mathbf{G}_j[i_1,i_2,\ldots, i_k]\cdot
\mathbf{m}_j.$$
The $k\times k$ matrix $\mathbf{G}_j[i_1,i_2,\ldots, i_k]$ is non-singular by construction. We can thus solve for $\mathbf{m}_j$. This establishes the $(n,k)$ recovery property.

Suppose that nodes $i_1$, $i_2,\ldots, i_r$ fail. We want to repair them exactly with repair bandwidth $d+r-1$ per newcomer. In the first phase of the repair process, the $r$ newcomers have to agree upon an ordering among themselves, so that we can talk about the first newcomer, second newcomer, and third newcomer, etc. Suppose that node $i_1$ is the first newcomer, $i_2$ is the second newcomer, and so on.
For $j=1,2,\ldots, r$, newcomer $i_j$ connects to any $d$ surviving storage nodes, say nodes $\nu_{j,1}$, $\nu_{j,2}, \ldots, \nu_{j,d}$, and downloads packet $\mathbf{G}_j[\nu_{j,x}] \cdot \mathbf{m}_j$ from node $\nu_{j,x}$, for $x=1,2,\ldots, d$. We note that no arithmetic operation is required in the first phase, because the packet $\mathbf{G}_j[\nu_{j,x}] \cdot \mathbf{m}_j$ can be read from the memory of node $\nu_{j,x}$ directly. The traffic required in the first phase is $rd$ packet transmissions.
At the end of the first phase,  newcomer $i_j$ can decode $\mathbf{m}_j$ by inverting the $k\times k $ matrix $\mathbf{G}_j[\nu_{j,1}, \nu_{j,2}, \ldots, \nu_{j,d}]$.

In the second phase of the repair process, for $j=1,2,\ldots, r$, newcomer $i_j$ computes and sends $\mathbf{G}_{j}[i_\ell] \cdot \mathbf{m}_{j}$ to newcomer $i_\ell$, for $\ell\in\{1,2,\ldots, r\} \setminus\{ j \}$. This can be done because $\mathbf{m}_{j}$ has been decoded in the first phase, and $\mathbf{G}_{j}[i_\ell]$ is known to every newcomer.
A total of $r(r-1)$ packet transmissions are required in the second phase. To complete the regeneration process, newcomer $i_j$ computes and stores $\mathbf{G}_j[i_j]\cdot \mathbf{m}_{j}$.
The total repair bandwidth equals $r(d+r-1)$ and matches the MSCR operating point.

The example in Section~\ref{section:example}  can be obtained by this construction, with parameters $d=k=r=\alpha=2$, $n=B=4$, and
\[
 \mathbf{G}_1 = \begin{bmatrix}1&0\\0&1\\1&1\\2&1 \end{bmatrix}, \quad
 \mathbf{G}_2 = \begin{bmatrix}1&0\\0&1\\2&1\\1&1 \end{bmatrix}.
\]

\subsection{Construction of MBCR Codes for Exact Repair}
The second construction matches the MBCR point.
The parameters  are
\begin{align*}
 d &= k, \
 B = k(k+r),\\
 n &= d+r, \
 \alpha = \gamma = 2d+r-1.
\end{align*}
The operating point matches the MBCR point for $d=k$,
\[
 (\tilde{\alpha}, \tilde{\gamma}) = \frac{1}{B}(\alpha, \gamma) =
 \big(\frac{2d+r-1}{k(k+r)}, \frac{2d+r-1}{k(k+r)} \big).
\]
%The cooperative regenerating code constructed below has rank accumulation profile
%\[
%(2k+r-1, 2k+r-3, \ldots, r+1,0,0,\ldots ).
%\]

In this construction, we need $n$ matrices $\mathbf{H}_i$ as building blocks. For $i=1,2,\ldots, n$, $\mathbf{H}_i$ is an $(n-1)\times k$ matrix over $\mathbb{F}_q$, such that any $k\times k $ submatrix is non-singular. As in the previous construction, we can use Vandermonde matrices for instance, and the field size requirement is thus $q\geq n-1$.

We divide the data into chunks, such that each chunk of data consists of $B=kn$ data packets.  In each chunk we denote the $kn$ data packets by $x_0$, $x_1, \ldots, x_{kn-1}$. We divide these $kn$ packets into $n$ groups. The first group consists of $x_0, x_1,\ldots, x_{k-1}$, the second group consists of $x_{k}, x_{k+1}, \ldots, x_{2k-1}$, and so on. For $i=1,2,\ldots, n$, we represent the  packets in the $i$-th group by row vector
$$\mathbf{x}_{i}:=(x_{(i-1)k}, x_{(i-1)k+1}, \ldots, x_{(i-1)k+k-1}).$$

For $1\leq a \leq n-1$, and any $a$ distinct integers $i_1, i_2,\ldots, i_a$ between 1 and $n-1$, we let the matrix obtained by retaining rows $i_1, i_2,\ldots, i_a$ in $\mathbf{H}_i$ by $\mathbf{H}_i[i_1, i_2,\ldots, i_a]$.  We present the encoding by an $n\times n $ array $\mathsf{A}$ (see Table.~\ref{table:arrayA} for an example). The content of array $\mathsf{A}$ is obtained as follows.
\begin{enumerate}
\item
For $i=1,2,\ldots, n$, the diagonal entry $\mathsf{A}(i,i)$ contains the $k$ packets in $\mathbf{x}_i$.

\item
For $i=1,2,\ldots, n-1$ and $j=i+1,i+2,\ldots, n$, the entry $\mathsf{A}(i,j)$ contains one packet $\mathbf{H}_j[i]\cdot \mathbf{x}_j$.

\item
For $i=2,3,\ldots, n$ and $j=1,2,\ldots, i-1$, the entry $\mathsf{A}(i,j)$ contains one packet $\mathbf{H}_j[i-1]\cdot \mathbf{x}_j$.
\end{enumerate}

We note that for each $i=1,2,\ldots,n$, each of the packets $\mathbf{H}_i[1]\cdot \mathbf{x}_i$,
$\mathbf{H}_i[2]\cdot \mathbf{x}_i, \ldots,
\mathbf{H}_i[n-1]\cdot \mathbf{x}_i$, appears once and exactly once in the $i$-th column of the array. Each diagonal entry of $\mathsf{A}$ contains $k$ packets, while each off-diagonal entry of $\mathsf{A}$ contains 1 packet. For $i=1,2,\ldots, n$, the $i$-th node stores the content of $\mathsf{A}(i,1)$, $\mathsf{A}(i,2),\ldots, \mathsf{A}(i,n)$ in the $i$-th row of array $\mathsf{A}$. The number of packets in a storage  node is
$$k+(n-1) = d+(d+r-1) = 2d+r-1.$$ The encoding  has the important property that the $i$-th node stores a copy of the packets in the $i$-th group of packets uncoded so that node $i$ can compute any packet in the $i$-th column of the array~$\mathsf{A}$.

\begin{table}
\caption{The array $\mathsf{A}$ in the explicit construction of MBCR code for $n=5$, $d=k=3$ and $r=2$.}
\[
\begin{array}{|c|c|c|c|c|} \hline
\mathbf{x}_1 & \mathbf{H}_2[1] \cdot \mathbf{x}_2 & \mathbf{H}_3[1] \cdot \mathbf{x}_3& \mathbf{H}_4[1] \cdot \mathbf{x}_4& \mathbf{H}_5[1] \cdot \mathbf{x}_5 \\ \hline
\mathbf{H}_1[1] \cdot \mathbf{x}_1 & \mathbf{x}_2 & \mathbf{H}_3[2] \cdot \mathbf{x}_3 & \mathbf{H}_4[2] \cdot \mathbf{x}_4 & \mathbf{H}_5[2] \cdot \mathbf{x}_5\\ \hline
\mathbf{H}_1[2] \cdot \mathbf{x}_1 & \mathbf{H}_2[2] \cdot \mathbf{x}_2 & \mathbf{x}_3 & \mathbf{H}_4[3] \cdot \mathbf{x}_4 & \mathbf{H}_5[3] \cdot \mathbf{x}_5 \\ \hline
\mathbf{H}_1[3] \cdot \mathbf{x}_1 & \mathbf{H}_2[3] \cdot \mathbf{x}_2 & \mathbf{H}_3[3] \cdot \mathbf{x}_3 & \mathbf{x}_4 & \mathbf{H}_5[4] \cdot \mathbf{x}_5 \\ \hline
\mathbf{H}_1[4] \cdot \mathbf{x}_1 & \mathbf{H}_2[4] \cdot \mathbf{x}_2  & \mathbf{H}_3[4] \cdot \mathbf{x}_3  & \mathbf{H}_4[4] \cdot \mathbf{x}_4 & \mathbf{x}_5 \\ \hline
\end{array}
\]
\label{table:arrayA}
\end{table}

\begin{table*}
\caption{An MBCR code for $n=5$, $d=k=3$ and $r=2$.}
\[{
\begin{array}{|c||c|c|c|c|c|} \hline
%\text{Node} & &&&& \\ \hline\hline
\text{Node }1 & x_0, x_1,x_2  &  x_3        &  x_{6}     & x_{9}              & x_{12}\\ \hline
\text{Node }2 & x_0           & x_3,x_4, x_5&  x_{7}     & x_{10}             &         x_{13}\\ \hline
\text{Node }3 & x_{1}         & x_4        & x_6,x_7,x_8 & x_{11}&         x_{14}\\ \hline
\text{Node }4 & x_{2}         & x_{5}      & x_{8}       & x_{9},x_{10},x_{11} &x_{12}+x_{13}+x_{14} \\ \hline
\text{Node }5 & x_0+x_1+x_2   & x_3+x_4+x_5& x_6+x_7+x_8 & x_{9}+x_{10}+x_{11} & x_{12}, x_{13}, x_{14} \\ \hline
\end{array}
}
\]
\label{table:MBCR5}
\end{table*}

For example, consider the parameters $n=5$, $k=d=3$, $r=2$, $\alpha=7$. Let
\[
\mathbf{H}_1 = \mathbf{H}_2 =
\mathbf{H}_3 = \mathbf{H}_4 =
\mathbf{H}_5 =
\begin{bmatrix}
1&0&0 \\
0&1&0 \\
0&0&1 \\
1&1&1
\end{bmatrix}
\]
be matrices over $\mathbb{F}_2$. We note that any three rows of $\mathbf{H}_i$ are linearly independent over $\mathbb{F}_2$.
A chunk of data consists of $B=15$ packets $x_0$, $x_1,\ldots, x_{14}$, each packet contains one bit.  The content of the storage nodes is shown in the array in Table~\ref{table:MBCR5}.  The packets in each row of the array are the content of the corresponding node.

Using the property of the matrices $\mathbf{H}_i$ that any $k$ rows of $\mathbf{H}_i$ form a non-singular matrix, it is straightforward to check that the $B$ packets in a chunk can be decoded from the content of any $k$ storage nodes.

Suppose that nodes $i_1$, $i_2,\ldots, i_r$ fail, where $i_1, i_2, \ldots, i_r$ are $r$ distinct integers between 1 and $n$.  We generate the content of the new nodes $i_1,i_2,\ldots, i_r$ as follows.

\begin{enumerate}
\item For $j$ in $\{1,2,\ldots, n\}\setminus \{i_1,i_2,\ldots, i_r\}$ and $i\in\{i_1,i_2,\ldots, i_r\}$, the surviving node $j$ computes the packet in $\mathsf{A}(i,j)$ and sends it to newcomer $i$. This is possible because the $j$-th node stores a copy of the packets in the $j$-th group of packets uncoded, and hence can compute any packet in the $j$-th column of the array~$\mathsf{A}$.

\item
For $i\in\{i_1,i_2,\ldots, i_r\}$, the surviving node with index $j$ in $\{1,2,\ldots, n\}\setminus \{i_1,\ldots, i_r\}$ sends the packet in $\mathsf{A}(j,i)$ to the new node~$i$.  After receiving $k$ packets, the new node $i$, for $i\in\{i_1,i_2,\ldots, i_r\}$, is able to recover the $k$ packets in $\mathbf{x}_i$.

\item For  $i,i'\in\{i_1,i_2,\ldots, i_r\}$, $i\neq i'$, the new node $i$ computes the packet in $\mathsf{A}(i,i')$ and sends it to the new node $i'$.
\end{enumerate}

The number of packet transmissions in steps 1, 2 and 3 are $r(n-r)$, $r(n-r)$ and $r(r-1)$, respectively. The total number of packet transmissions in the repair process is thus
\[
 r(n-r+n-r+r-1) = r(2d+r-1),
\]
achieving the minimum repair bandwidth at the MBCR point.

For example, suppose nodes 4 and 5 fails in the example in Table~\ref{table:MBCR5}.
In the first step, node 1 transmits $x_2$ to node 4 and $x_0+x_1+x_2$ to node 5. Node 2 transmits $x_5$ to node 4 and $x_3+x_4+x_5$ to node 5. Node 3 transmits $x_8$ to node 4 and $x_6+x_7+x_8$ to node~5.
In the second step, nodes 1, 2 and 3 send packets $x_9$, $x_{10}$ and $x_{11}$ to node 4, and packets $x_{12}$, $x_{13}$ and $x_{14}$ to node~5.
Finally, node 4 computes $x_9+x_{10}+x_{11}$ and sends it to node 5. Node 5 computes $x_{12}+x_{13}+x_{14}$ and sends it to node~4.
We also observe that in Table~\ref{table:MBCR5}, each row has rank 7, every pair of two rows have rank 12, and every three rows have rank~15.

\section{Concluding Remarks}

We invoke an existence theorem of submodular flow to obtain the value of max-flow in the special class of graph induced from the cooperative scheme for functional repair. By exploiting the layered structure of the information flow graph, the computation of max-flow is decomposed into the analysis of a section of the infinite graph. A closed-form expression of the trade-off between storage and repair bandwidth is derived by determining the rank accumulation profiles at the corner points of the trade-off curve. We also show that the corners point can be achieved by linear network codes.

In the literature, most of the existing works related to the application of submodular flow to deterministic networks focus on the  the computation of max-flow algorithmically. For example, submodular function minimization  are used in \cite{YazdiSavari, Goemans} to determine the capacity of deterministic linear networks introduced in~\cite{ADT11}. Combinatorial algorithm for the computation of the capacity deterministic linear networks can be found in~\cite{AmaudruzFragouli09}. Submodular flow technique is also used in~\cite{Chekuri12} to compute multi-commodity flows in polymatroidal networks~\cite{Lawler}, and in \cite{Milo12} for minimum-cost multicast with decentralized sources. Extension to a more general polylinking flow network is given in~\cite{Fujishige11}.
%In~\cite{MJKim}, code constructions for deterministic networks using techniques similar to that in this paper can be found.

The MBCR code construction in this paper is generalized in~\cite{Jiekak}. In~\cite{WangZhang}, a construction for all possible parameters on the MBCR operating point is given.  An explicit construction of MSCR code for $k=2$ is presented in~\cite{LeScouarnec12}.
Optimal cooperative regenerating codes beyond the ones presented in this paper and in~\cite{Jiekak, WangZhang, LeScouarnec12} is an interesting direction for future studies.

\appendices

\section{Derivation of the Max-flow-min-cut Theorem from Frank's Theorem}
\label{app:maxflowFrank}

To see that the max-flow-min-cut theorem is a special case of Frank's theorem, we consider a weighted directed graph  $H=(\mathcal{V},\mathcal{E})$ with two distinguished vertices $S$ and $T$. We denote the capacity of an edge $e\in\mathcal{E}$ by $c(e)$, which is a non-negative real number. For a subset $\mathcal{E}'$ of $\mathcal{E}$, we let $c(\mathcal{E'})$ be the sum of the capacities of the edges in $\mathcal{E}'$.

Suppose that the source vertex $S$ has no incoming edge and the sink vertex $T$ has no out-going edge. Let the minimal cut capacity be denoted by~$M$. In the following, we prove the existence of a flow with value $M$ by Theorem~\ref{thm:Frank}. Define a function $f:\mathcal{V}\rightarrow \mathbb{R}$ by
\[
 f(x) = \begin{cases}
 M & \text{ if } x=  S ,\\
 -M & \text{ if } x=  T, \\
 0 & \text{ otherwise},
 \end{cases}
\]
and extend it to a set function by
defining
\[f(\mathcal{S}) = \sum_{x\in \mathcal{S}} f(x)
\]
for $\mathcal{S}\subseteq \mathcal{V}$.
The set function $f$ is a modular, and hence is submodular.

For $e\in\mathcal{E}$, let the lower bound  $\lb(e)$ be identically zero, and the upper bound  $\ub(e)$ be the corresponding edge capacity $c(e)$. Hence,  \eqref{eq:cut} is equivalent to
\begin{equation}
-c(\Delta^-{\mathcal{S}})  \leq f(\mathcal{S}).
\label{eq:app_Frank}
\end{equation}

Now we check that \eqref{eq:app_Frank} is satisfied for all $\mathcal{S}\subseteq \mathcal{V}$ by considering two cases.

(i) $f(\mathcal{S})\geq 0$. The condition in~\eqref{eq:app_Frank} holds because right-hand side is non-negative, while
the left-hand side is less than or equal to~0.

(ii) $f(\mathcal{S})< 0$. This case occurs only when
$\mathcal{S}$ contains the sink vertex $T$ but not source vertex~$S$.
The condition in~\eqref{eq:app_Frank} can be re-written as
$ M \leq c(\Delta^- {\mathcal{S}})$.
The value $c(\Delta^- {\mathcal{S}})$ is the cut capacity of $(\bar{\mathcal{S}}, \mathcal{S})$, which is at least $M$ by our assumption that the minimal cut capacity is $M$.

It can be easily checked that $f(\emptyset) = f(\mathcal{V})=0$. Thus all the conditions in Theorem~\ref{thm:Frank} are satisfied. By Theorem~\ref{thm:Frank}, there exists a  feasible $f$-submodular flow, say $\phi$. We next verify that $\phi$ is indeed an $(S,T)$-flow in $H$. By the definition of submodular flow,
we have $\partial \phi(\{S\}) \leq M$,  $\partial \phi(\{T\}) \leq -M$, and $\partial \phi(\{v\}) \leq 0$ for vertex $v$ not equal to $S$ or $T$. Using the fact that $\sum_{v\in\mathcal{V}} \partial \phi(\{v\})=0$, which holds in general for any real-valued function $\phi$ on $\mathcal{E}$, we obtain
\begin{align*}
M \leq -\partial \phi(\{T\})
&= \partial \phi(\{S\})  + \sum_{v\not\in\{S, T\}} \partial \phi(\{v\}) \\
&\leq  \partial \phi(\{S\}) \leq M.
\end{align*}
Since equalities hold throughout the above chain of inequalities, we have $\partial \phi(\{v\}) = 0$ for all vertices $v$ other than $S$ and~$T$, i.e., $\phi$ satisfies the flow conservation property. Finally, we have $\partial \phi(\{S\}) =M = -\phi(\{T\})$. This is the same as saying that the flow value is equal to~$M$.

\section{The MSCR Point under Heterogeneous Traffic}
\label{app:asym}

Homogeneous traffic is assumed in the main text of this paper.  We show in this appendix that the assumption of homogeneous traffic is not essential at the minimum-storage point, i.e., the repair bandwidth cannot be decreased even if traffic is heterogeneous.

Let $\alpha = B/k$. Let the average number of packets per link in the first (resp. second) phase be $\bar{\beta}_1$ (resp. $\bar{\beta}_2$). The total number of packets transmitted in the first (resp. second) phase is thus $r d \bar{\beta}_1$ (resp. $r (r-1) \bar{\beta}_2$). In this heterogeneous traffic mode, it is only required that the traffic in the first (resp. second) phase of all repair processes are identical. It contains the homogeneous traffic model as a special case if each newcomer downloads $\bar{\beta}_1$ packets per link in the first phase and $\bar{\beta}_2$ packets per link in the second phase.

\begin{theorem}
If $\alpha= B/k$, then the average repair bandwidth per newcomer under the heterogeneous traffic model is lower bounded by $$B\frac{d+r-1}{k(d+r-k)}.$$
\end{theorem}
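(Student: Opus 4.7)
My plan is to apply the max-flow-min-cut bound of Section~\ref{sec:lower_bound} to a specific family of cuts in the information flow graph and then average the resulting inequalities to obtain bounds involving only the per-link averages $\bar\beta_1,\bar\beta_2$. First I would observe that the average repair bandwidth per newcomer equals $d\bar\beta_1+(r-1)\bar\beta_2$, since the total traffic per repair instance is $rd\bar\beta_1+r(r-1)\bar\beta_2$ and there are $r$ newcomers.

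Next, fix one repair instance with newcomers $\nu_1,\dots,\nu_r$, helper sets $\mathcal{H}_j$, and per-link traffics $\beta_{1,i,j}$ and $\beta_{2,j',j}$. For each $s\in\{1,\dots,\min(r,k)\}$, each size-$s$ subset $T\subseteq\{\nu_1,\dots,\nu_r\}$, and each size-$(k-s)$ subset $N$ of valid non-newcomer data-collector contacts, the cut of type $(k-s,s)$ described in Section~\ref{sec:lower_bound} yields
$$(k-s)\alpha+\sum_{\nu\in T}\sum_{i\in\mathcal{H}_\nu\setminus N}\beta_{1,i,\nu}+\sum_{\nu\in T}\sum_{\nu'\notin T}\beta_{2,\nu',\nu}\geq B.$$

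After substituting $\alpha=B/k$, I would average this inequality over $T$ (uniformly among size-$s$ subsets of the newcomers) and over $N$ with a distribution designed so that each helper link contributes $\bar\beta_1$ and each exchange link contributes $\bar\beta_2$ on average. This reproduces the homogeneous LP2 constraint
$$(d-k+s)\bar\beta_1+(r-s)\bar\beta_2\geq B/k,\qquad s=1,\dots,\min(r,k),$$
whose coefficients match those of the lines $L_s$ for $s\leq r$ that, by Lemma~\ref{lemma1}, all pass through the MSCR point. Combining the $s=1$ and $s=\min(r,k)$ inequalities with appropriate nonnegative weights (for instance, weights $1$ and $(k-1)/(d+r-k)$ when $k\geq r$) then yields $d\bar\beta_1+(r-1)\bar\beta_2\geq B(d+r-1)/(k(d+r-k))$, as desired.

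The main obstacle is the averaging step: a naive uniform average over all $N\subseteq\{1,\dots,n\}\setminus\mathcal{R}_s$ of size $k-s$ yields the coefficient $d(n-r-k+s)/(n-r)$ rather than $d-k+s$, which is strictly smaller and gives a strictly weaker bound when $n>d+r$. To recover the correct coefficient, one must either restrict the averaging of $N$ to subsets of the helper sets, or symmetrize the scheme by considering its $n!$ parallel copies under all permutations of node labels, so that each link carries exactly the average traffic and the homogeneous MSCR bound of Theorem~\ref{thm:C} applies directly.
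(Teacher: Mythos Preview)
Your approach is essentially the same as the paper's: average cut-set bounds over data collectors that meet $s$ newcomers and $k-s$ helpers, for $s=1$ and $s=\min(r,k)$, and combine the resulting two averaged inequalities with nonnegative weights to obtain the MSCR lower bound on $d\bar\beta_1+(r-1)\bar\beta_2$.

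The one point that needs sharpening is your treatment of the averaging obstacle. Your fix~\#1 is the right one, but to make it work cleanly you should first \emph{choose} the repair scenario in which all $r$ newcomers share a common helper set, say $\{r+1,\dots,r+d\}$; the model allows this, and since the code must support every choice of helpers, the cut bounds for this scenario are valid. Then restricting $N$ to size-$(k-s)$ subsets of $\{r+1,\dots,r+d\}$ is unambiguous and the averaging produces exactly the coefficient $d-k+s$ (each helper link is excluded with probability $(k-s)/d$). This is precisely what the paper does. Your fix~\#2 (symmetrising the scheme over all $n!$ node permutations) is more delicate than you suggest: to convert it into a bound on the original scheme's averages you would need to argue that the convex combination of the permuted schemes is itself a valid homogeneous scheme with the same $\bar\beta_1,\bar\beta_2$, which requires an additional construction step. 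Fix~\#1 with the common-helper-set choice avoids this entirely.
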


\begin{proof}
Consider the scenario where nodes 1 to $r$ fail at stage 1. Suppose that each newcomer connects to nodes $r+1$, $r+2,\ldots, r+d$ during the repair process. %We will derive a lower bound on the repair bandwidth.
For $i=r+1,r+2,\ldots, r+d$ and $j=1,2,\ldots, r$, let the capacity of the link from surviving node $i$ to newcomer $j$ be $\beta_1(i,j)$. Let the capacity of the link from $\In_{j_1}$ to $\Mid_{j_2}$, for $j_1\neq j_2$, be $\beta_2(j_1,j_2)$. The average link capacities in the first and second phase can be written, respectively, as
\begin{align*}
 \bar{\beta}_1 &= \frac{1}{dr} \sum_{i=r+1}^{r+d} \sum_{j=1}^r \beta_1(i,j), \\
 \bar{\beta}_2 &= \frac{1}{r(r-1)} \sum_{j_2=1}^r  \sum_{j_1=1 \atop j_1 \neq j_2}^r \beta_2(j_1,j_2).
\end{align*}
The repair bandwidth per newcomer is thus
\[
\frac{1}{r} \sum_{i=r+1}^{r+d} \sum_{j=1}^r \beta_1(i,j) + \frac{1}{r} \sum_{j_2=1}^r  \sum_{j_1=1 \atop j_1 \neq j_2}^r \beta_2(j_1,j_2) = d \bar{\beta}_1 + (r-1) \bar{\beta}_2.
\]

\begin{figure}
\centering
\includegraphics[width=2.5in]{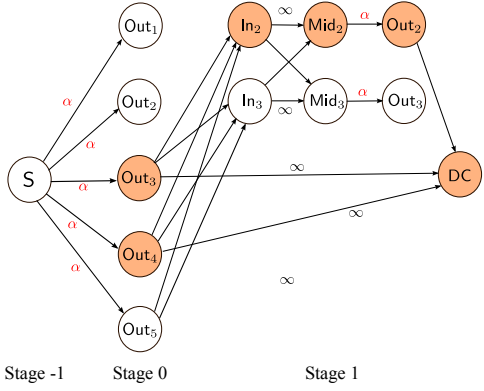}
\caption{A data collector connects to one storage node among the first $r$ nodes and $k-1$ storage nodes among the nodes $r+1$ to $r+d$.}
\label{fig:nonuniform1}
\end{figure}

Consider the set of a data collectors which connects to one of the $r$ newcomers and $k-1$ nodes among nodes $r+1$ to $r+d$.
For a data collector \DC\ connecting to node $j$, for some $j\in\{1,2,\ldots, r\}$, and nodes $i_1, i_2,\ldots, i_{k-1} \in \{r+1,r+2,\ldots, r+d\}$, consider the cut $(\mathcal{W}^c, \mathcal{W})$, with
\[
\mathcal{W} = \{\DC, \In_j, \Mid_j, \Out_j, \Out_{i_1}, \Out_{i_2}, \ldots, \Out_{i_{k-1}}\}.
\]
This cut yields an upper bound on the file size $B$.
An example is given in Fig.~\ref{fig:nonuniform1}, with $\mathcal{W}$ drawn in shaded color.

There are $r \binom{d}{k-1}$ distinct data collectors in this set.  If we sum over all $r \binom{d}{k-1}$ corresponding inequalities, we obtain
\begin{align*}
 r \binom{d}{k-1} B &\leq r \binom{d}{k-1} (k-1) \alpha \\
 & \quad + \binom{d-1}{k-1} \sum_{i=r+1}^{r+d} \sum_{j=1}^r \beta_1(i,j) \\
 & \quad + \binom{d}{k-1} \sum_{j_2=1}^r  \sum_{j_1=1 \atop j_1 \neq j_2}^r \beta_2(j_1,j_2).
\end{align*}
The first term on the right-hand side  comes from the fact that each of the $r\binom{d}{k-1}$ inequalities contributes $(k-1)\alpha$. For the second term, we note that there are are $\binom{d-1}{k-1}$ choices for the ``out'' nodes to be included in $\mathcal{W}$. Hence for each $i$, $j$, the term $\beta_1(i,j)$ is multiplied by $\binom{d-1}{k-1}$. By similar argument we can obtain the third term.

After dividing both sides by $r\binom{d}{k-1}$, we obtain
\begin{equation}
 B \leq (k-1)\alpha + (d-k+1) \bar{\beta}_1+ (r-1) \bar{\beta}_2.
 \label{eq:nonuniform1}
\end{equation}

In the rest of the proof we distinguish two cases.

Case 1:  $k\geq r$. Consider the class of data collectors which download from nodes 1 to $r$, and $k-r$ nodes among nodes $r+1$ to $r+d$. For a data collector \DC\ in this class, say connecting to nodes 1 to $r$, and $i_1$, $i_2, \ldots, i_{k-r}\in\{r+1,r+2,\ldots, r+d\}$, we have an upper bound on $B$ from  the cut $(\mathcal{W}^c,\mathcal{W})$ with $\mathcal{W}$ specified by
\[
 \mathcal{W}=\{ \DC, \Out_{i_1}, \Out_{i_2},\ldots, \Out_{i_{k-r}}\} \cup \bigcup_{j=1}^r \{\In_j, \Mid_j, \Out_j\}.
\]
If we sum over the  $\binom{d}{k-r}$ inequalities arising from these cuts, we get
\begin{align*}
\binom{d}{k-r} B & \leq \binom{d}{k-r}(k-r) \alpha \\
& \quad + \binom{d-1}{k-r}(k-r) \sum_{i=r+1}^{r+d} \sum_{j=1}^r \beta_1(i,j).
\end{align*}
Upon dividing both sides by $\binom{d}{k-r}$, we obtain
\begin{equation}
 B \leq (k-r) \alpha + (d-k+r)r \bar{\beta}_1.
 \label{eq:nonuniform2}
\end{equation}
With $\alpha = B/k$, we infer from \eqref{eq:nonuniform1} and \eqref{eq:nonuniform2} that
\begin{equation}
d\bar{\beta}_1 +(r-1) \bar{\beta}_2 \geq \frac{B(d+r-1)}{k(d+r-k)}.
\label{eq:nonuniformMSCR}
\end{equation}

Case 2:  $k< r$. Consider the class of data collectors who connects to $k$ nodes among nodes 1 to $r$. To a data collector \DC\ connecting to $j_1, j_2,\ldots, j_k \in\{1,2,\ldots, r\}$, we associate it with the cut $(\mathcal{W}^c, \mathcal{W})$ with $\mathcal{W}$ given by
\[
\mathcal{W} = \{\DC\} \cup \bigcup_{\ell=1}^k \{\In_{i_\ell}, \Mid_{i_\ell}, \Out_{i_\ell} \}.
\]
The sum of the $\binom{r}{k}$ resulting upper bounds on $B$ is
\begin{align*}
 \binom{r}{k} B & \leq \binom{r-1}{k-1} \sum_{i=1}^r \sum_{j=r+1}^{r+d} \beta_1(i,j)  \\
 & \quad +  \binom{r-1}{k-1} \sum_{j_2=1}^r  \sum_{j_1=1 \atop j_1 \neq j_2}^r \beta_2(j_1,j_2).
\end{align*}
After dividing both sides by $\binom{r}{k}$, we get
\begin{equation}
 B \leq kd \bar{\beta}_1 + k(r-k) \bar{\beta}_2.
 \label{eq:nonuniform3}
\end{equation}
From \eqref{eq:nonuniform1} and \eqref{eq:nonuniform3}, we can deduce \eqref{eq:nonuniformMSCR}.
\end{proof}

\section{Proof of Theorem~\ref{thm:cornerpoint}}
\label{app:cornerpoint}

We first prove two lemmas. The first one is about the lower envelope of a collection of straight lines.

\begin{lemma}
Let  $y = m_j x + b_j$ for $j=1, 2, \ldots, N$,  be $N$ straight lines in the $x$-$y$ plane, satisfying the following conditions:

(a)
The slopes are negative with decreasing magnitudes, i.e.,
\begin{gather*}
%b_1 > b_2 > \cdots > b_N, \\
-m_1 > -m_2 > \cdots > -m_N > 0.
\end{gather*}

(b) For $j=2,3,\ldots, N$,  the $x$-coordinates of the intersection point of $y=m_jx+b_j$ and $y = m_{j-1}x+b_{j-1}$, denoted by $x_j$, are strictly increasing, i.e., $ x_2 < x_3 < \cdots< x_{N}$.

Then we have
\begin{align*}
&\max_{1\leq j\leq N} \{m_jx+b_j\}  \\
&= \begin{cases}
m_1 x + b_1 & \text{for }  x < x_2 ,\\
m_2 x + b_2 & \text{for }  x_2 \leq x < x_3, \\
\vdots & \vdots \\
m_{N-1} x + b_{N-1} & \text{for }  x_{N-1} \leq x < x_N ,\\
m_N x + b_N & \text{for } x \geq x_{N}.
\end{cases}
\end{align*}
\label{lemma0}
\end{lemma}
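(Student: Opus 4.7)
The plan is to establish the result by a pairwise dominance argument, using the slope condition in (a) to decide which of two lines lies above the other at a given point, and then using the monotonicity of the intersection abscissas in (b) to chain these pairwise comparisons into a complete ordering of the $N$ lines on each sub-interval of the partition $x_2<x_3<\cdots<x_N$.

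First I would observe that for any indices $i<j$ the slope difference satisfies $m_j-m_i>0$ (since $-m_1>\cdots>-m_N>0$), so the affine function $(m_j x+b_j)-(m_i x+b_i)$ is strictly increasing in $x$. Consequently, line $j$ strictly exceeds line $i$ to the right of their unique intersection abscissa, and is strictly exceeded by it to the left. Applied to the consecutive pair $(j-1,j)$, this says precisely that line $j-1$ is strictly above line $j$ for $x<x_j$ and line $j$ is strictly above line $j-1$ for $x>x_j$.

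Next, I fix $j\in\{1,2,\ldots,N\}$ and set $x_1:=-\infty$ and $x_{N+1}:=+\infty$. I would show that on the interval $[x_j,x_{j+1})$ line $j$ dominates every other line, which yields the asserted piecewise description. For any $i<j$, condition (b) gives $x\ge x_j>x_{j-1}>\cdots>x_{i+1}$ whenever $x\in[x_j,x_{j+1})$; chaining the consecutive dominances ``line $\ell+1$ lies above line $\ell$ for $x>x_{\ell+1}$'' for $\ell=i,i+1,\ldots,j-1$ then places line $j$ above line $i$ throughout the interval. A symmetric argument, using ``line $\ell-1$ lies above line $\ell$ for $x<x_\ell$'' for $\ell=j+1,\ldots,i$ together with $x<x_{j+1}<\cdots<x_i$, handles the case $i>j$. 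Continuity and equality at the break-points $x_j$ follow automatically because lines $j-1$ and $j$ coincide there by the very definition of $x_j$.

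I do not anticipate any real obstacle; the argument is essentially elementary bookkeeping of intersection abscissas. The content of the lemma lies entirely in hypothesis (b): without the strict increase of the $x_j$, the upper envelope could traverse the lines in a permuted order and the clean description in the statement could fail. Since the statement is then exactly the upper envelope of a family of affine functions whose consecutive switch-over abscissas are correctly ordered, the piecewise-linear expression for $\max_j\{m_j x+b_j\}$ claimed in the lemma follows at once.
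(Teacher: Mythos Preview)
Your proposal is correct and follows essentially the same approach as the paper: both establish the consecutive pairwise comparison (line $j-1$ dominates line $j$ for $x<x_j$ and is dominated for $x>x_j$), and then chain these along the ordered sequence $x_2<\cdots<x_N$ to obtain a complete ordering of all $N$ lines on each sub-interval. The only cosmetic difference is that you set up the argument by fixing the interval $[x_j,x_{j+1})$ and treating $i<j$ and $i>j$ separately, whereas the paper treats the leftmost, generic interior, and rightmost intervals in turn; the underlying mechanism is identical.
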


\begin{proof}
For $i=2,3,\ldots, N$,
since the slope of $L_{i-1}$ is more negative then the slope of $L_{i}$, we get
\[
\begin{cases}
m_{i-1} x + b_{i-1} > m_{i} x + b_{i} & \text{for } x < x_i, \\
m_{i-1} x + b_{i-1} = m_{i} x + b_{i} & \text{for } x = x_i, \\
m_{i-1} x + b_{i-1} < m_{i} x + b_{i} & \text{for } x > x_i .
\end{cases}
\]
For $x$ between 0 and $x_2$, we have $x<x_i$ for all $i$. Hence
$$ m_1x+b_1 > m_2 x+b_2 > \cdots > m_N x+ b_N.$$
Therefore,
$\max_{j} \{m_j x+b_j\} = m_1 x + b_1$, for $x < x_2$.

Consider  $x$ in the interval $[x_{i-1},x_i)$, for some $i=3,4,\ldots, N-1$. Since $x \geq x_{i-1} > x_{i-2} > \cdots > x_2$, we get
\[m_i x+ b_i \geq m_{i-1} x + b_{i-1} > \cdots > m_1 x + b_1.\]
On the other hand, since $x<x_i<\cdots<x_{N}$, we get
\[m_i x+b_i > m_{i+1} x+b_i > \cdots >m_{N} x +b_N.\]
 Therefore
$
\max_{j=1,\ldots, N} \{m_jx+b_j\} = m_i x + b_i
$
for $x_{i-1} \leq x < x_i$.
The proof of the last case $x \geq x_{N}$ is similar and is omitted.
\end{proof}

The second lemma is a special case of duality in linear programming. It gives a sufficient condition for checking the optimality of a given point in the feasible region. The short proof is given below for the sake of completeness.

\begin{lemma}
Consider a linear programming with objective function $c_1 x + c_2 y$, where $x$ and $y$ are variables and $c_1$ and $c_2$ are constants, subject to constraints $a_{i1} x + a_{i2} y \geq b_i$, for $i=1,2,\ldots ,N$, and $x , y \geq 0$. We will only consider the case where $c_1$, $c_2$, $a_{i1}$ and $a_{i2}$, for $i=1,2,\ldots, N$, are non-negative. If $(\bar{x}, \bar{y})$ is a point which

\noindent \ (a) satisfies all constraints,

\noindent \ (b) attains equality in two particular constraints whose slopes are distinct, and the slope of the objective function is between these two slopes,

\noindent then $(\bar{x}, \bar{y})$ is the optimal solution to the linear programming problem.
\label{lemma_dual}
\end{lemma}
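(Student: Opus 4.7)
The plan is to produce a two-line weak-duality certificate using only the two binding constraints. Let $i$ and $j$ index the constraints at which $(\bar x, \bar y)$ achieves equality, and write $m_i := -a_{i1}/a_{i2}$, $m_j := -a_{j1}/a_{j2}$ for their slopes in the $x$-$y$ plane, and $m_c := -c_1/c_2$ for the slope of a level set of the objective. Hypothesis~(b) asserts that $m_c$ lies (weakly) between $m_i$ and $m_j$; without loss of generality I may assume $m_i \le m_c \le m_j$.

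The first step is to translate this slope condition into a conic combination. Because all coefficients $c_1,c_2,a_{i1},a_{i2},a_{j1},a_{j2}$ are non-negative and the two constraint normals are non-parallel, the vectors $(a_{i1},a_{i2})$ and $(a_{j1},a_{j2})$ span a two-dimensional cone inside the non-negative quadrant of $\mathbb{R}^2$. The slope condition $m_i\le m_c\le m_j$ is exactly equivalent to the statement that $(c_1,c_2)$ lies in this cone, so I can solve the $2\times 2$ linear system
\[
c_1 = \lambda_i a_{i1} + \lambda_j a_{j1}, \qquad c_2 = \lambda_i a_{i2} + \lambda_j a_{j2},
\]
for scalars $\lambda_i,\lambda_j$, and Cramer's rule together with the slope inequalities immediately shows $\lambda_i,\lambda_j \ge 0$. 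This bookkeeping is the only non-trivial step.

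With such multipliers in hand, the conclusion follows from a one-line weak-duality computation. For any feasible $(x,y)$,
\[
c_1 x + c_2 y \;=\; \lambda_i\,(a_{i1}x + a_{i2}y) \;+\; \lambda_j\,(a_{j1}x + a_{j2}y)\;\ge\; \lambda_i b_i + \lambda_j b_j,
\]
where the inequality uses $\lambda_i,\lambda_j\ge 0$ together with constraints $i$ and $j$ of the linear program. The right-hand side equals $c_1\bar x + c_2\bar y$ by condition~(a) applied with equality at the two chosen constraints, so $(\bar x, \bar y)$ attains the minimum. I expect the main (mild) obstacle to be the geometric slope-to-cone translation; everything else is routine.
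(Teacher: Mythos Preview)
Your proposal is correct and follows essentially the same approach as the paper's proof: the paper defines the multipliers $(p,q) := (c_1,c_2)\mathbf{A}^{-1}$, where $\mathbf{A}$ is the $2\times 2$ matrix of the two active constraint rows, which is exactly your $(\lambda_i,\lambda_j)$ obtained by solving the same linear system. Both arguments then use the slope-sandwich hypothesis to verify $p,q\ge 0$ (you via Cramer's rule and the cone interpretation, the paper via the explicit inverse formula) and conclude with the identical weak-duality chain $c_1 x + c_2 y = p(a_{11}x+a_{12}y)+q(a_{21}x+a_{22}y)\ge p b_1 + q b_2 = c_1\bar x + c_2\bar y$.
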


\begin{proof}
It suffices to show that the value of the objective function cannot be smaller than $c_1 \bar{x}+c_2 \bar{y}$ without violating any constraints. By re-indexing the constraints, suppose without loss of generality that $(\bar{x},\bar{y})$ satisfies the constraints $a_{i1} x + a_{i2} y \geq b_i$, for $i=1,2$, with equality. Suppose that the magnitude of the slope of the first constraint is strictly larger than the magnitude of the slope of the second constraint,
\begin{equation}
a_{11}/a_{12} >  a_{21}/a_{22},
\label{eq:assumption0}
\end{equation}
and $c_1/c_2$ is sandwiched between them,
\begin{equation}
a_{11}/a_{12} \geq c_1/c_2 \geq  a_{21}/a_{22}.
\label{eq:assumption}
\end{equation}

Let
$ \mathbf{A} := \begin{bmatrix}
 a_{11} & a_{12} \\
 a_{21} & a_{22}
 \end{bmatrix}$.
By the assumption in (b), we have
\[
\mathbf{A} \begin{bmatrix}x \\ y \end{bmatrix} \geq  \begin{bmatrix} b_1 \\ b_2 \end{bmatrix} = \mathbf{A} \begin{bmatrix} \bar{x} \\ \bar{y} \end{bmatrix}
\]
for any feasible solution $(x,y)$.

The determinant of $\mathbf{A}$ is positive by~\eqref{eq:assumption0}. Thus, $\mathbf{A}$ is invertible. The values of $p$ and $q$  defined by
\begin{align*}
\begin{bmatrix} p & q \end{bmatrix}  &:= \begin{bmatrix} c_1 & c_2 \end{bmatrix} \mathbf{A}^{-1} = \frac{\begin{bmatrix}c_1 a_{22} - c_2 a_{21}&  c_2 a_{11} - c_1 a_{12} \end{bmatrix}}{a_{11}a_{22}-a_{12}a_{21}}
\end{align*}
are non-negative by~\eqref{eq:assumption}.
For any feasible solution $(x,y)$,
\begin{align*}
c_1 x + c_2 y = \begin{bmatrix} p & q \end{bmatrix} \mathbf{A} \begin{bmatrix} x \\ y \end{bmatrix} \geq \begin{bmatrix} p & q \end{bmatrix} \mathbf{A} \begin{bmatrix} \bar{x} \\ \bar{y} \end{bmatrix}
= c_1 \bar{x} + c_2 \bar{y}.
\end{align*}
This proves that the optimal value is $c_1 \bar{x} + c_2 \bar{y}$.
\end{proof}

\medskip

We divide the proof of Theorem~\ref{thm:cornerpoint} into several propositions. We need a few more notations.

\noindent {\bf Definitions:} For $j=1,2,\ldots, k$,
let $g_j(\talpha)$ be the $\tbeta_2$-coordinate of $P_{j}(\talpha)$, i.e., $$g_j(\talpha):= \frac{1-(k-j)\talpha}{j(2d-2k+r+j)},$$
and let
 \[
  \hat{\beta}_2(\talpha) := \max_{1\leq j \leq k} g_j(\talpha).
 \]

\begin{proposition} For $j=1,2,\ldots,k $, let $\talpha_j$ be defined as in \eqref{eq:tilde_alpha1}, and for $\ell=0,1,\ldots, \lfloor k/r \rfloor$, let $\talpha_\ell'$ be defined as in \eqref{eq:tilde_alpha2}.

\begin{enumerate}
\item For $j=2,3,\ldots, k$, we have
$g_j(\tilde{\alpha}_j) = g_{j-1}(\tilde{\alpha}_j)$.

\item
\[
 \frac{1}{k} =\talpha_1 < \tilde{\alpha}_2 < \tilde{\alpha}_3 < \cdots <\tilde{\alpha}_{k} = \frac{2d+r-1}{k(2d+r-k)}.
\]

\item  $\hat{\beta}_2(\talpha)$ is a  piece-wise linear function of~$\talpha$,
    \[
   \hat{\beta}_2(\talpha) = \begin{cases}
    g_1(\talpha) & \text{for } \tilde{\alpha}_1 \leq \talpha < \tilde{\alpha}_2, \\
    g_2(\talpha) & \text{for } \tilde{\alpha}_2 \leq \talpha < \tilde{\alpha}_3 ,\\
    \vdots & \vdots \\
    g_{k-1}(\talpha) & \text{for } \tilde{\alpha}_{k-1} \leq \talpha < \tilde{\alpha}_k, \\
    g_k(\talpha) & \text{for } \talpha \geq \talpha_k.
        \end{cases}
    \]

\item For $j=1,2,\ldots, k-1$, when the parameter $\talpha$ is in the range $\talpha_j \leq \talpha \leq \talpha_{j+1}$, we have
\[
 P_j(\talpha) \succ P_{j+1}(\talpha) \succ P_{j+2}(\talpha) \succ \cdots \succ P_k(\talpha),
\]
and
\[
 P_j(\talpha) \succ P_{j-1}(\talpha) \succ P_{j-2}(\talpha) \succ \cdots \succ P_1(\talpha).
\]
When $\talpha \geq \talpha_{k}$, we have
\[
 P_k(\talpha) \succ P_{k-1}(\talpha) \succ P_{k-2}(\talpha) \succ \cdots \succ P_1(\talpha).
\]

\item  For $\ell=0,1,\ldots \lfloor k/r \rfloor-1$, we have
$$\talpha_{\ell r} < \tilde{\alpha}_\ell' < \talpha_{(\ell+1) r}.$$
\end{enumerate}
\label{lemmaC}
\end{proposition}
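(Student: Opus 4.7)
The proposition packages five geometric facts about the sequence $\tilde\alpha_j$, the affine functions $g_j(\tilde\alpha)$, and their upper envelope $\hat\beta_2$. My plan is to take parts 1 through 5 in order, but first to establish two bookkeeping facts that feed into everything afterwards: (a) the slopes of $g_j$, namely $-(k-j)/(j(2d-2k+r+j))$, have strictly decreasing magnitudes as $j$ grows over $\{1,\dots,k\}$; and (b) the numbers $\tilde\alpha_j$ themselves are strictly increasing in $j$. Both come from short polynomial cross-multiplications using only $d\geq k$ and $r\geq 1$; for (a) the key identity is
\[
(k-j)(j+1)(2d-2k+r+j+1)-(k-j-1)j(2d-2k+r+j) = k(2d-2k+r+j)+(k-j)(j+1)>0.
\]

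Part 1 is immediate from the definition of $\tilde\alpha_j$: since $P_j(\tilde\alpha_j) = P_{j-1}(\tilde\alpha_j)$ with both points on the line $\tilde\beta_1 = 2\tilde\beta_2$, their $\tilde\beta_2$-coordinates agree, which is exactly $g_j(\tilde\alpha_j) = g_{j-1}(\tilde\alpha_j)$. For part 2, the boundary values $\tilde\alpha_1 = 1/k$ and $\tilde\alpha_k = (2d+r-1)/(k(2d+r-k))$ drop out by direct substitution into \eqref{eq:tilde_alpha1}. Writing $N_j := d-k+j+(r-1)/2$ and $D_j := kN_j - j(j-1)/2$, a one-line expansion yields $N_{j+1}D_j - N_j D_{j+1} = \tfrac{j}{2}(2d-2k+r+j) > 0$; combined with the easy bound $D_j > 0$ for $j \leq k$, this gives strict monotonicity.

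With (a) and (b) in hand, parts 3 and 4 are applications of Lemma~\ref{lemma0}: its two hypotheses are precisely (a) and (b), and its conclusion is the piecewise formula for $\hat\beta_2$ in part 3. For part 4, on the interval $[\tilde\alpha_j,\tilde\alpha_{j+1}]$ and for any $i\geq j+1$, the affine difference $g_i - g_{i+1}$ has strictly negative slope by (a) and vanishes at $\tilde\alpha_{i+1}\geq\tilde\alpha_{j+2}>\tilde\alpha$, hence $g_i > g_{i+1}$; together with $g_j \geq g_{j+1}$ this iterates to $g_j \geq g_{j+1} > g_{j+2} > \cdots > g_k$, and a symmetric argument gives $g_j \geq g_{j-1} > \cdots > g_1$. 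Since every $P_i(\tilde\alpha)$ sits on the common ray $\tilde\beta_1 = 2\tilde\beta_2$ with positive $\tilde\beta_2$-coordinate, these inequalities on $g_i$'s translate directly into the componentwise ordering of the $P_i$'s.

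Part 5 is the most subtle step and essentially uses $r\geq 2$. The key observation is that when $r\mid j$, Lemma~\ref{lemma:P_j}\ref{lemma:itemC} makes $L_j(\tilde\alpha)$ vertical, so at $\tilde\alpha = \tilde\alpha_\ell'$ both $L_{\ell r}$ and $L_{(\ell+1)r}$ pass through $Q_\ell = (1/D_\ell', 1/D_\ell')$ by Lemma~\ref{lemma1}; verticality then forces $P_{\ell r}(\tilde\alpha_\ell') = P_{(\ell+1)r}(\tilde\alpha_\ell') = (1/D_\ell', 1/(2D_\ell'))$, whence $g_{\ell r}(\tilde\alpha_\ell') = g_{(\ell+1)r}(\tilde\alpha_\ell')$. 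The affine difference $g_{\ell r} - g_{(\ell+1)r}$ has strictly negative slope by (a) and vanishes exactly at $\tilde\alpha_\ell'$. Evaluated at $\tilde\alpha_{\ell r}$, part 4 tells us that $g_{\ell r}$ equals $\hat\beta_2$ and strictly dominates every $g_i$ with $i\notin\{\ell r-1,\ell r\}$; since $r\geq 2$ forces $(\ell+1)r\notin\{\ell r-1,\ell r\}$, we get $g_{\ell r}(\tilde\alpha_{\ell r}) > g_{(\ell+1)r}(\tilde\alpha_{\ell r})$, so the difference is positive there, forcing $\tilde\alpha_{\ell r} < \tilde\alpha_\ell'$. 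The symmetric evaluation at $\tilde\alpha_{(\ell+1)r}$ yields $\tilde\alpha_\ell' < \tilde\alpha_{(\ell+1)r}$. The main obstacle I foresee is exactly this assembly in part 5---marrying the verticality of $L_{\ell r}$ and $L_{(\ell+1)r}$ to the strict-monotonicity statements from parts 3 and 4, and handling the boundary index $\ell = 0$ cleanly (where the extended value $\tilde\alpha_0 = 1/k$ collides with $\tilde\alpha_0'$, so the left-hand inequality has to be read non-strictly or the range restricted to $\ell\geq 1$).
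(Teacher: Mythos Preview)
Your treatment of parts 1--4 matches the paper's proof essentially step for step: part~1 is definitional, part~2 is a direct cross-multiplication (the paper reduces it to $4d-4k+2j+2r-2>0$, you to $N_{j+1}D_j-N_jD_{j+1}>0$; these are equivalent), and parts~3--4 are both obtained by invoking Lemma~\ref{lemma0} after checking its slope and intersection hypotheses.

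For part~5 the paper simply writes ``straightforward calculation; details omitted,'' so your argument is more than the paper provides. Your route is genuinely different from a brute-force algebraic comparison of $\talpha_{\ell r}$, $\talpha_\ell'$, $\talpha_{(\ell+1)r}$: you exploit Lemma~\ref{lemma1} and the verticality of $L_{\ell r}$, $L_{(\ell+1)r}$ (Lemma~\ref{lemma:P_j}\ref{lemma:itemC}) to pin down $g_{\ell r}(\talpha_\ell')=g_{(\ell+1)r}(\talpha_\ell')=1/(2D_\ell')$, then read the inequalities off from the sign of the affine function $g_{\ell r}-g_{(\ell+1)r}$ using parts~2--4. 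This is clean and conceptually nice; a direct calculation would instead expand both sides of $\talpha_{\ell r}<\talpha_\ell'$ and $\talpha_\ell'<\talpha_{(\ell+1)r}$ and simplify, which is more mechanical but avoids any dependence on earlier parts.

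Your flag on the boundary case $\ell=0$ is well-taken: with the natural extension $\talpha_0=1/k$ one has $\talpha_0=\talpha_0'$, so the left inequality in part~5 degenerates to equality there (and indeed your geometric argument needs $\ell\geq 1$ so that $L_{\ell r}$ is an honest vertical line rather than the degenerate $L_0$). This is a wrinkle in the statement as written rather than in your reasoning; the downstream uses of part~5 in the paper are unaffected since at $\ell=0$ only the right inequality $\talpha_0'=1/k<\talpha_r$ is actually needed, and that follows directly from part~2.
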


\begin{proof}
(1)  It follows from the fact that $\talpha_j$ is defined as the value such that $P_j(\talpha_j) = P_{j-1}(\talpha_j)$.

(2) We compare two consecutive terms in the sequence $(\tilde{\alpha}_i)_{i=1}^{k}$.
For $j=2,3,\ldots, k$, we have
\begin{align*}
&\talpha_{j} > \talpha_{j-1} \\
\Leftrightarrow & \frac{d-k+j+\frac{r-1}{2}}{k(d-k+j+\frac{r-1}{2}) - \frac{j(j-1)}{2}} > \\
& \quad \frac{d-k+j-1+\frac{r-1}{2}}{k(d-k+j-1+\frac{r-1}{2}) - \frac{(j-1)(j-2)}{2}}\\
\Leftrightarrow & \frac{1}{k-\frac{j(j-1)}{2d-2k+2j+r-1}} >
\frac{1}{k-\frac{(j-1)(j-2)}{2d-2k+2j-2+r-1}} \\
\Leftrightarrow & \frac{j(j-1)}{2d-2k+2j+r-1} > \frac{(j-1)(j-2)}{2d-2k+2j-2+r-1}.
\end{align*}
Since the value of  $j$ is strictly larger than 1 in the above  inequalities, we can cancel the factor $j-1$. After some more algebraic manipulation, we obtain
\begin{align*}
\talpha_{j} > \talpha_{j-1} \Leftrightarrow    4d-4k+2j+2r-2 &> 0.
\end{align*}
The last inequality holds because $d\geq k$, $j\geq 2$ and $r\geq 1$.

(3)
We will apply Lemma~\ref{lemma0} to prove the third part. We have already verify part (b) of Lemma~\ref{lemma0}.
For the condition in part (a) of Lemma~\ref{lemma0}, we check that the magnitude of the slope of the straight line $y=g_j(\talpha)$ in the $\talpha$-$y$ plane is
$$\frac{k-j}{j(2d+2k+r+j)}.$$
When $j$ increases, the numerator decreases and the denominator increases. Hence the magnitude of the slope is a decreasing function of~$j$.

(4) It follows from the proof of Lemma~\ref{lemma0}.

(5) We can prove the asserted inequalities by straightforward calculation. The details are omitted.
\end{proof}

\begin{proposition}\

\begin{enumerate}
\item For $j=2,3,\ldots, k-1$,  the point $P_j(\talpha)$ is a feasible solution to the linear program in~\eqref{eq:LP_objective} when the parameter $\talpha$ is in the range $\talpha_j \leq \talpha \leq \talpha_{j+1}$.

\item For $\ell=0,1,\ldots, k$, the point $Q_\ell$ is a feasible solution to the   linear program in~\eqref{eq:LP_objective} if $\talpha = \talpha_\ell'$.
\end{enumerate}
\label{prop:feasible}
\end{proposition}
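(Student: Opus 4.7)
The plan splits naturally into the two parts.

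For Part~1, both $L_s(\talpha)$ and $L_s'(\talpha)$ pass through $P_s(\talpha)$, which lies on the ray $\{t(2,1):t\geq 0\}$ in the $\tbeta_1$–$\tbeta_2$ plane. Since the coefficients of $\tbeta_1$ and $\tbeta_2$ in~\eqref{eq:LP1} and~\eqref{eq:LP2} are non-negative (remark~(iii) after Theorem~\ref{thm:LP}), each of these lines has non-positive slope or is vertical, and its feasible half-plane is the one lying further from the origin. Consequently, as one traverses the ray outward from the origin the feasible side of constraint~$s$ is entered exactly at $P_s(\talpha)$. Part~(4) of Proposition~\ref{lemmaC} supplies the remaining ingredient: for $\talpha \in [\talpha_j, \talpha_{j+1}]$, $P_j(\talpha)$ sits further along the ray than every $P_s(\talpha)$ with $s \neq j$, so $P_j(\talpha)$ is on the feasible side of all $2k$ constraints. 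Non-negativity of $\tbeta_1, \tbeta_2$ at $P_j(\talpha)$ is immediate from~\eqref{eq:P}.

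For Part~2, I will verify feasibility of $Q_\ell=(1/D_\ell',1/D_\ell')$ at $\talpha = \talpha_\ell'$ by direct substitution. Constraints of type~\eqref{eq:LP2} for $s \in \{\ell r, \ell r+1, \ldots, (\ell+1)r\}$ hold with equality by Lemma~\ref{lemma1}. For every other $s$, substituting into~\eqref{eq:LP2} and multiplying by $D_\ell'$ reduces the required inequality to
\[
2sr\ell - s^2 + \Psi_{s,r} - r^2\ell(\ell+1) \leq 0.
\]
Writing $s = qr + R$ with $0 \leq R \leq r-1$ (so $\Psi_{s,r} = qr^2 + R^2$) and setting $m := q - \ell$, the left-hand side factorizes as $rm\bigl(r(1-m) - 2R\bigr)$. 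A three-way sign analysis then finishes the job: $m = 0$ gives $0$ (the tight constraints already established); $m \geq 1$ forces $r(1-m) - 2R \leq 0$, so the product is non-positive; and $m \leq -1$ forces $rm < 0$ while $r(1-m) - 2R \geq 2r - 2R > 0$, so the product is again non-positive. For the constraints of type~\eqref{eq:LP1}, the same substitution produces the analogous quantity $2sr\ell - s^2 + s - r^2\ell(\ell+1)$, which is at most the~\eqref{eq:LP2} expression because $\Psi_{s,r} \geq s$ (remark after Theorem~\ref{thm:C}); hence~\eqref{eq:LP1} is automatically satisfied whenever~\eqref{eq:LP2} is.

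The main obstacle is the factorization $2sr\ell - s^2 + \Psi_{s,r} - r^2\ell(\ell+1) = rm\bigl(r(1-m) - 2R\bigr)$; once this identity is in hand (a careful but elementary expansion using $s = qr + R$ and $\Psi_{s,r} = qr^2 + R^2$), the case analysis on $m$ and the reduction from~\eqref{eq:LP1} to~\eqref{eq:LP2} are routine. By contrast, Part~1 requires essentially no new computation beyond the geometric picture along the ray $\tbeta_1 = 2\tbeta_2$, because Proposition~\ref{lemmaC}~(4) has already packaged the relative positions of the points $P_s(\talpha)$.
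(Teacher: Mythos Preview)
Your Part~1 is essentially the paper's argument: both use Proposition~\ref{lemmaC}(4) to order the points $P_s(\talpha)$ along the ray $\tbeta_1=2\tbeta_2$ and then invoke the negativity (or verticality) of the constraint slopes to conclude that the outermost point $P_j(\talpha)$ lies on the feasible side of every constraint.

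For Part~2 you take a genuinely different route. The paper argues geometrically: for indices $s$ outside $\{\ell r,\ldots,\min(k,(\ell+1)r)\}$ it again uses the ordering of the $P_s(\talpha_\ell')$ from Proposition~\ref{lemmaC} to push those constraints aside, and for the remaining indices it compares slopes, noting that $L_s'(\talpha_\ell')$ must cross the vertical segment joining $Q_\ell$ to $P_{\ell r}(\talpha_\ell')$ and therefore lies below~$Q_\ell$. Your approach instead substitutes $Q_\ell$ and $\talpha_\ell'$ directly and reduces every constraint~\eqref{eq:LP2} to the single inequality $2sr\ell - s^2 + \Psi_{s,r} - r^2\ell(\ell+1)\leq 0$, which you dispatch via the factorization $rm\bigl(r(1-m)-2R\bigr)$ and a sign analysis on $m=q-\ell$; the constraints~\eqref{eq:LP1} then follow from $\Psi_{s,r}\geq s$. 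Your argument is more elementary and uniform---it treats all $s$ at once and avoids the slope-comparison picture---while the paper's version stays within the geometric framework already set up and makes the role of the special indices $\ell r,\ldots,(\ell+1)r$ visible. Both are valid; the algebraic identity you found is a nice shortcut.
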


\begin{proof}
(1)
Consider the point $P_j(\talpha)$ for some $j=2,3,\ldots, k-1$. Let $\talpha$ be a real number between $\talpha_j$ and $\talpha_{j+1}$. For $\talpha$ in this range, we have $P_i(\talpha) \prec P_j(\talpha)$ for all $i\in\{1,2,\ldots, k\}\setminus\{j\}$, by the third part in Prop.~\ref{lemmaC}. We will give a geometric proof that $P_j(\talpha)$  satisfies the inequalities in \eqref{eq:LP1} and \eqref{eq:LP2} for $s=1,2,\ldots, k$.
We use the property that the slope of the linear constraints in \eqref{eq:LP1} and \eqref{eq:LP2} are either negative or infinite.
Recall that for any $i=1,2,\ldots, k$, $P_i(\talpha)$ is the intersection point of $L_i(\talpha)$ and $L_i'(\talpha)$. As the slope of $L_i(\talpha)$ and $L_i'(\talpha)$ are either infinite or negative, any point $P$ in the $\tbeta_1$-$\tbeta_2$ plane that Pareto-dominates $P_i(\talpha)$ satisfies the inequality \eqref{eq:LP1} and \eqref{eq:LP2} for $s=i$. Since $P_i(\talpha) \prec P_j(\talpha)$ for $\talpha_j \leq \talpha \leq \talpha_{j+1}$, we conclude that $P_j(\talpha)$ satisfies all constraints in the linear program, and is thus feasible.

(2) From the last part Prop.~\ref{lemmaC}, we have $\talpha_{\ell r} < \tilde{\alpha}_\ell' < \talpha_{(\ell+1) r}$.  By the third part of Prop.~\ref{lemmaC}, we have
\[\hat{\beta}_2(\talpha_\ell') > g_j(\talpha_\ell')
\]
and thus
\[
  P_j(\talpha_\ell') \prec Q_\ell
\]
for
$$j\in \{1,2,\ldots, k\} \setminus \{ \ell r, \ell r+1,\ldots, \min\{k,(\ell+1)r\} \}.
$$
This proves that $Q_\ell$ satisfies the constraint \eqref{eq:LP1} and \eqref{eq:LP2} for $s$ in$ \{1,2,\ldots, k\} \setminus \{ \ell r, \ell r+1,\ldots, \min\{k,(\ell+1)r\} \}$.

We give a geometric proof for the remaining constraints. (See e.g. Fig.~\ref{fig:t2}.)
For $\talpha = \talpha_\ell'$, the point $P_{\ell r}(\talpha_\ell')$ is vertically below $Q_\ell$ in the $\tbeta_1$-$\tbeta_2$ plane.
Consider an integer $$j \in \{\ell r, \ell r+1, \ldots, \min\{k,(\ell+1)r\} \}.$$ The
point $P_j(\talpha_\ell')$ is on the line $\tbeta_1 = 2\tbeta_2$ and is to the right of  $Q_\ell$ and $P_{\ell r}(\talpha_\ell')$. Since the slope of $L_j'(\talpha_\ell')$ is negative and has magnitude less than $\mu(j)$, the line $L_j'(\talpha_\ell')$ intersects the vertical line segment between  $Q_\ell$ and $P_{\ell r}(\talpha_\ell')$. Therefore $Q_\ell$ is lying above the line $L_j'(\talpha_\ell')$. Also, by definition, $Q_\ell$ is lying on the line $L_j(\talpha_\ell')$. This proves that $Q_\ell$ satisfies the constraints in \eqref{eq:LP1} and \eqref{eq:LP2} for $s = j$.
\end{proof}

\begin{proposition}
For $j=1,2,\ldots,k-1$, if
\begin{equation}
\begin{vmatrix}
   j(d-k) + (j^2+\Psi_{j,r})/2 & jr-\Psi_{j,r} \\
   d & r-1
 \end{vmatrix} \geq 0
  \label{slope_assumption1}
\end{equation}
then
\[
\gamma_{\mathrm{LP}}^*( \talpha ) = \frac{(2d+r-1)(1-(k-j)\talpha)}{j(2d-2k+r+j)}
\]
for $\tilde{\alpha}_{j}\leq \talpha < \tilde{\alpha}_{j+1}$.

Also, we have
\[
 \gamma_{\mathrm{LP}}^*(\talpha) = \frac{2d+r-1}{k(2d+r-k)} = \tgamma_{\mathrm{MBCR}}
\]
for $\talpha \geq \tilde{\alpha}_{k} = \talpha_{\mathrm{MBCR}}$.
\label{lemmaD}
\end{proposition}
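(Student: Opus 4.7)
The plan is to certify optimality of the candidate point $P_j(\talpha)$ in the parametric LP by invoking the duality lemma (Lemma~\ref{lemma_dual}), which requires (a) feasibility, (b) two tight constraints with distinct slopes, and (c) the objective slope lying between those two slopes. Part~(1) of Proposition~\ref{prop:feasible} already guarantees feasibility of $P_j(\talpha)$ throughout $\talpha_j \leq \talpha \leq \talpha_{j+1}$, and by construction $P_j(\talpha)$ lies on both $L_j(\talpha)$ and $L_j'(\talpha)$, so (b) holds, with distinctness of the two slopes supplied by Lemma~\ref{lemma:P_j}(\ref{lemma:itemE}).

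For condition (c), the objective $d\tbeta_1 + (r-1)\tbeta_2$ has slope of magnitude $d/(r-1)$. A direct expansion of the determinant in \eqref{slope_assumption1} shows it is equivalent to $\mu(j) \geq d/(r-1)$ (with the convention that the inequality is vacuous when $\mu(j) = \infty$, and automatic when $r=1$ since the determinant collapses to $0$), i.e., the magnitude of the slope of $L_j(\talpha)$ is at least $d/(r-1)$. Combined with Lemma~\ref{lemma:P_j}(\ref{lemma:itemB}), which bounds the magnitude of the slope of $L_j'(\talpha)$ strictly below $d/(r-1)$, this sandwiches the objective slope between the slopes of the two tight constraints. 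Lemma~\ref{lemma_dual} then delivers optimality of $P_j(\talpha)$, and substituting $\tbeta_1 = 2g_j(\talpha)$, $\tbeta_2 = g_j(\talpha)$ into $d\tbeta_1 + (r-1)\tbeta_2$ yields the claimed value $(2d+r-1)\,g_j(\talpha)$.

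For the case $\talpha \geq \talpha_k$, the key observation is that both $L_k(\talpha)$ and $L_k'(\talpha)$ are independent of $\talpha$, since the coefficient $(k-j)$ in (\ref{eq:LP1}') and (\ref{eq:LP2}') vanishes at $j=k$; hence $P_k(\talpha)$ is a fixed point. I would then argue that this fixed point remains feasible for all $\talpha \geq \talpha_k$: the remaining constraints $L_s, L_s'$ with $s<k$ only relax as $\talpha$ grows, and by Proposition~\ref{lemmaC}(4) together with the monotonicity of $P_s(\talpha)$ in $\talpha$ visible from \eqref{eq:P}, the fixed point $P_k$ continues to Pareto-dominate each $P_s(\talpha)$, so it still satisfies all the associated (non-vertical) inequalities. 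The slope hypothesis at $j=k$ is automatic from Lemma~\ref{lemma:P_j}(\ref{lemma:itemF}), namely $\mu(k) > d/(r-1)$, so Lemma~\ref{lemma_dual} reapplies with the same two tight constraints $L_k, L_k'$ and yields $\gamma_{\mathrm{LP}}^*(\talpha) = \tgamma_{\mathrm{MBCR}}$.

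The main obstacle I anticipate is the algebraic manipulation converting the determinant form of \eqref{slope_assumption1} into the geometric slope-sandwich required by Lemma~\ref{lemma_dual}, together with handling the degenerate situations $\mu(j)=\infty$ and $r=1$ cleanly (in the latter the lines $L_j(\talpha)$ and $L_j'(\talpha)$ become vertical and one argues directly that the optimum of $d\tbeta_1$ is achieved at the rightmost tight constraint). Beyond that point the argument is a disciplined application of the preliminary results of Section~\ref{sec:OP}.
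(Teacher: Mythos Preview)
Your proposal is correct and follows essentially the same route as the paper: verify feasibility of $P_j(\talpha)$ via Proposition~\ref{prop:feasible}, translate the determinant hypothesis~\eqref{slope_assumption1} into $\mu(j)\geq d/(r-1)$, combine with Lemma~\ref{lemma:P_j}(\ref{lemma:itemB}) to sandwich the objective slope, and invoke Lemma~\ref{lemma_dual}; for $\talpha\geq\talpha_k$ the paper likewise uses Lemma~\ref{lemma:P_j}(\ref{lemma:itemF}) and Lemma~\ref{lemma_dual} at the fixed point $P_k$. Your treatment of the feasibility of $P_k$ for $\talpha\geq\talpha_k$ (via constraint relaxation) and of the degenerate cases $r=1$ and $\mu(j)=\infty$ is somewhat more explicit than the paper's, but the underlying argument is the same.
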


\begin{proof} Consider $\talpha$ in the interval $[\tilde{\alpha}_{j}, \tilde{\alpha}_{j+1} )$, for some $j\in\{1,2,\ldots, k-1\}$. Suppose that the condition in~\eqref{slope_assumption1} is satisfied.
We want to show that $P_{j}(\talpha)$
is the optimal solution to the linear programming problem in~\eqref{eq:LP_objective}. We have proved in Prop.~\ref{prop:feasible} that $P_{j}(\talpha)$ is a feasible solution. By Lemma~\ref{lemma_dual}, it remains to show that $P_{j}(\talpha)$ satisfies two constraints with equality, and the slope of the objective function is  between the slopes of these two constraints.

Since the condition in~\eqref{slope_assumption1} is satisfied,  the magnitude of the slope of $L_j(\talpha)$, namely
\[
\mu(j)= \frac{j(d-k)+(j^2+\Psi_{j,r})/2}{jr-\Psi_{j,r}},
\]
is larger than or equal to $d/(r-1)$. On the other hand, the magnitude of the slope of $L_j'(\talpha)$,
\[
 \frac{d-k+(j+1)/2}{r-1}
\]
is strictly less than $d/(r-1)$. By Lemma~\ref{lemma_dual},  $P_{j}(\talpha)$ is the optimal solution to the linear program in~\eqref{eq:LP_objective}. Thus
$$\gamma_{\mathrm{LP}}^*(\talpha)=(2d+r-1) g_j(\talpha) = (2d+r-1)\frac{1-(k-j)\talpha}{j(2d-2k+r+j)}$$
for $\talpha_{j} \leq \talpha < \talpha_{j+1}$.

For the second part of the proposition, $P_k(\talpha_k)$ is a feasible solution for $\talpha \geq \talpha_{k}$. From part (\ref{lemma:itemF}) of Lemma~\ref{lemma:P_j}, we know that $\mu_k > d/(r-1)$. On the other hand, the slope of $L_k'(\talpha)$ is strictly less than $d/(r-1)$. By Lemma~\ref{lemma_dual}, $P_k(\talpha_k)$ is the optimal solution of the linear program. Therefore
$$\gamma_{\mathrm{LP}}^*(\talpha) = g_k(\talpha_k) = \frac{2d+r-1}{k(2d+r-k)} = \tgamma_{\mathrm{MBCR}}$$
 for $\talpha \geq \alpha_{k}$.
\end{proof}

We  now cover the remaining cases which are not covered by Prop.~\ref{lemmaD}. Suppose that there is an integer $i$ between 1 and $k$ such that
$\mu(i) < \frac{d}{r-1}$.
Let $\ell$ be the quotient when $i$ is divided by $r$.

We claim that $\mu(j)$ is a concave function of $j$ for $j$ between $1+\ell r$ and  $(r-1)+\ell r$.
Consider integer $j$  in the form $j=\ell r +R$, for $0\leq R < r$.  Then, $\Psi_{j,r} = \ell r^2+R^2$.
\begin{align*}
& \phantom{=} \mu(\ell r +R) \\
& = \frac{(R+\ell r)(d-k)+((R+\ell r)^2+\ell r^2+R^2)/2}{(R+\ell r)r-\ell r^2-R^2} \\
&= \frac{2R^2+2R(\ell r+(d-k))+\ell r(\ell r+r+2(d-k))}{2R(r-R)} \\
&= \frac{R+r+\ell r+(d-k)}{r-R}+\frac{\ell r(\ell r+r+2(d-k))}{2R(r-R)}\\
&= -1+\frac{r+\ell r+(d-k)}{r-R}+\frac{\ell r(\ell r+r+2(d-k))}{2R(r-R)}.
\end{align*}
Each term in the above line is a concave function of $R$. Therefore the sum of them is also concave. This completes the proof of the claim.

By the above claim, we can find an index $j_0$ such that
\begin{equation}
   \mu(j_0)   = \min_{\ell r < j < (\ell +1)r}  \mu(j).
 \label{slope_assumption2}
\end{equation}

%Let $b$ be the remainder when $j_0$ is divided by $r$, so that $j_0=\ell r+b$, and $0\leq b < r$.

\begin{proposition} Suppose that $\mu(i) < d/(r-1)$ for some $i$, and let $j_0$ be defined as in~\eqref{slope_assumption2}. Let $i_2$ be the smallest integer larger than or equal to $j_0$  such that $\mu(i_2) < d/(r-1)$, and let $i_1$ be the largest integer smaller than or equal to $j_0$ such that $\mu(i_1) < d/(r-1)$.

\begin{enumerate}
\item  The integers $i_1$ and $i_2$ are well-defined, and they satisfy $\ell r< i_1 \leq j_0\leq i_2 < \min\{ k,r (\ell +1)\}$.
  \label{CaseA}

\item \label{CaseC}
\[
\gamma_{\mathrm{LP}}^*(\tilde{\alpha}_\ell')=\frac{1}{D_\ell'} (d+r-1).
\]
In particular, we have
\[
 \gamma_{\mathrm{LP}}^*(1/k) =  \gamma_{\mathrm{LP}}^*(\talpha_0') =  \frac{d+r-1}{k(d+r-k)}.
\]

\item \label{CaseD} Let \begin{align} c_1(j) &:= j(d-k)+(j^2+\Psi_{j,r})/2 ,\\
c_2(j) &:= jr-\Psi_{j,r},
 \end{align}
and
$\mathbf{A}$ be the matrix
\[
\mathbf{A} := \begin{bmatrix}
c_1(j_1+1) & c_2(j_1+1) \\
c_1(j_1) & c_2(j_1)
\end{bmatrix}.
\]
For $\talpha$ between $\tilde{\alpha}_\ell'$ and $\tilde{\alpha}_{j_1+1}$, we have
\[
 \gamma_{\mathrm{LP}}^*(\talpha) = \begin{bmatrix}d & r-1 \end{bmatrix}
 \mathbf{A}^{-1} \begin{bmatrix}
 1-(k-j_1-1)\talpha) \\1-(k-j_1)\talpha
 \end{bmatrix}.
\]

\item \label{CaseE}
For $\alpha$ between $\tilde{\alpha}_{j_2}$ and $\tilde{\alpha}_\ell'$, we have
\[
 \gamma_{\mathrm{LP}}^*(\alpha) = \begin{bmatrix}d & r-1 \end{bmatrix}
 \mathbf{B}^{-1} \begin{bmatrix}
 1-(k-j_2+1)\talpha) \\ 1-(k-j_2)\talpha
 \end{bmatrix},
\]
where
\[
\mathbf{B} := \begin{bmatrix}
c_1(j_2-1) & c_2(j_2-1) \\
c_1(j_2) & c_2(j_2)
\end{bmatrix}.
\]
\end{enumerate}
\label{lemmaE}
\end{proposition}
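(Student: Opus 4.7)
The plan is to dispatch the four parts sequentially, each reducing to the slope-sandwiching criterion of Lemma \ref{lemma_dual} together with the structural facts about $\mu$ already assembled in Lemma \ref{lemma:P_j} and Proposition \ref{lemmaC}. For (1), the hypothesis $\mu(i) < d/(r-1)$ forces $i$ to lie strictly between two consecutive multiples of $r$, because part (\ref{lemma:itemC}) of Lemma \ref{lemma:P_j} gives $\mu(j) = \infty$ whenever $r \mid j$, and $i \leq k-1$ follows from $\mu(k) > d/(r-1)$ in part (\ref{lemma:itemF}). Writing $\ell = \lfloor i/r \rfloor$, the minimizer $j_0$ over $\{\ell r+1,\ldots,(\ell+1)r-1\}$ satisfies $\mu(j_0) \leq \mu(i) < d/(r-1)$, so $j_0$ itself qualifies as both an $i_1$ and an $i_2$ candidate, giving existence with $i_1 \leq j_0 \leq i_2$. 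The boundary exclusions follow: $\mu = \infty$ at multiples of $r$ forbids $i_1 = \ell r$ and $i_2 = (\ell+1)r$ (when $(\ell+1)r \leq k$), while $\mu(k) > d/(r-1)$ forbids $i_2 = k$ in the complementary case.

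For (2), I would apply Lemma \ref{lemma_dual} at $Q_\ell = (1/D_\ell', 1/D_\ell')$. By Lemma \ref{lemma1} together with part (2) of Proposition \ref{prop:feasible}, $Q_\ell$ is feasible at $\talpha = \talpha_\ell'$ and saturates every $L_j(\talpha_\ell')$ for $j \in \{\ell r,\ldots,\min\{k,(\ell+1)r\}\}$. At least one boundary index ($\ell r$ when $\ell \geq 1$, otherwise $(\ell+1)r$ when $(\ell+1)r \leq k$, otherwise $k$) yields a tight constraint whose slope has magnitude either infinite or exceeding $d/(r-1)$, while $L_{j_0}(\talpha_\ell')$ is tight with slope magnitude $\mu(j_0) < d/(r-1)$ by part (1). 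The objective's slope magnitude $d/(r-1)$ is therefore sandwiched between two active constraints, so Lemma \ref{lemma_dual} certifies optimality. Evaluating $d\tbeta_1 + (r-1)\tbeta_2$ at $Q_\ell$ gives $(d+r-1)/D_\ell'$, and the $\ell = 0$ specialization recovers the stated MSCR expression.

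For (3) and (4), I would show that on $[\talpha_\ell', \talpha_{i_1+1}]$ the LP optimum is the intersection of $L_{i_1}(\talpha)$ with $L_{i_1+1}(\talpha)$, and symmetrically on $[\talpha_{i_2}, \talpha_\ell']$ the intersection of $L_{i_2-1}(\talpha)$ with $L_{i_2}(\talpha)$. By the maximality of $i_1$ and minimality of $i_2$, one has $\mu(i_1+1), \mu(i_2-1) \geq d/(r-1)$ and $\mu(i_1), \mu(i_2) < d/(r-1)$, so in each case the two chosen active lines have slope magnitudes straddling the objective's $d/(r-1)$, and solving the $2 \times 2$ linear system yields precisely the matrix expressions involving $\mathbf{A}^{-1}$ and $\mathbf{B}^{-1}$. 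The main obstacle I anticipate is the feasibility check on the \emph{interior} of each interval: as $\talpha$ varies, the intersection point must continue to satisfy every non-active $L_j(\talpha)$ and $L_j'(\talpha)$. I would use concavity of $\mu$ on $\{\ell r+1,\ldots,(\ell+1)r-1\}$ to dominate the interior $L_j$'s, part (\ref{lemma:itemB}) of Lemma \ref{lemma:P_j} to dispose of the $L_j'$'s (whose slopes are uniformly below $d/(r-1)$ in magnitude), and parts (4)--(5) of Proposition \ref{lemmaC} for the ordering of the $P_j(\talpha)$ along $\tbeta_1 = 2\tbeta_2$. The endpoint behavior glues cleanly: at $\talpha = \talpha_\ell'$ the two active lines meet at $Q_\ell$ by Lemma \ref{lemma1}, matching part (2); at $\talpha = \talpha_{i_1+1}$ they meet at $P_{i_1+1}(\talpha_{i_1+1}) = P_{i_1}(\talpha_{i_1+1})$, handing off into the regime covered by Proposition \ref{lemmaD}.
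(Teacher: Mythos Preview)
Your proposal is correct and follows essentially the same route as the paper: parts (1) and (2) match the paper's argument almost verbatim, and for parts (3) and (4) you correctly identify the optimal point as the intersection of $L_{i_1}$ with $L_{i_1+1}$ (respectively $L_{i_2-1}$ with $L_{i_2}$) and certify optimality via the slope-sandwich of Lemma~\ref{lemma_dual}.

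The one place the paper is slicker is the feasibility check on the interior of the $\talpha$-interval in parts (3) and (4). Rather than your proposed direct constraint-by-constraint verification via concavity of $\mu$ and the $P_j$ ordering, the paper observes that the intersection point $P_{opt}(\talpha)$ of the two chosen lines varies affinely in $\talpha$ (it solves a $2\times 2$ linear system whose right-hand side is affine in $\talpha$), equals the already-known-feasible $Q_\ell$ at $\talpha=\talpha_\ell'$ and the already-known-feasible $P_{i_1+1}(\talpha_{i_1+1})$ at the other endpoint, and then invokes the parametric convexity of the feasible region established in the remark following \eqref{eq:C_LP} to conclude feasibility on the whole interval in one stroke. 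Your direct route should work but is more laborious and, as you yourself flag, is where the real effort lies; the convexity shortcut sidesteps that effort entirely and is worth noting.
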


\begin{proof}
\ref{CaseA})
The first part of the lemma follows from the property that $\mu(j) = \infty$ if $j$ is an integral multiple of $r$ (Lemma~\ref{lemma:P_j} part~\ref{lemma:itemC}.)

\ref{CaseC})
We want to show that
$Q_\ell = (1/D_\ell', 1/D_\ell')$
is the optimal solution to the linear program when $\talpha = \talpha_\ell'$. We have shown in Prop.~\ref{prop:feasible} that  $Q_\ell$ is a feasible solution.
Because the slope of $L_{\ell r}(\tilde{\alpha}_{\ell}')$ is infinite and the magnitude of the slope of $L_{j_0}(\tilde{\alpha}_{\ell}')$ is less than $d/(r-1)$, by Lemma~\ref{lemma_dual}, $Q_\ell$ is the optimal to the linear program.

From part 4 of Lemma~\ref{lemma:P_j}, we have $\mu(1) < d/(r-1)$. Therefore, $Q_0'$ is the optimal solution to the linear program when $\talpha = \talpha_0' = 1/k$, and we get
\[
\gamma_{\mathrm{LP}}^*(1/k) = \gamma_{\mathrm{LP}}^*(\tilde{\alpha}_0')= \frac{1}{D_0'}(d+r-1) =  \frac{d+r-1}{k(d+r-k)}.
\]

\ref{CaseD}) Consider $\talpha$ which is within the range $\tilde{\alpha}_\ell' < \talpha < \tilde{\alpha}_{i_2}$.
Let $P_{opt}(\talpha) = (\tbeta_{1,opt}(\talpha), \tbeta_{2,opt}(\talpha))$ be the intersection point of $L_{j_1}(\talpha)$ and $L_{j_1+1}(\talpha)$, i.e.,
\[
 \begin{bmatrix}
 \tbeta_{1,opt} \\ \tbeta_{2,opt}
 \end{bmatrix} =
 \mathbf{A}^{-1}
 \begin{bmatrix}
 1-(k-i_2-1)\talpha \\ 1-(k-i_2)\talpha
 \end{bmatrix}.
\]
We have
$$ (\tbeta_{1,opt}(\talpha_\ell'), \tbeta_{2,opt}(\talpha_\ell'))
= Q_\ell, \text{ and }
$$
$$ (\tbeta_{1,opt}(\talpha_{i_2+1}), \tbeta_{2,opt}(\talpha_{i_2+1}))
= P_{i_2+1}(\talpha_{i_2+1}).
$$
For $\talpha$ between $\talpha_\ell'$ to $\talpha_{i_2+1}$, the point $(\tbeta_{1,opt}(\talpha), \tbeta_{2,opt}(\talpha))$  is a convex combination of $Q_\ell$ to $P_{i_2+1}(\talpha_{i_2+1})$. Therefore,
$(\tbeta_{1,opt}(\talpha), \tbeta_{2,opt}(\talpha))$ is a feasible solution to the linear program with the corresponding parameter~$\talpha$. (See the remark after \eqref{eq:C_LP}.) The slope of $L_{j_1+1}(\talpha)$ has magnitude  larger than or equal to $d/(r-1)$, while  the slope of $L_{j_1}(\talpha)$ has magnitude strictly less than~$d/(r-1)$. Thus by Lemma~\ref{lemma_dual}, $P_{opt}(\talpha)$ is the optimal solution to the linear program.

\ref{CaseE}) The proof is analogous to the previous part and is omitted.
\end{proof}

For $j=1,2,\ldots, k$, let
\[
\xi_j = \begin{cases}
\frac{1}{D_j} (d-k+j+(r-1)/2) & \text{if } \mu(j) \geq d/(r-1)\\
\frac{1}{D_{\lfloor j/r \rfloor}'}( d-k+r(\lfloor j/r\rfloor+1)) & \text{if } \mu(j) < d/(r-1),
\end{cases}
\]
be the $\talpha$-coordinates of the operating points in Theorem~\ref{thm:cornerpoint}.
Divide the interval $[\talpha_{\mathrm{MSCR}},\talpha_{\mathrm{MBCR}})$ into subintervals
$$[\xi_1,\xi_2), \ [\xi_2,\xi_3),\ldots, [\xi_{k-1}, \xi_k),$$
with $\xi_1 = \talpha_{\mathrm{MSCR}}$ and $\xi_k = \talpha_{\mathrm{MBCR}}$.
From Lemma~\ref{lemmaD} and Lemma~\ref{lemmaE}, the function $\gamma_{\mathrm{LP}}^*(\talpha)$ is an affine function of $\talpha$ in each subinterval. Consequently, the corner points of the graph $$\{(\gamma_{\mathrm{LP}}^*(\talpha), \talpha):\, \talpha_{\mathrm{MSCR}} \leq \talpha < \infty \}$$
are precisely the operating points defined in Theorem~\ref{thm:cornerpoint}.

\section{Proof of Lemma~\ref{lemma:MBCR}}
\label{app:lemmaMBCR}
We first show that it is sufficient to verify that the condition~\eqref{eq:AAA} in Lemma~\ref{lemma:MBCR} holds for subsets $\mathcal{S}$ in the form of
\begin{equation}
\mathcal{S}= \Big(\bigcup_{i\in\mathcal{A}} \{\In_{i}, \Mid_{i}, \Out_i\}\Big) \cup \Big(\bigcup_{j\in\mathcal{B}} \{v_{j}, \Out_j\}\Big),
\label{eq:S}
\end{equation}
where $\mathcal{A}$ is a subset of $\{1,2,\ldots, r\}$ and $\mathcal{B}$ is a subset of $\{r+1,r+2,\ldots, n\}$.

An example of a subset $\mathcal{S}$ in the form of~\eqref{eq:S} is illustrated in Fig.~\ref{fig:auxiliary}. For notational convenience, we let
$$
\kappa(\mathcal{S}) := \lb(\Delta^+ \mathcal{S}) - \ub(\Delta^- \mathcal{S}),
$$
for subset $\mathcal{S}$ of the vertices in the auxiliary graph.

(a) Suppose for some $j\in\{r+1,r+2,\ldots, n\}$, $\mathcal{S}$ contains $v_j$ but does not contain $\Out_j$. Then the directed edge $e = (v_j, \Out_j)$ is in $\Delta^+ \mathcal{S}$, and makes a contribution of $h_j$ to the term $\lb(\Delta^+ \mathcal{S})$. But the inequality
 $\kappa(\mathcal{S}) \leq \sigma(\mathcal{S})$
holds if and only if
\[
\lb(\Delta^+ \mathcal{S}) -   \ub(\Delta^-\mathcal{S}) -h_j \leq \sigma(\mathcal{S}) - h_j,
\]
which is equivalent to $ \kappa(\mathcal{S}\cup \{\Out_j\})  \leq \sigma(\mathcal{S}\cup \{\Out_j\})$.

An analogous argument shows that if $\mathcal{S}$ contains $\Out_j$ but does not contains $v_j$ for some $j\in\{r+1,r+2,\ldots, n\}$, then the validity of $\kappa(\mathcal{S})  \leq \sigma(\mathcal{S})$ is equivalent to
\begin{align*}
\lb(\Delta^+ \mathcal{S}) - \ub(\Delta^- \mathcal{S}) +h_j &\leq \sigma(\mathcal{S}) + h_j\\
\Leftrightarrow \kappa(\mathcal{S}\setminus \{\Out_j\}) & \leq \sigma(\mathcal{S}\setminus \{\Out_j\}).
\end{align*}
Hence it is sufficient to consider subset $\mathcal{S}$ which either contains both $v_j$ and $\Out_j$, or  none of them.

(b) For each $i=1,2,\ldots, r$, we distinguish eight cases as shown in the following table.
\smallskip

\centerline{
\begin{tabular}{|c||c|c|c|} \hline
Case & $\In_i\in \mathcal{S}$? & $\Mid_i\in\mathcal{S}$? & $\Out_i\in\mathcal{S}$? \\ \hline
1 & no & no& no \\ \hline
2 & no & no& yes \\ \hline
3 & no & yes& no \\ \hline
4 & no & yes& yes \\ \hline
5 & yes & no& no \\ \hline
6 & yes & no& yes \\ \hline
7 & yes & yes& no \\ \hline
8 & yes & yes& yes \\ \hline
\end{tabular}
}
\smallskip

We want to show that case 2 to case 7 are dominated by case 1 and 8, so that we only need to consider case 1 and~8.

Suppose that $\mathcal{S}$ contains $\Mid_i$. Since the link from $\In_i$ to $\Mid_i$ has infinite upper bound, the left-hand side of \eqref{eq:AAA} is equal to $-\infty$ if $\In_i$ is not included in $\mathcal{S}$. Then the inequality in \eqref{eq:AAA} holds trivially. We can assume without loss of generality that $\In_i\in\mathcal{S}$ if $\Mid_i\in\mathcal{S}$. This eliminates case 3 and case 4 in the above table.

Suppose that $\mathcal{S}$ contains $\Mid_i$ and $\In_i$ but not  $\Out_i$ (case 7), the inequality
 $\kappa(\mathcal{S})  \leq \sigma(\mathcal{S})$ is implied by
\[
 \kappa(\mathcal{S}\cup \{\Out_i\})  \leq \sigma(\mathcal{S}\cup \{\Out_i\}).
\]
Indeed, if we assume that the above inequality holds, then
\begin{align*}
\kappa(\mathcal{S})  = \kappa(\mathcal{S}\cup \{\Out_i\})
&\leq \sigma(\mathcal{S}\cup \{\Out_i\}) \\
&= \sigma(\mathcal{S})-h_i \leq \sigma(\mathcal{S}).
\end{align*}
Thus case 7 is implied by case 8.

Consider case 2, where $\mathcal{S}$ contains $\Out_i$ but does not contains $\In_i$ and $\Mid_i$. In this case, the inequality $\kappa(\mathcal{S})  \leq \sigma(\mathcal{S})$ is implied by the following two inequalities
\begin{align}
\kappa(\mathcal{S}\setminus\{\Out_i\}) & \leq \sigma(\mathcal{S} \setminus\{\Out_i\}), \label{eq:first_ineq} \\
\kappa(\{\Out_i\}) &\leq \sigma(\{\Out_i\}). \label{eq:second_ineq}
\end{align}
The  inequality in \eqref{eq:second_ineq} is simply equivalent to $-\alpha \leq -h_i$, which holds by the assumption on $\mathbf{h}$.
If we add  \eqref{eq:first_ineq}  and \eqref{eq:second_ineq}, we will get  $\kappa(\mathcal{S})  \leq \sigma(\mathcal{S})$. Thus case 2 can be eliminated.

Consider case 5. Suppose that  $\mathcal{S}$ contains $\In_i$ but  does not contain $\Mid_i$ and $\Out_i$. In this case the inequality $\kappa(\mathcal{S})  \leq \sigma(\mathcal{S})$
is implied by
\begin{align*}
\kappa(\mathcal{S}\setminus\{\In_i\})  \leq \sigma(\mathcal{S}\setminus\{\In_i\}).
\end{align*}
If we assume that the above inequality holds, then
\[
 \kappa(\mathcal{S})=
\kappa(\mathcal{S}\setminus\{\In_i\})
\leq \sigma(\mathcal{S}\setminus\{\In_i\})
= \sigma(\mathcal{S}).
\]

Finally, case 6 can be taken care of by combining the argument as in case 2 and case 5. This completes the proof of the claim.

Now, we prove that the inequality in~\eqref{eq:AAA} is valid for a subset $\mathcal{S}$ in the form of~\eqref{eq:S}. We let the cardinality of $\mathcal{A}$ and $\mathcal{B}$ be $a$ and $b$, respectively. Obviously we have $a\leq r$ and $b\leq n-r$.

In the following we will use $(x)^+$ as a short-hand notation for $\max(0,x)$.

Because $\lb(\Delta^+{\mathcal{S}})=0$, we have
\begin{align*}
\kappa(\mathcal{S})
&= -\ub(\Delta^-{\mathcal{S}}) \\
&\leq -a((d-b)^+\beta_1 +(r-a)\beta_2) \\
&= -a(2(d-b)^+ +(r-a)).
\end{align*}
It suffices to show that
\[
 -a(2(d-b)^+ +(r-a)) \leq  \sigma(\mathcal{S}).
\]
Since $ \sigma(\mathcal{S}) = \theta_b -\mathbf{h}(\mathcal{A})- \mathbf{h}(\mathcal{B})$ and
$\mathbf{h}(\mathcal{A})+ \mathbf{h}(\mathcal{B})\leq\theta_{a+b}$ by hypothesis, it is sufficient to prove
\[
-a(2(d-b)^+ +r-a) \leq \theta_{b} -\theta_{a+b}
\]
or equivalently
\begin{equation}
 \theta_{a+b} -\theta_{b} \leq  a(2(d-b)^+ +r-a) .  \label{eq:claimA}
\end{equation}

We prove the asserted inequality in~\eqref{eq:claimA} by distinguishing three cases.

{\em Case A, $a+b\leq k$:}
We first note that for $j\leq k$, we have
\begin{align*}
 \theta_j - \theta_{j-1}
 &= \begin{cases}
  \alpha & \text{if }0 < j \leq z \\
  \alpha - 2(j-z-1) & \text{if } z < j \leq k
 \end{cases}\\
\intertext{or equivalently}
 \theta_j - \theta_{j-1}  &= \alpha - 2(j-z-1)^+ \qquad \text{for }0<j\leq k.
\end{align*}
Recall that $\alpha = 2(d-z)+r-1$.
For $0< j \leq k$, we have the following upper bound
\begin{align*}
  \theta_j - \theta_{j-1} &= 2(d-z)+r-1-2(j-z-1)^+ \\
  & \leq 2(d-z)+r-1-2(j-z-1) \\
  &= 2(d-j)+r+1.
\end{align*}

Summing the above inequality for $j$ from $b+1$ to $a+b$, we obtain
\begin{align*}
 \theta_{a+b} -\theta_{b} = \sum_{j=b+1}^{a+b} \theta_j - \theta_{j-1}
&\leq \sum_{j=b+1}^{a+b} (2(d-j)+r+1) \\
 &= a(2(d-b)+r-a) \\
 &= a(2(d-b)^+ +r-a).
\end{align*}
We have use the assumptions that $d\geq k$ and $k\geq b$ for the last equality.

{\em Case B, $b \leq k < a+b$:} Since
$\theta_k = \theta_{k+1} = \cdots = \theta_{a+b}$
in this case, we have
\begin{align*}
\theta_{a+b} - \theta_{b} &= \theta_{k} - \theta_b
\leq (k-b)(2(d-b) +r-(k-b)).
\end{align*}
The inequality follows from the previous case.
We observe that the quadratic function
$$f(x)= x(2(d-b)+r-x)$$ is a concave with zeroes $x=0$ and $x=2(d-b)+r$. Thus we have $f(x) \geq f(k-b)$ for all $x$ between $k-b$ and $2d-b+r-k$. We check that $a\leq k-b$, because $k\leq a+b$ by the hypothesis in case B, and $a \leq 2d-b+r-k$ because $a\leq r$ and $d\geq k \geq b$. Therefore
\begin{align*}
\theta_{a+b} - \theta_{b} & \leq f(a) = a(2(d-b) +r-a)  \\
& = a(2(d-b)^+ +r-a).
\end{align*}

{\em Case C, $k< b$:} \eqref{eq:claimA} holds because $\theta_{a+b}-\theta_b = 0$ on the left-hand side, while the right-hand side is non-negative.

This completes the verification that condition~\eqref{eq:AAA} in Lemma~\ref{lemma:MBCR} holds.

\section{Proof of Theorem~\ref{thm:MSCR}}
\label{app:MSCR}

We give a sketch of proof of Theorem~\ref{thm:MSCR}, which is  along the same line as in the proof of Theorem~\ref{thm:MBCR}.

We draw the same auxiliary graph as in Fig.~\ref{fig:auxiliary}, except that $\beta_1$ is equal to 1, $\alpha$ is equal to $d+r(\ell+1)-k$, and $\mathbf{h}$ is a vector majorized by the vector $\mathbf{q}_\ell$ in~\eqref{eq:AP2}.
We define a submodular function
$$\rho(\mathcal{S}) := g(\mathcal{S}\cap \mathcal{O}_{s-1}) - \mathbf{h}(\mathcal{S}\cap \mathcal{O}_s)$$  on the vertex set of the auxiliary graph.

As in Lemma~\ref{lemma:MBCR}, we want to show that the inequality $\lb(\Delta^+ \mathcal{S}) - \ub(\Delta^- \mathcal{S}) \leq \rho(\mathcal{S})$ holds for all subsets $\mathcal{S}$ which is in the form of~\eqref{eq:S}.

Analogous to~\eqref{eq:claimA}, we need to show
\begin{equation}
\varphi_{a+b} - \varphi_{b} \leq a( (d-b)^+ +r-a)
\label{eq:varphi}
\end{equation}
for $0\leq a\leq r$ and $0\leq b\leq d$.

We distinguish three cases:  $a+b\leq k$, $b\leq k< a+b$ and $k < b$. We only consider the first case $a+b\leq k$. The proof for the second and third case is omitted.

Note that the difference $\varphi_{a+b} - \varphi_b$, i.e., the $x$-th component in~\eqref{eq:varphi} can be written as \begin{equation}
 \alpha - \Big\lceil \frac{x-k+\ell r}{r} \Big\rceil r.
 \label{eq:phi_difference}
\end{equation}
We need to take the sum of \eqref{eq:phi_difference} for $b < x \leq b+a$. Note that the value of~\eqref{eq:phi_difference} is constant for $r$ consecutive values of~$x$. Since $a$ is not larger than $r$, $\lceil (x-k+\ell r)/r \rceil$ assumes at most two values for $b < x \leq  b+a$. We further divide into two subcases.

{\em First subcase:} $\lceil (x-k+\ell r)/r \rceil$ is constant for $b < x \leq b+a$. For $x$ in this range, we have
$$\lceil (x-k+\ell r)/r \rceil = \lceil (a+b-k+\ell r)/r \rceil.$$
Hence,
\begin{align*}
&\phantom{=} a((d-b)^+ +r-a)-(\varphi_{a+b}-\varphi_b ) \\
%&=a(d-b+r-a) - \sum_{x=b+1}^{a+b} \Big(\alpha - \Big\lceil \frac{x-k+\ell r}{r} \Big\rceil r \Big)\\
&=a((d-b)^+ +r-a) - \sum_{x=b+1}^{a+b} \Big(\alpha - \Big\lceil \frac{a+b-k+\ell r}{r} \Big\rceil  r\Big)\\
&\geq a \Big( d-b+r-a - \alpha +  (a+b-k+\ell r) \Big) \\
& = a(d+(\ell+1)r-k - \alpha) = 0.
\end{align*}

{\em Second subcase:} $\lceil (x-k+\ell r)/r \rceil$ is not  constant for $$b < x \leq b+a.$$
Suppose that
$a+b > k-\ell r + \xi r$
 and
 $b \leq k-\ell r +\xi r$
for some integer $\xi$.
We have
\[\Big\lceil \frac{x-k+\ell r}{r}\Big\rceil  =
\begin{cases}
\xi  & \text{ for } b < x\leq k-\ell r + \xi r \\
\xi+1 & \text{ for } k-\ell r + \xi r < x\leq a+b.
\end{cases}
\]
For the ease of presentation, we use $\delta$ to stand for $a+b-(k-\ell r + \xi r)$, and let $Y$ be $d-b+r-a$. The value of
$\delta$ is positive.
With these notations, we get
\begin{align*}
 & \phantom{=} a((d-b)^+ +r-a)- (\varphi_{a+b} - \varphi_b) \\
 &\geq  aY- (\varphi_{a+b} - \varphi_b) \\
&= aY - (a-\delta)(\alpha - \xi r)  - \delta (\alpha - (\xi+1) r)  \\
&= (a-\delta)(Y-\alpha+\xi r) + \delta (Y-\alpha+(\xi+1)r).
\end{align*}
Since
\[
Y - \alpha +\xi r = d-b+r-a-d-r\ell-r+k+\xi r = - \delta,
\]
we get
\begin{align*}
 & \phantom{=} a((d-b)^+ +r-a)- (\varphi_{a+b} - \varphi_b) \\
& \geq -(a-\delta)\delta + \delta(r-\delta) \\
& = \delta(r-a) \geq 0.
\end{align*}
This proves \eqref{eq:varphi} for $a+b\leq k$.

The proof proceeds by applying Frank's theorem repeatedly, thereby iteratively constructing a flow on the modified information flow graph.

%\bibliographystyle{IEEEtran}

%\bibliography{DStorage}

% Generated by IEEEtran.bst, version: 1.13 (2008/09/30)

%\begin{biographynophoto}
%{Kenneth W. Shum} (M'00) received the B.Eng. degree in information engineering from the Chinese University of Hong Kong in 1993, and the M.S. and Ph.D. degrees in electrical engineering from the University of Southern California in 1995 and 2000, respectively. He is now a research associate professor with the Institute of Network Coding, the Chinese University of Hong Kong. His research interests include information theory and coding for distributed storage systems.
%\end{biographynophoto}
%
%\begin{biographynophoto}
%{Yuchong Hu} received the B.S. degree in Computer Science and              Technology from the School for the Gifted Young, University of Science
%\& Technology of China, Anhui, China, in 2005. He received the Ph.D. degree in  Computer Science and Technology from the School of Computer              Science,  University of Science \& Technology of China, in 2010. He was a postdoctoral fellow at the Institute of Network Coding, the Chinese             University of Hong Kong. His research interests include network coding             and distributed storage.
%\end{biographynophoto}
%
\end{document}